\documentclass[letterpaper]{article} 
\usepackage[draft]{AAAI_2027/aaai2027}  
\usepackage[hyphens]{url}  
\usepackage{graphicx} 
\usepackage{natbib}  
\usepackage{caption} 
\usepackage{booktabs}

\usepackage{amsmath}
\usepackage{amsthm}
\usepackage[textsize=scriptsize,textwidth=2cm,color=green!20]{todonotes}
\usepackage{dsfont}
\usepackage{subcaption}
\usepackage{multicol}
\usepackage{cleveref}
\usepackage{adjustbox}
\usepackage{comment}
\usepackage{bbding}
\usepackage{units}
\usepackage{centernot}
\usepackage{multirow}
 \usepackage{subcaption}
 
\usepackage[ruled]{algorithm2e} 

\SetAlFnt{\small}
\SetAlCapFnt{\small}
\SetAlCapNameFnt{\small}
\SetAlCapHSkip{0pt}
\IncMargin{-\parindent}

\makeatletter
\newtheorem*{rep@theorem}{\rep@title}
\newcommand{\newreptheorem}[2]{%
\newenvironment{rep#1}[1]{%
 \def\rep@title{#2 \ref{##1}}%
 \begin{rep@theorem}}%
 {\end{rep@theorem}}}
\makeatother

\newtheorem{theorem}{Theorem}
\newreptheorem{theorem}{Theorem}
\newtheorem{definition}[theorem]{Definition}
\newtheorem{proposition}[theorem]{Proposition}
\newreptheorem{proposition}{Proposition}
\newtheorem{observation}[theorem]{Observation}
\newreptheorem{observation}{Observation}

\newreptheorem{lemma}{Lemma}

\newtheorem{corollary}[theorem]{Corollary}
\newreptheorem{corollary}{Corollary}

\RequirePackage{tikz-cd}
\RequirePackage{amssymb}
\usetikzlibrary{calc}
\usetikzlibrary{decorations.pathmorphing}
\usetikzlibrary{spath3}

\tikzset{curve/.style={settings={#1},to path={(\tikztostart)
    .. controls ($(\tikztostart)!\pv{pos}!(\tikztotarget)!\pv{height}!270:(\tikztotarget)$)
    and ($(\tikztostart)!1-\pv{pos}!(\tikztotarget)!\pv{height}!270:(\tikztotarget)$)
    .. (\tikztotarget)\tikztonodes}},
    settings/.code={\tikzset{quiver/.cd,#1}
        \def\pv##1{\pgfkeysvalueof{/tikz/quiver/##1}}},
    quiver/.cd,pos/.initial=0.35,height/.initial=0}

\tikzset{between/.style n args={2}{/tikz/execute at end to={
    \tikzset{spath/split at keep middle={current}{#1}{#2}}
}}}

\tikzset{tail reversed/.code={\pgfsetarrowsstart{tikzcd to}}}
\tikzset{2tail/.code={\pgfsetarrowsstart{Implies[reversed]}}}
\tikzset{2tail reversed/.code={\pgfsetarrowsstart{Implies}}}
\tikzset{no body/.style={/tikz/dash pattern=on 0 off 1mm}}

\newcommand{\davide}[1]{\textcolor{teal}{[Davide: {#1}]}}

\newcommand{\e}{\text{e}}

\newcommand{\N}{\mathbb{N}}
\newcommand{\gr}{\textsc{gr}}
\newcommand{\opt}{\textsc{opt}}
\newcommand*{\argmax}{\text{argmax}}

\newcommand{\SSJR}{\text{SSJR}}
\newcommand{\SJR}{\text{SJR}}
\newcommand{\NASH}{\text{NASH}}

\title{Diverse Representation in Approval-Based Committee Voting}
\author{
Julian Chingoma\textsuperscript{\rm 1},
Davide Grossi\textsuperscript{\rm 1},
Feline Lindeboom\textsuperscript{\rm 1}\corresponding,
Jan Maly\textsuperscript{\rm 2}
}
\affiliations{
    \textsuperscript{\rm 1}Rijksuniversiteit Groningen\\
    \textsuperscript{\rm 2}WU Wien\\

}

\begin{document}
\maketitle

\begin{abstract}
The study of approval-based committee (ABC) voting has so far focused predominantly on \emph{proportional} representation. The canonical notion of \emph{diverse} representation, based on the Chamberlin--Courant score, counts the number of \emph{voters} with at least one representative in the committee, making it an individualistic notion. We develop a more comprehensive theory that instead requires the representation of many \emph{groups} of voters, grounding it in the justified representation (JR) axiom, which we strengthen in two directions. First, we study the existing axioms of Strong JR (SJR) and Semi-Strong JR (SSJR), which consider the same cohesive groups as JR but demand stricter representation. We show that neither can be optimized efficiently on general domains (unless P=NP), but both can be on the Candidate Interval domain. Second, to capture the unique and defining opinions of a group, we introduce Distinctive Representation (DR) and its local optimization variant, Local DR. We address satisfiability and computation time, and show (Local) DR to be distinct from known proportionality and diversity axioms. Experiments on real-world and synthetic data show Local DR performs well on multiple empirical measures of diversity.
\end{abstract}




\section{Introduction}
Selecting a set of comments, views or alternatives that represent a large population of voters or users is a key challenge in the pursuit of making social media, AI alignment or online deliberation fairer and more democratic. How to interpret the term `representation' is the first major design decision one has to make in this endeavour. One option is to aim for \emph{proportional} representation, a direction that the study of approval-based committee voting, the most well-studied form of multi-winner voting \cite{faliszewski2017multiwinner}, has focused on nearly exclusively
\cite{lackner_multi-winner_2023}. 
However, in many applications such as the ones mentioned above, one can argue  that it is more important to represent the {\em diversity} of opinions held by users. First, diverse outcomes guarantee higher inclusivity, an attribute one could desire in and of itself. Second, an inclusive outcome can show more participants that their input is being taken into account. This increases people's trust in the system at hand \cite{mikhaylovskaya2024enhancing}, which motivates to participate also in future instances  \cite{eder_political_2015}. Third, in the context of deliberation summarization, a diverse summary gives a broad, informative view of `the group's' opinion, which can be especially valuable when the number of input opinions is very large. Fourth, cognitive diversity is argued, e.g., by \citet{landemore_democratic_2017}, to also be of epistemic value from a crowd-wisdom perspective. 

The canonical notion of diversity in approval-based committee voting is based on the Chamberlin-Courant (CC) score \cite{chamberlin_representative_1983}. The CC score measures diversity by counting the number of voters that have some form of representation, making it an individualistic notion. In the real world, people are part of communities, so we desire a diversity notion which takes that into account, representing voters as part of groups. For this, we turn to the proportionality literature, where representation of groups has been studied extensively \cite{lackner_multi-winner_2023}, focusing on the representation of groups that exhibit some form of `cohesiveness'. This grants larger groups more representatives, while not ignoring smaller groups. When diversity is the goal, however, we should \emph{not} assign more candidates to larger voter groups. Under this view, the only representation axiom from this literature that is conceptually compatible with diversity is the most basic one, Justified Representation (JR). JR is a rather weak axiom, that has been strengthened significantly with the goal of capturing proportionality. In this paper, we explore whether it can also be strengthened towards guaranteeing more diversity.

\paragraph{Diverse representation: state of the art} 
The only established formalization of diverse representation in committee elections is based on the Chamberlin-Courant (CC) score \cite{chamberlin_representative_1983}. 
In \emph{approval-based} committee elections, this amounts to maximizing the number of voters with at least one representative in the committee. This conception of diverse representation has been thoroughly studied also from the computational point of view.
Indeed, the Chamberlin-Courant problem is equivalent to the unit-weight Maximum k-Coverage problem. \textsc{Max k-coverage} is NP-hard and a greedy algorithm attains a multiplicative $(1-1/\e)$-approximation ratio \cite{hochbaum_analysis_1998}, which is optimal for a polynomial time algorithm \cite{feige_threshold_1998}. Importantly, (under Gap ETH) the problem remains inapproximable even for small $k$ \cite{manurangsi_tight_2019}.

Proportional representation in approval-based multi-winner voting has received a lot of attention over the past years. One focus has been the introduction of a multitude of proportionality axioms, mostly based on the concept of justified representation \cite{aziz_justified_2017} or priceability \cite{PetersS20} and rules satisfying them \cite{peters_proportional_2020}. We refer the reader to the book by \citet{lackner_multi-winner_2023} for a complete overview. 

Strong and Semi-Strong (E)JR were introduced in \citet{aziz_justified_2017}. The complexity of Strong EJR was recently studied by \citet{ai_strong_2026}, who proved $\Theta_2^p$-completeness.

\subsubsection*{Our contribution} 

This paper develops a comprehensive theory of diverse representation. It does so by making three main contributions. Our first contribution is conceptual, in that we view diverse representation as representing many \emph{groups} of voters, a view that had previously only been taken in the proportionality literature.

Second, we study optimization variants of the two known axioms Strong JR and Semi-Strong JR, that had previously not been studied further because of their insatisfiability. We complete the study of the computational properties of the SJR and SSJR axioms. It is well-known that JR is efficiently satisfiable and that SJR and SSJR are not always satisfiable \cite{aziz_justified_2017}. Yet, it was not known whether SJR and SSJR can be approximated efficiently. We settle this open question by showing that both SJR and SSJR cannot be approximated efficiently and that we cannot come within $1-1/\e$ of an optimal solution (unless P=NP). We do prove efficient optimization \emph{is} possible on the Candidate Interval domain. Moreover, we systematically study the logical relations between CC, SJR, SSJR, their optimization variants, JR, and a new rule we call CNASH, based on the Nash-welfare of cohesive groups. The resulting logical relationships are summarized in Figure \ref{figure:relations}.  

Third, in an attempt to capture the unique and defining opinions of a group, we introduce a novel axiom for diverse representation that we call Distinctive Representation (DR). DR is a strengthening of JR, where we require candidates to uniquely represent cohesive groups. We prove DR-optimization in polynomial time to be infeasible (unless P = NP). Since the locally optimal variant \emph{Local DR} \emph{can} be achieved in polynomial time, we use the latter axiom to show how this new diversity notion differs from known diversity and proportionality axioms. 

Our experiments in \Cref{sec:exp} show that, although insatisfiable in theory, SSJR is often satisfiable in practice, and that results far better than ratio $1-1/\e$ can be attained using polynomial-time algorithms. Moreover, we see that Local DR performs particularly well regarding multiple empirical measures of diversity.

\begin{figure}[t]
\adjustbox{scale=1,left}{%
$$\begin{tikzcd}
	{\text{CC axiom}} &&& {\text{CC max}} \\
    \\
	{\text{SJR axiom}} &&& {\text{SJR max}} \\
	\\
	{\text{SSJR axiom}} &&& {\text{SSJR max}} \\
    \\
    {\text{JR}} &&& {\text{CNASH}}
	\arrow[shift left, draw={black}, from=1-1, to=1-4]
	\arrow[shift left, draw={black}, from=1-4, to=1-1, dashed]
	\arrow[shift right=5, color={black}, curve={height=20pt}, from=1-1, to=5-1]
	\arrow[shift left, draw={black}, from=3-1, to=3-4]
    \arrow[shift left, draw={black}, from=3-4, to=3-1, dashed]
	\arrow[shift right=4, draw={black}, from=3-1, to=5-1]
	\arrow[shift right, draw={black}, from=3-1, to=5-4]
	\arrow[shift left, draw={black}, from=5-1, to=5-4]
    \arrow[shift left, draw={black}, from=5-4, to=5-1, dashed]
    \arrow[shift right=4, draw={black}, from=5-1, to=7-1]
    \arrow[shift right, draw={black}, from=1-4, to=7-1]
    \arrow[shift right=4, draw={black}, from=7-4, to=7-1]
    \arrow[shift left, draw={black}, from=7-4, to=5-1, dashed]
    \arrow[shift left, draw={black}, from=5-1, to=7-4]
\end{tikzcd}$$}
\caption{Relations between axioms and optimizations. Normal arrows denote logical implications. A dashed arrow denotes `implication whenever satisfiable'. Logical implications may follow from transitivity, but when there is no path from one node to another, this denotes logical independence.
}
\label{figure:relations}
\end{figure}

\section{Preliminaries}\label{sec:prelim}
An instance $I$ of an approval-based committee election problem (from hereon: instance) consists of a set $N$ of $n$ voters $v_i$, a set $C$ of $m$ candidates $c_i$, and a natural number $k$. Moreover, every voter approves of a subset of the candidates. This approval information is expressed in the form of an \emph{approval set} $A(i)$ for each voter $v_i\in N$. The sequence of sets $A=(A(1),\ldots,A(n))$ is called the \emph{approval profile}. The goal is to elect a committee $W\subseteq C$ of size $k\le m$, based on the approval information submitted by the voters in $V$.

The well known Chamberlin-Courant score is often used as a measure of diversity:
\begin{definition}[Chamberlin and Courant 1983]\label{def:cc}
On any approval-based committee election instance, the \emph{Chamberlin-Courant (CC)} score of a committee $W\subseteq C,\,|W|=k$ is defined as
\begin{align}
\text{CC}(W):=\frac{1}{n}\sum_{i=1}^n\mathds{1}_{\{A(i)\cap W\neq\emptyset\}}(i).
\end{align}
In the Chamberlin-Courant decision problem, the input is an instance and an integer $x$, and the question is whether it is possible to achieve score $\frac{x}{n}$ using $k$ candidates. In the Chamberlin-Courant optimization problem, the input is an instance and the task is to maximize the score using $k$ candidates.
\end{definition}
Since the Chamberlin-Courant decision problem is NP-complete, the prevailing approach for applications is the greedy approximation algorithm with approximation ratio $1-1/\e$, which is also optimal (unless P = NP). From now on, by the \textit{CC axiom} we refer to the requirement to select a committee (of the given size $k$) which achieves \emph{full coverage}. When we speak of the problem of obtaining \emph{optimal CC score}, we mean finding, for a given instance, a $k$-sized committee which achieves the highest possible CC score. 

The proportionality literature has long defined representation according to groups of voters using the following notion of cohesiveness.
\begin{definition}[\textbf{$\ell$-Cohesive groups}, \cite{azizComplexityExtendedProportional2018}]
For $\ell\ge 1$, a group $N'\subseteq N$ is $\ell$-cohesive if $|N'|\ge\ell\frac{n}{k}$ and $|\cap_{i\in N'}A(i)|\ge\ell$.
\end{definition}

In the context of diversity, we only consider $\ell=1$ and speak of a \emph{cohesive group} (instead of 1-cohesive). Since such a group can only be formed around a candidate, we will speak of such a candidate as a \emph{cohesive candidate}. We can now define the first axiom using cohesive groups.
\begin{definition}[\textbf{Justified Representation}, \cite{aziz_justified_2017}]
A committee $W$ satisfies Justified Representation (JR) if for all cohesive groups $N'\subseteq N$, there exists $i\in N'$ such that $|A(i)\cap W|\ge 1$.	
\end{definition}

JR is always satisfiable in polynomial time, meaning that in every approval-based committee election instance, we can efficiently find a committee abiding by the axiom. Indeed, greedily maximizing the the CC score generates a committee satisfying JR \cite{aziz_justified_2017}.
However, JR is a very weak axiom. Consider the instance shown in Figure~\ref{fig:intro}~\subref{fig:intro-a}. It has $n=24$ voters, so for $k=4$, any candidate with six approvals is cohesive. By electing $W=\{a,b,c,d\}$, full voter coverage attained. Moreover, this committee represents \emph{all} voters of each cohesive group, \emph{and} it represents them by a candidate that binds them. However, JR is already satisfied by only selecting the (non-cohesive) candidate $e$. Precisely these intuitions bring us to two known strengthenings of JR. 

\begin{figure}[t]
\centering
\begin{subfigure}[]{0.45\columnwidth}\centering
    \begin{tikzpicture}[scale=0.3, thick]
    \draw (0,0) circle (1.2);
    \foreach \angle in {45, 135, 225, 315} {
        \begin{scope}[rotate=\angle]
            \draw (3.0, 0) ellipse (2.5 and 0.9);
            \fill (0.85, 0) circle (0.15);
            \fill (2.2, 0.45) circle (0.15);
            \fill (2.2, -0.45) circle (0.15);
            \fill (3.8, 0.45) circle (0.15);
            \fill (3.8, -0.45) circle (0.15);
            \fill (5.0, 0) circle (0.15);
        \end{scope}
    };
        \node[font=\small] at (-3.3, 1) {\text{a}};
        \node[font=\small] at (-3.3, -1) {\text{b}};
        \node[font=\small] at (3.3, 1) {\text{c}};
        \node[font=\small] at (3.3, -1) {\text{d}};
        \node[font=\small] at (0, 0) {\text{e}};
    \end{tikzpicture}
    \caption{}
    \label{fig:intro-a}
    \end{subfigure}
\hfill
\begin{subfigure}[]{0.45\columnwidth}\centering
    \begin{tikzpicture}[scale=0.45, thick]
        \draw (1.5, 3) ellipse (2.2 and 0.7);
        \draw (1.5, 0) ellipse (2.2 and 0.7);
        \draw (0, 1.5) ellipse (0.7 and 1.9);
        \draw (3, 1.5) ellipse (0.7 and 1.9);
        \foreach \x in {0, 1, 2, 3} {
            \fill (\x, 3) circle (4pt);
            \fill (\x, 0) circle (4pt);
        }
        \foreach \y in {1, 2} {
            \fill (0, \y) circle (4pt);
            \fill (3, \y) circle (4pt);
    }
\end{tikzpicture}
    \caption{}
    \label{fig:intro-b}
    \end{subfigure}
\caption{This figure shows two approval instances, where voters are depicted as nodes and candidates as sets. A voter approves of a candidate iff they are contained in the set.}
\label{fig:intro}
\end{figure}

\begin{definition}[\textbf{Strong Justified Representation}, \cite{aziz_justified_2017}]
A committee $W$ satisfies Strong Justified Representation (SJR) if for all cohesive groups $N'\subseteq N$ we have $W\cap(\cap_{i\in N'}A(i)) \neq \emptyset$.
\end{definition}

Less strong is requiring that all voters in a cohesive group are represented, but not necessarily by the same candidate, which gives the following axiom.

\begin{definition}[\textbf{Semi-Strong Justified Representation} \cite{aziz_justified_2017}]
A committee $W$ satisfies Semi-Strong Justified Representation (SSJR) if for all cohesive groups $N'\subseteq N$ we have $W\cap A(i) \neq \emptyset$ for all $i\in N'$.
\end{definition}

\section{(S)SJR, CC and CNASH}\label{sec:ssjr-cc}
In this section, we take the SJR and SSJR axioms from \Cref{sec:prelim}, and a new rule that we call CNASH, and study how they logically relate to the CC axiom. We do the same for optimization variants of the axioms. Indeed, it is not always possible to satisfy either (S)SJR or CC. Consider the known counterexample for (S)SJR in Figure~\ref{fig:intro}~\subref{fig:intro-b}. It has $n=12$ voters, so for $k=3$ we have $n/k=4$. Then any group of four voters is cohesive, implying that all four candidates require representation, but we can only elect $k=3$. This motivates us to define maximization variants of both SJR and SSJR. 

\begin{definition}[\textbf{Strong-JR maximization)}]\label{def:sjr-max}
On any approval-based committee election instance, the SJR-score of a committee $W\subseteq C, |W|=k$ is
$$\SJR(W):=\frac{|\{c\in C: |N_c|\ge\frac{n}{k} \land \cap_{i\in N_c}A(i) \cap W \neq \emptyset\}|}{|\{c\in C:|N_c|\geq \frac nk\}|} \ .$$
The SJR maximization problem is defined as finding $W\subseteq C,|W|=k$ maximizing $\SJR(W)$.
\end{definition}

The name comes from the following proposition.
\begin{proposition}\label{prop:sjr-score}
We have $SJR(W)=1 \iff W$ satisfies Strong JR.	
\end{proposition}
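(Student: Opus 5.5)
The plan is to reduce Strong JR to a statement about cohesive candidates and then compare it with the SJR-score directly. Here $N_c$ denotes the set of voters approving $c$.

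\emph{Step 1: cohesive groups are governed by their candidates.} Let $N'$ be a cohesive group. Since $|\cap_{i\in N'}A(i)|\ge 1$, there is a candidate $c$ approved by every voter in $N'$, so $N'\subseteq N_c$. Moreover $|N_c|\ge|N'|\ge n/k$, so $N_c$ is itself cohesive. The key monotonicity fact is that enlarging a group can only shrink the common approval set:
\[
N'\subseteq N_c \;\Longrightarrow\; \cap_{i\in N_c}A(i)\subseteq\cap_{i\in N'}A(i).
\]

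\emph{Step 2: Strong JR is equivalent to a condition on the groups $N_c$.} I claim that $W$ satisfies Strong JR if and only if
\[
\cap_{i\in N_c}A(i)\cap W\neq\emptyset \quad\text{for every candidate } c \text{ with } |N_c|\ge n/k.
\]
For the forward direction, each such $N_c$ is a cohesive group, because $c$ lies in the intersection and $|N_c|\ge n/k$. Strong JR applied to $N_c$ gives the condition. For the reverse direction, take any cohesive group $N'$ and choose $c$ as in Step 1. The condition for $N_c$ yields a committee member in $\cap_{i\in N_c}A(i)$, and by the monotonicity fact this member also lies in $\cap_{i\in N'}A(i)$.

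\emph{Step 3: compare with the score.} The numerator of $\SJR(W)$ counts exactly the cohesive candidates $c$ satisfying the condition in Step 2. The score therefore equals $1$ exactly when every cohesive candidate satisfies it. By Step 2, this happens exactly when $W$ satisfies Strong JR.

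The only potential obstacle is the degenerate case in which there are no cohesive candidates, so the denominator is $0$. In that case there are no cohesive groups, and Strong JR holds vacuously for every $W$. I would handle this by adopting the convention that $\SJR(W)=1$ when the denominator vanishes, consistent with an empty condition being satisfied. I would state this convention explicitly alongside Definition~\ref{def:sjr-max}.
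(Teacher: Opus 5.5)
Your proof is correct and follows essentially the same route as the paper: both reduce an arbitrary cohesive group $N'$ to the supporter set $N_c$ of a commonly approved candidate and use $\cap_{i\in N_c}A(i)\subseteq\cap_{i\in N'}A(i)$. The differences are minor: the paper proves the reverse direction by contrapositive where you argue directly, and you additionally handle the zero-denominator case (no cohesive candidates) by an explicit convention. The paper leaves that case unaddressed.
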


\begin{definition}[\textbf{Semi-Strong-JR maximization}]\label{def:ssjr-max}
On any approval-based committee election instance, the SSJR-score of a committee $W\subseteq C, |W|=k$ is
$$\SSJR(W):=\frac{|\{c\in C: |N_c|\ge\frac{n}{k} \land  N_c \subseteq \cup_{c'\in W} N_{c'}\}|}{|\{c\in C:|N_c|\geq \frac nk\}|} \ .$$
The SSJR maximization problem is defined as finding $W\subseteq C,|W|=k$ maximizing $\SSJR(W)$.
\end{definition}

Again, the name is justified by the following proposition.
\begin{proposition}\label{prop:ssjr-score}
We have $\SSJR(W)=1 \iff W$ satisfies Semi-Strong JR.	
\end{proposition}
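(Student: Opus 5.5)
The plan is to unfold both definitions and reduce the equivalence to a statement about individual cohesive candidates. Write $N_c$ for the set of voters approving $c$, and let $K=\{c\in C:|N_c|\ge n/k\}$ be the set of cohesive candidates. By Definition~\ref{def:ssjr-max}, $\SSJR(W)=1$ exactly when every $c\in K$ satisfies $N_c\subseteq\bigcup_{c'\in W}N_{c'}$. That inclusion just says every voter in $N_c$ approves some member of $W$. If $K=\emptyset$, the ratio is degenerate; I will adopt the convention (as in Proposition~\ref{prop:sjr-score}) that the score is then $1$, and there are no cohesive groups, so SSJR holds vacuously. So I assume $K\neq\emptyset$.

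For ($\Leftarrow$), assume $W$ satisfies SSJR and fix $c\in K$. The group $N_c$ has size at least $n/k$ and all its members approve $c$, so $N_c$ is a cohesive group. SSJR then gives $A(i)\cap W\neq\emptyset$ for each $i\in N_c$. Hence each such $i$ lies in some $N_{c'}$ with $c'\in W$, so $c$ is counted in the numerator. Since this holds for every $c\in K$, the score is $1$.

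For ($\Rightarrow$), assume $\SSJR(W)=1$ and let $N'$ be any cohesive group. Since $N'$ is cohesive, some candidate $c$ lies in $\bigcap_{i\in N'}A(i)$, which gives $N'\subseteq N_c$. Then $|N_c|\ge|N'|\ge n/k$, so $c\in K$. The score being $1$ means $N_c$ is covered by $W$. In particular, every $i\in N'$ has $A(i)\cap W\neq\emptyset$, which is exactly SSJR.

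The only delicate step is this last one: the definition of SSJR quantifies over arbitrary cohesive groups, while the score only examines the maximal groups $N_c$. The monotonicity $N'\subseteq N_c$ closes that gap, since covering $N_c$ covers every subset of it. Apart from that and the empty-$K$ convention, the argument is a direct unfolding of the definitions.
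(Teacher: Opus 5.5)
Your proof is correct and follows the same route as the paper, which proves this by analogy with Proposition~\ref{prop:sjr-score}: for one direction, the supporter set $N_c$ of each cohesive candidate is itself a cohesive group, and for the other, every cohesive group $N'$ is contained in $N_c$ for some candidate $c$ that all of $N'$ approve. Your explicit handling of the empty-denominator case is a small addition the paper omits. Note, though, that the paper does not actually fix such a convention in Proposition~\ref{prop:sjr-score}, so that attribution is inaccurate.
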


Of course, we can also try to maximize other metrics instead of pure coverage. In many domains,
maximizing Nash-welfare leads to particularly good outcomes \cite{CaragiannisKMPS19}. Nash-welfare in committee 
elections has so far only been considered on the level of individual voters, which leads to the same drawbacks
as considering coverage on an individual level: if every outcome has at least one voter with zero representatives 
in the committee, then every committee offers the same Nash-welfare. On the level of cohesive groups, we can avoid
this problem, as a committee that at least partially covers every cohesive group, i.e., a JR committee, always exist.
This motivates the following definition:

\begin{definition}[Cohesive Nash Rule]
On any approval-based committee election instance, let Coh be the set of
all cohesive groups. Then, the cohesive Nash score of a committee $W\subseteq C, |W|=k$ is
$$\text{NASH}(W):=\prod_{N' \in \text{Coh}} |\{i \in N' \mid A_i \cap W \neq \emptyset\}| .$$
The Cohesive Nash Rule (CNASH) is defined as the rule that picks a committee $W$ with maximal Nash score.
\end{definition}

We can now study the relations between the different axioms, their optimization variants, and the Cohesive Nash Rule. These relations are summarized in \Cref{figure:relations}, and relevant relations are stated in \Cref{thm:relations}. See the overview in Appendix~\ref{app:relations} for explanations of the relations that do not immediately follow from \Cref{thm:relations}. \Cref{figure:relations} highlights the absense of logical relations between the optimization problems, where the axioms are somewhat hierarchically structured. Interestingly, optimizing for SJR or SSJR does not even yield a committee that satisfies JR, and the CC axiom (full coverage) does not imply SJR. Another notable observation is that both the SSJR maximization and CNASH are functions of, and increase monotonely with, $\cup_{c\in W}N_c$, which make them coincide when the SSJR axiom is feasible, but unrelated otherwise.

\begin{theorem}\label{thm:relations}
The following relations hold:
\begin{itemize}
    \item\label{prop:cc-to-ssjr} A committee $W$ that satisfies the CC axiom, also satisfies the Semi-Strong JR axiom.
    \item\label{prop:cc-sjr} A committee $W$ that satisfies the CC axiom, may not be optimal for Strong JR.
    \item\label{prop:sjr_ssjr_cc} A committee $W$ that satisfies the (Semi-)Strong JR axiom, may not be optimal for CC, and vice versa.
    \item \label{prop:cc-ssjr} A committee $W$ that is optimal for CC, may not be optimal for Semi-Strong JR, and vice versa.
    \item\label{sjr-ssjr} A committee $W$ that is optimal for  Strong JR, may not be optimal for Semi-Strong JR.
    \item\label{prop:ssjr-jr} A committee $W$ that is optimal for (Semi)-strong JR, may not satisfy JR.
    \item\label{prop:cnash-jr} A committee $W$ that is optimal for CNASH, satisfies JR.
    \item\label{prop:cnash-sjr} A committee $W$ that is optimal for CNASH, may fail to find a committee that satisfies the SJR axiom, even when it exists.
    \item\label{prop:cnash-cc} A committee $W$ that is optimal for CNASH, may not be optimal for CC.
    \item\label{prop:jr-to} A committee $W$ that satisfies JR, may not be optimal for SSJR max, SJR max, CC max or CNASH
    \item When satisfying the SSJR axiom is feasible, the set of NASH‑maximizing committees equals the set of SSJR satisfying committees.
    \item A committee $W$ that is optimal for CNASH may not be optimal for SSJR max, and vice versa.
    \item A committee $W$ that is optimal for CC, may not be optimal for CNASH.
    \item A committee $W$ that is optimal for SJR, may not be optimal for CNASH.
\end{itemize}
\end{theorem}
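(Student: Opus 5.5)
The theorem is a list of fourteen claims of two kinds. A few are positive implications, proved directly from the definitions. The rest are ``may not'' statements, each settled by a small explicit counterexample instance. I will handle the positive ones first, then build a small library of instances so that each instance witnesses several separations at once.

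\textbf{Positive items.} For CC $\Rightarrow$ SSJR: if every voter has a representative in $W$, then in particular every voter in every cohesive group does, so SSJR holds. For CNASH $\Rightarrow$ JR: some JR committee always exists, and it gives every cohesive group at least one covered member, so its NASH score is positive. Any committee violating JR has some cohesive group with zero covered members, hence NASH score $0$, so it cannot be a maximizer. For the item ``NASH maximizers $=$ SSJR committees when SSJR is feasible'': the NASH score is a product of factors $|\{i\in N' : A(i)\cap W\neq\emptyset\}|\le |N'|$. Hence $\NASH(W)\le \prod_{N'}|N'|$, with equality if and only if every member of every cohesive group is covered, which is exactly SSJR. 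If SSJR is feasible, this bound is attained, so the maximizers are precisely the SSJR committees.

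\textbf{Counterexample library.} I will reuse Figure~\ref{fig:intro} wherever possible.
\begin{itemize}
\item \emph{CC fails to be SJR-optimal.} Build an instance where a non-cohesive ``private'' candidate per block covers everyone, but each cohesive candidate's group is only jointly approved through that cohesive candidate. Two cohesive candidates whose groups are covered by two disjoint non-cohesive candidates suffice. Choosing the non-cohesive ones gives full coverage but SJR score $0$.
\item \emph{(S)SJR vs.\ CC optimality, both directions.} One direction reuses the instance above. For the other, add many voters approving only a non-cohesive candidate, so that SSJR-satisfying committees lose coverage.
\item \emph{(S)SJR-optimal but not JR.} SJR/SSJR scores count candidates $c$ with $|N_c|\ge n/k$. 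A cohesive group need not equal some $N_c$: it is any set of $\ge n/k$ voters sharing a candidate, and that set may be the support of a candidate. The point is that when the SSJR axiom is infeasible, the score only counts fully covered cohesive candidates. I would use a Figure~\ref{fig:intro}\subref{fig:intro-b}-style instance where maximizing the count of fully covered $N_c$ leaves another cohesive group completely uncovered. Several small groups can be fully covered by cheap shared candidates, while one large cohesive candidate's voters get nothing.
\item \emph{SJR-optimal but not SSJR-optimal.} The optimal committee meets the SJR score, but a different selection covers more groups.
\item \emph{CNASH vs.\ CC / SSJR / SJR.} Exploit the fact that NASH rewards partial coverage across all groups multiplicatively. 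A committee covering half of each of two groups beats one covering all of one group and none of the other. That committee has a lower SSJR count and possibly lower CC, which gives both directions via slight modifications.
\item \emph{CNASH fails SJR when an SJR committee exists.} Let CNASH select non-binding candidates that cover the same voters: the same union, hence the same NASH score, with ties broken adversarially, or strictly more coverage.
\item \emph{JR is not optimal for anything.} This is witnessed by committee $\{e\}$ plus arbitrary filler in Figure~\ref{fig:intro}\subref{fig:intro-a}.
\end{itemize}

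\textbf{Main obstacle.} Designing and verifying these instances is the hard part. The verification is a finite check of all $\binom{m}{k}$ committees, so I will keep $m\le 5$ and $k\le 3$. The most delicate item is the ``optimal for (S)SJR but not JR'' construction. There one must ensure that every committee covering the neglected cohesive group strictly loses on the candidate-count score, and the subtlety that cohesive groups are arbitrary subsets of the $N_c$ must be tracked carefully.
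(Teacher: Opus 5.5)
\textbf{Where the proposal falls short.} Your three positive items are correct and are essentially the paper's arguments: CC $\Rightarrow$ SSJR, CNASH $\Rightarrow$ JR, and NASH maximizers $=$ SSJR committees via the bound $\NASH(W)\le\prod_{N'\in\mathrm{Coh}}|N'|$. Your tie idea for ``CNASH may miss an existing SJR committee'' matches the paper, and your witness for the JR item ($\{e\}$ plus any three fillers in Figure~\ref{fig:intro}(a)) is correct.

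The gap is in the remaining items, where the counterexample \emph{is} the proof. The first one you sketch cannot exist.
\begin{enumerate}
\item A non-cohesive candidate has fewer than $n/k$ supporters. So $k$ of them cover fewer than $n$ voters, and no single one can contain a cohesive group.
\item Hence every full-coverage committee contains some cohesive $c$. Since $c\in\bigcap_{i\in N_c}A(i)$, its SJR score is strictly positive, so ``full coverage but SJR score $0$'' never happens.
\end{enumerate}
What works is to mix cohesive candidates with a small non-cohesive one that absorbs the private voters of a single cohesive candidate $d$. The paper takes the 12-voter square (sides $a,b,c,d$ of four voters, sharing corners) plus a 2-voter candidate $e$ on the middle voters of side $d$, with $k=4$. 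Then $\{a,b,c,e\}$ covers everyone, but $\bigcap_{i\in N_d}A(i)=\{d\}$, so it scores $3/4$ against $1$ for $\{a,b,c,d\}$.

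Your ``(S)SJR vs.\ CC'' item reuses the broken instance. Also, by your own first item, no instance admitting full coverage can ever witness ``CC-optimal but not SSJR''. Your second idea (voters approving only a non-cohesive candidate) is the right one. The paper's 14-voter instance with $k=2$ ($|N_b|=|N_c|=7$ overlapping in four voters, plus four voters approving only $a$) settles both directions for SJR and SSJR at once. You must, however, track how added voters shift $n/k$.

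\textbf{The CNASH sketch.} It has a related blind spot: $\mathrm{Coh}$ contains \emph{every} subset of size at least $n/k$ of each $N_c$, not only the $N_c$ themselves.
\begin{enumerate}
\item If $n/k$ or more voters of some $N_c$ are uncovered, they form a cohesive group with no covered member, and NASH drops to $0$. So ``half of each group'' survives only when $|N_c|<2n/k$.
\item Beating a committee that covers ``none of the other group'' means beating a NASH-$0$ committee, which says nothing about optimality.
\item The multiplicity of subgroups weights large candidates heavily. In the paper's instance ($n=6$, $k=2$, $N_a=\{2,3,4\}$, $N_b=\{1,2,3,5\}$, $N_c=\{1,2,3,6\}$, $N_d=\{1,3,4\}$), the unique NASH maximizer $\{b,c\}$ (score $2^2\cdot4^2\cdot3^7$) covers the two \emph{large} candidates fully and the small ones only partially. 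Its SSJR score is therefore $1/2$, while $\{a,b\}$ reaches $3/4$ with NASH score $69984$.
\end{enumerate}

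\textbf{Items left without a construction.} Several items are only restated.
\begin{enumerate}
\item (S)SJR-optimal but not JR: your plan is exactly the paper's instance (the square plus a disjoint fifth 4-voter candidate, $n=16$, $k=4$). You do not need strict losses: $\{a,b,c,d\}$ attains the maximal score $4/5$ for both scores while leaving $N_e$ untouched, even though other committees tie with it.
\item CC vs.\ CNASH: the mechanism is simply that NASH ignores voters outside cohesive groups. Take the square plus a disjoint 3-voter candidate, $n=15$, $k=4$. Then $\{a,b,c,d\}$ is the unique NASH maximizer and covers $12$ voters, while three sides plus the small candidate cover $13$.
\item SJR-optimal but not CNASH-optimal: $N_a=\{1,2\}$, $N_b=\{1,3\}$, $N_c=\{2,4\}$ with $k=2$ works. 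Every pair scores $2/3$ for SJR, but only $\{b,c\}$ covers all voters.
\item SJR-optimal vs.\ SSJR-optimal: this still needs an instance. The paper uses four candidates with singleton binding sets, $n=9$, $k=3$. Every triple is SJR-optimal, $\{a,c,d\}$ covers everyone, but $\{a,b,c\}$ does not.
\end{enumerate}
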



\section{(Semi-)Strong JR maximization}\label{sec:ssjr_max}
In this section we study the SJR and SSJR maximization problems. While we know that we cannot always satisfy the SJR and SSJR axioms, we want to understand how well we can do on the maximizations variants. We prove the decision versions of both problems are NP-complete, which implies that we cannot find optimal solutions, unless $P=NP$. We then turn to find approximate solutions. We find that no polynomial time algorithm can find a solution closer than $1-1/\e$ to the optimal solution (unless $P=NP$), and give an algorithm witnessing this optimal approximation ratio. We do this for both SJR and SSJR. We start with a simple but nevertheless useful observation.

\begin{proposition}\label{prop:SJR-score} For any approval-based committee election instance and committee $W$, the Semi-Strong JR maximization score and the Strong-JR maximization score of $W$ can both be computed in polynomial time.	
\end{proposition}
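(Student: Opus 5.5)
The plan is to compute both scores directly from Definitions~\ref{def:sjr-max} and~\ref{def:ssjr-max}. The key point is that, although the SJR and SSJR axioms quantify over all cohesive groups $N'\subseteq N$ (exponentially many), the scores in Definitions~\ref{def:sjr-max} and~\ref{def:ssjr-max} quantify only over candidates $c\in C$. Each such candidate is paired with its supporter set $N_c=\{i\in N: c\in A(i)\}$. So both numerator and denominator range over at most $m$ candidates, and it suffices to show that each membership condition can be checked in polynomial time.

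First, for every $c\in C$, compute $N_c$ by scanning the approval profile, in $O(nm)$ total time. Checking $|N_c|\ge n/k$ then takes constant time per candidate, which gives the common denominator $|\{c\in C: |N_c|\ge n/k\}|$. If this denominator is $0$, some convention is needed; I would note that the score is then vacuously $1$, consistent with Propositions~\ref{prop:sjr-score} and~\ref{prop:ssjr-score}.

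For the SJR numerator, take each cohesive candidate $c$ and compute the intersection $\bigcap_{i\in N_c}A(i)$ by intersecting at most $n$ sets of size at most $m$, in $O(nm)$ time. Then test whether this intersection meets $W$, in $O(m)$ time. For the SSJR numerator, first compute $U=\bigcup_{c'\in W}N_{c'}$ once, in $O(nm)$ time. Then, for each cohesive candidate $c$, test $N_c\subseteq U$, in $O(n)$ time. Summing over at most $m$ candidates gives $O(nm^2)$ overall for each score, and dividing numerator by denominator gives the scores.

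There is no real obstacle here. The only point needing a remark is that we never enumerate cohesive groups themselves. This is justified because the scores are defined per candidate rather than per group: any cohesive group is contained in some $N_c$ with $|N_c|\ge n/k$, so the maximal cohesive groups are exactly the sets $N_c$ of cohesive candidates.
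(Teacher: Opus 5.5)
Your proposal is correct and matches the paper's proof: both evaluate the score definitions over the at most $m$ candidates (the paper only iterates over $c\notin W$, since members of $W$ count trivially), testing for SSJR that every $i\in N_c$ has $A(i)\cap W\neq\emptyset$ and for SJR that $\bigcap_{i\in N_c}A(i)$ meets $W$. One small correction to your closing remark, which the argument does not need: not every $N_c$ with $|N_c|\ge n/k$ is an inclusion-maximal cohesive group, since it may lie strictly inside some $N_d$, so every maximal cohesive group has the form $N_c$ but not conversely.
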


To prove NP-hardness, we define the Semi-Strong JR decision problem below, as well as the Chamberlin-Courant decision problem, which we will reduce from. Recall that the latter is equivalent to the \textsc{Max k-Coverage} problem.

\begin{definition}[Semi-Strong JR decision problem]
In the \textsc{Semi-Strong JR decision problem (ssjrdp)} we are given an instance $I=(V,C,A,k,t)$, where $V$ is a group of $n$ voters, $C$ is a set of $m$ candidates, $A:V\to \mathcal{P}(C)$ is a function that maps every voter to their approval set $A(i)$ of candidates, $k\in\{1,\ldots,m\}$ is the committee size and $t\in \mathbb{N}$. Then $I\in \textsc{ssjrdp}\iff \exists W\subseteq C \text{ s.t. } |W|=k \text{ and } |\{c\in C:|N_c|\ge\frac{n}{k},\,\forall i\in N_c: A(i)\cap W\neq\emptyset \}|\ge t$.
\end{definition}

\begin{definition}[Chamberlin-Courant decision problem]
In the \textsc{Chamberlin-Courant decision problem (ccdp)} we are given an instance $I=(V,C,A,k,t)$, where $V$ is a group of $n$ voters, $C$ is a set of $m$ candidates, $A:V\to \mathcal{P}(C)$ is a function that maps every voter to their approval set $A(i)$ of candidates, $k\in\{1,\ldots,m\}$ is the committee size and $t\in \mathbb{N}$. Then $I\in \textsc{ccdp}\iff \exists W\subseteq C \text{ s.t. } |W|=k \text{ and } |\{i\in V: A(i)\cap W\neq \emptyset \}|\ge t$
\end{definition}

\begin{theorem}\label{thm:SJR_hard}
The Semi-Strong JR decision problem is NP-complete.
\end{theorem}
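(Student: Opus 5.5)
We prove membership in NP and NP-hardness separately. Membership is immediate from \Cref{prop:SJR-score}. A certificate is a committee $W$ of size $k$. Verifying that at least $t$ cohesive candidates $c$ have all of $N_c$ covered by $W$ takes polynomial time, since there are at most $m$ candidates and each check is a union or inclusion test.

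For hardness we reduce from \textsc{ccdp}, i.e.\ \textsc{Max k-Coverage}. Take an instance $(V,C,A,k,t)$ with $|V|=n$. First we discard voters with empty approval sets; they can never be covered, so we decrease $t$ accordingly. The construction creates, for each original voter $i$, a fresh candidate $d_i$ whose supporter group is cohesive in the new instance. We make it so that this group is fully covered exactly when $W\cap A(i)\neq\emptyset$, so that the \textsc{ssjrdp} count simulates the CC count. The difficulty is that cohesive groups must have size at least $n'/k'$, so the $n$ groups cannot be pairwise disjoint. We therefore let them share a large common pool.

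Concretely, keep one voter $i$ for each original voter, approving $A(i)\cup\{d_i\}$. Add a pool $P$ of $p=n(k+1)$ new voters. Each pool voter approves every $d_i$ together with $M=n+1$ copies $f^1,\dots,f^M$ of a dummy candidate, which are approved by $P$ only. Set $k'=k+1$ and $t'=t+M$. Then $n'=n+p$, and the threshold is
\[\frac{n'}{k'}=\frac{n(k+2)}{k+1}.\]
We check three facts about cohesiveness:
\begin{itemize}
\item every $f^j$ is cohesive, since its group has size $p\ge n'/k'$;
\item every $d_i$ is cohesive, since $N_{d_i}=\{i\}\cup P$ has size $p+1\ge n'/k'$;
\item no original candidate is cohesive, since it has at most $n<n'/k'$ supporters.
\end{itemize}
All the $f^j$ share the supporter set $P$, so any committee either covers all $M$ of them or none. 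For each $d_i$, the group $N_{d_i}$ is fully covered iff $P$ is covered and voter $i$ is covered. Here $P$ is covered iff some $f^j$ or some $d_{i'}$ is in $W$. Voter $i$ is covered iff $d_i\in W$ or $W\cap A(i)\neq\emptyset$.

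For the forward direction, a CC solution $W$ covering at least $t$ voters gives the committee $W\cup\{f^1\}$ of size $k'$. This covers all $M$ copies of $f$ and at least $t$ groups $N_{d_i}$, so at least $t'$ cohesive candidates.

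For the converse, let $W'$ reach $t'$. The step needing most care is the normalisation of $W'$, and it proceeds in four moves.
\begin{enumerate}
\item At most $n<M$ of the $d_i$ are cohesive candidates other than the $f^j$. Hence $P$ must be covered via some $f^j\in W'$, for otherwise at most $n<t'$ cohesive candidates are covered.
\item Replace every further copy $f^{j'}$ in $W'$ by an arbitrary original candidate; this cannot lose coverage.
\item Replace each $d_i\in W'$ by some candidate in $A(i)$, which exists after preprocessing. The $d_i$ covers only voter $i$, and since $P$ is already covered, this swap does not decrease the count. If the chosen candidate is already in $W'$, pad with an arbitrary original candidate instead.
\item The result is $\{f^1\}\cup W$ with $W\subseteq C$ and $|W|=k$, whose score is $M$ plus the number of voters covered by $W$. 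So $W$ covers at least $t$ voters.
\end{enumerate}
The construction is polynomial in the input size, which completes the proof.
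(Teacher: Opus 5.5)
Your reduction breaks in the converse direction, at move~1 of the normalisation. From the fact that $W'$ reaches $t'$ you can only conclude that $P$ is covered, not that some $f^j\in W'$. Every pool voter approves every $d_i$, so a single $d_{i'}\in W'$ already covers $P$, and then all $M$ copies of $f$ are counted anyway. Such a $d_{i'}$ does double duty: it does the job of $f^1$ and additionally covers voter $i'$. Move~3 then collapses, because swapping the $d$'s out uncovers $P$ when no $f^j$ is present.

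This is not just a gap in the argument; the reduction itself is wrong. Write $\mathrm{OPT}_k$ for the best CC coverage with $k$ candidates. Taking an optimal $k$-committee and adding $d_{i_0}$ for an uncovered voter $i_0$ shows that the optimum of your \textsc{ssjrdp} instance is $M+\min(n,\mathrm{OPT}_k+1)$. So the threshold $t'=t+M$ is off by one, and no-instances are mapped to yes-instances. For example, $V=\{1,2\}$, $C=\{a,b\}$, $A(1)=\{a\}$, $A(2)=\{b\}$, $k=1$, $t=2$ is a no-instance of \textsc{ccdp}. In its image ($p=4$, $M=3$, $k'=2$, $n'/k'=3$, $t'=5$), the committee $\{d_1,d_2\}$ covers voters $1$, $2$ and all of $P$, hence all five cohesive candidates $d_1,d_2,f^1,f^2,f^3$.

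The paper avoids this by making the padding free: its dummy voters approve \emph{every} candidate, so covering them never costs a seat and $k$ is not increased. The price is that original candidates become cohesive. The paper handles this by giving each original candidate a private voter, so it is fully covered only when elected, and by using threshold $t+k$. An elected voter-candidate then earns a single point, so replacing it by an unelected original candidate never lowers the score.

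If you want to keep your own construction, raise the threshold to $t+M+1$. You must also handle the boundary case $t=n$, where $t+M+1$ exceeds the number of cohesive candidates. Adding one voter with an empty ballot that you do \emph{not} discard makes the optimum exactly $M+\mathrm{OPT}_k+1$ and fixes this.

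Finally, discarding empty-ballot voters should leave $t$ unchanged rather than decrease it, since such voters never contribute to the CC score.
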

\begin{corollary}\label{cor:hardness}
The Strong JR decision problem and the decision problem for the CNASH rule are both NP-complete.
\end{corollary}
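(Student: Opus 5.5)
The plan is to handle the two problems separately. In both cases membership in NP is immediate: guess a committee $W$ of size $k$ and compute its score in polynomial time. For SJR this follows from Proposition~\ref{prop:SJR-score}. For CNASH, note that a cohesive group is determined by a cohesive candidate $c$ (take $N'=N_c$), so we can enumerate the at most $m$ groups and multiply the covered counts. So the work is in NP-hardness, and I would reduce from the Chamberlin--Courant decision problem in its full-coverage form, i.e.\ set cover (does some $k$-subset of candidates cover all voters?). This variant is NP-hard because \textsc{Max k-Coverage} with $t=n$ is exactly set cover.

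\textbf{CNASH.} I would reuse the construction from the proof of Theorem~\ref{thm:SJR_hard}, specialised to full coverage ($t=n$), which maps a set cover instance to an election in which the SSJR axiom is satisfiable iff the cover exists. By Proposition~\ref{prop:ssjr-score}, together with the bullet of Theorem~\ref{thm:relations} stating that NASH-maximizers coincide with SSJR-satisfying committees whenever the latter exist, an SSJR committee exists iff some committee reaches Nash score $\prod_{N'\in\text{Coh}}|N'|$. This value is the maximum possible and is computable in polynomial time. So the decision question ``is there $W$ with $\NASH(W)\ge T$'' with this $T$ answers the set cover question. If the reduction in Theorem~\ref{thm:SJR_hard} is not already full-coverage-preserving, I would adapt it. The needed property is that every original voter lies in some cohesive group whose full coverage is required, which can be enforced by padding each original voter into a block that forms its own cohesive group.

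\textbf{SJR.} Here I would build the instance directly. For each original voter $i$, create a block $B_i$ of $b$ fresh voters and a new candidate $d_i$ approved exactly by $B_i$. Every voter in $B_i$ also approves all original candidates in $A(i)$. Next, add dummy voters approving nothing so that $n/k=b$. Then each $N_{d_i}=B_i$ is cohesive with $\cap_{j\in B_i}A(j)=\{d_i\}\cup A(i)$, so a committee strongly covers that group iff it hits $A(i)$ or contains $d_i$. Original candidates $c$ are cohesive too, since $|N_c|\ge b$. The intersection over $N_c$ contains $c$ and no $d_i$ as soon as $c$ appears in at least two approval sets; I would preprocess to make original candidates' neighbourhoods pairwise distinct, for instance by adding one private block per candidate. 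Then $c$'s group is strongly covered only through candidates in that intersection.

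An exchange argument then shows we may assume $W$ contains no $d_i$. Swapping $d_i$ for any $c\in A(i)$ loses only the group of $d_i$, which $c$ still covers, and gains at least $c$'s own group. Under this normal form the SJR count equals the number of covered original voters plus a term depending only on $k$, and the latter can be fixed through the private blocks. Setting the threshold to $t$ plus that fixed term completes the reduction.

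The main obstacle is this bookkeeping for the cohesive groups around original candidates. I must ensure they contribute the same amount for every normalized committee, so that the SJR count tracks coverage exactly. This is where I would spend the effort, adjusting block sizes and private blocks until each original candidate's group is strongly covered iff that candidate is in $W$.
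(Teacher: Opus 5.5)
Your NP-membership argument for SJR is fine. The SJR hardness construction, the CNASH membership argument and the CNASH hardness argument all have genuine gaps.

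\textbf{SJR.} The construction fails at ``add dummy voters approving nothing so that $n/k=b$''. The blocks $B_1,\dots,B_n$ are pairwise disjoint, so the new electorate has $n'\ge nb$ voters. Its cohesiveness threshold is then $n'/k\ge nb/k>b$ whenever $n>k$, which is the only nontrivial case, and adding voters only raises the threshold. More fundamentally, no instance has more than $k$ pairwise disjoint cohesive groups. So your $n$ groups $N_{d_i}=B_i$ cannot all be cohesive, the $d_i$ drop out of the SJR count, and the identity ``SJR count $=$ covered voters $+$ constant'' collapses.

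The missing idea is to let the per-voter groups \emph{overlap} in a shared pool $D$ of dummy voters who approve \emph{every} candidate, with singleton blocks. Then $N_{d_i}=\{i\}\cup D$ is cohesive once $|D|\ge (n+m)/(k-1)$, and its common approvals are still $A(i)\cup\{d_i\}$. A single private voter per original candidate then makes that candidate's group SJR-covered iff it is elected, because $D$ supplies the size. This is the instance from the proof of Theorem~\ref{thm:SJR_hard}, with $|D|$ computed against the enlarged electorate. On it, the SJR and SSJR conditions coincide for every cohesive candidate. That is why the paper handles SJR in one line by reusing that reduction rather than building a new one.

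\textbf{CNASH.} You take $\text{Coh}=\{N_c : c\text{ cohesive}\}$, but the paper defines $\text{Coh}$ as the set of \emph{all} cohesive groups: every $N'$ with $|N'|\ge n/k$ and a commonly approved candidate. The paper's own example in the proof of Theorem~\ref{thm:relations} counts the seven $3$-element subsets of $N_b$ or $N_c$ (the factor $3^7$). The experiments also explicitly call the maximal-groups variant a ``simplified'' CNASH. So $|\text{Coh}|$ can be exponential, and enumerating at most $m$ groups does not compute the score. Likewise, the threshold $T=\prod_{N'\in\text{Coh}}|N'|$ can have exponentially many bits, so a polynomial-time reduction can neither compute nor write it.

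What your idea does support is hardness of deciding whether the maximum $\NASH$ value equals $\prod_{N'\in\text{Coh}}|N'|$, i.e.\ whether SSJR is satisfiable. That needs an instance in which SSJR satisfiability tracks full coverage, and the instance of Theorem~\ref{thm:SJR_hard} is not one. There, each original candidate's private voter forces all $m$ original candidates into any SSJR committee. Dropping those private voters while keeping the shared all-approving pool $D$ gives such an instance, since every original voter then lies in the cohesive group $\{i\}\cup D$. Your ``one block per voter'' adaptation instead runs into the same disjointness obstruction as above. Membership of the threshold version is not established by your argument, and the paper's two-line proof does not address it either.
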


If we cannot find an optimal solution, we can still hope to approximate the optimal solution well. The following result limits the best approximation ratio we may hope to obtain for these two problems.

\begin{theorem}\label{thm:SJR_opt_approx}
Let $k\in o(n)$. Electing $k$ candidates, no polynomial time algorithm can approximate the optimal solution for (S)SJR with ratio better than $1-1/\e$, unless $P=NP$.
\end{theorem}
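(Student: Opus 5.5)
We give an approximation-preserving reduction from \textsc{Max k-Coverage} (equivalently, optimal CC score). For that problem Feige's result says no polynomial-time algorithm achieves ratio better than $1-1/\e$ unless P=NP. The reduction is a refinement of the one used for \Cref{thm:SJR_hard}. From a coverage instance with ground set $U$ (the voters), sets $S_1,\dots,S_m$ (the candidates) and budget $k$, we build an election with the same budget $k$. Each element $u\in U$ becomes a ``gadget candidate'' $c_u$ that is cohesive. Covering $u$ in the coverage instance should correspond exactly to $c_u$ being counted in the (S)SJR score.

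\textbf{Construction.} For each element $u$ we create a block $B_u$ of $q$ fresh voters, all approving $c_u$. Every voter of $B_u$ also approves every original candidate $S_j$ with $u\in S_j$. The gadget candidates are therefore exactly the candidates $c_u$ with $N_{c_u}=B_u$. We also add padding voters approving nothing, which lets us tune $n$ so that $q\ge n/k$. In total $n=q|U|+p$, and we choose $p$ so that $n/k\le q$. We must also make sure no original candidate $S_j$ is cohesive, or at least that any cohesive originals are covered automatically and so add only a controlled constant. The hypothesis $k\in o(n)$ is presumably what makes this tuning possible: $q$ must be large compared with $n/k$ while $|S_j|q$ stays below $n/k$. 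That tension is delicate. If the $S_j$ turn out cohesive, we instead make them ``trivially satisfied'', which perturbs the ratio only by a vanishing additive term.

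\textbf{Score correspondence.} For SSJR, a gadget $c_u$ counts iff every voter of $B_u$ is represented. This holds iff $W$ contains $c_u$ or some $S_j\ni u$. Because the voters of $B_u$ are identical, the Semi-Strong condition coincides with the Strong condition here, so the SJR case comes for free.

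Next we show that picking gadget candidates is never better than picking original sets. Each $c_u$ covers only one gadget, and $u$ belongs to some set, so replacing $c_u$ by any $S_j\ni u$ can only improve the score. Hence $\opt_{\SSJR}=\opt_{cov}/|U|$, up to the controlled contribution of any cohesive originals. Moreover, any committee can be converted in polynomial time into a set family with coverage at least its score times $|U|$.

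\textbf{Conclusion and main obstacle.} An $\alpha$-approximation for (S)SJR with $\alpha>1-1/\e$ would then give an $\alpha$-approximation for \textsc{Max k-Coverage}, contradicting Feige's bound. The step I expect to be hardest is the cohesiveness tuning, that is, making the gadgets the only cohesive candidates, or making the additive error negligible. This is exactly where $k\in o(n)$ is needed. I would first try duplicating each element into many blocks, so that the number of gadgets dwarfs the number of possibly cohesive originals. I would then also use Feige's hardness for instances where a perfect cover exists, which lets the additive error be absorbed.
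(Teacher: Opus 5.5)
There is a genuine gap, located at the step you flag as delicate: the construction as specified cannot be made to work. Your gadget supporter sets $B_u$ are pairwise disjoint and each has size $q$, so $n \ge q|U|$. Making every gadget cohesive then requires $q \ge n/k \ge q|U|/k$, i.e.\ $k \ge |U|$. But for $k \ge |U|$, \textsc{Max k-Coverage} is trivial (pick one set per element), so no hardness transfers. More generally, at most $k$ pairwise disjoint groups can be cohesive, so tuning cannot rescue disjoint blocks. Padding voters who approve nothing only \emph{raise} $n/k$, which is the opposite of what you need, and duplicating elements into more blocks makes things worse.

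Your second requirement is contradictory, not merely delicate. Once $q \ge n/k$, every nonempty original has $|N_{S_j}| = q|S_j| \ge n/k$, so all originals are cohesive. Such an $S_j$ is counted iff \emph{all} its elements are covered. That contribution is neither constant nor vanishing: the number of fully covered sets can be of order $m$, and it distorts the ratio.

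The missing idea, which is the one the paper reuses from its NP-hardness reduction, is two kinds of dummy voters.
\begin{itemize}
\item A shared pool of dummy voters who approve \emph{every} candidate, large enough that every candidate's support reaches the new threshold $n'/k$. These voters approve every committee member, so they are always represented, and they do not shrink $\bigcap_{i\in N_c}A(i)$. They therefore create cohesiveness without affecting whether any candidate is (S)SJR-covered. This replaces your disjoint-block inflation, and single-voter gadgets ($q=1$) suffice.
\item One private voter per original candidate, approving only that candidate. An original is then counted iff it is itself elected.
\end{itemize}
For committees of originals (w.l.o.g., by your exchange argument), the absolute score is then exactly $\SSJR(W) = CC(W) + k$, and $\opt(f(I)) = \opt(I) + k$. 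The hypothesis $k \in o(n)$ is used only to make this additive $k$ negligible, not for cohesiveness tuning. Your closing suggestion is the right way to make that absorption rigorous: use Feige's gap instances, where YES-instances have a perfect cover, so $\opt(I) = |U|$.
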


We now give an algorithm which achieves this optimal approximation ratio for (S)SJR.

\begin{algorithm}[h]
	\SetAlgoNoLine
	\KwIn{Numbers $n,m,k\in\N$ with $k\le m$, set $V$ of $n$ voters, set $C$ of $m$ candidates.}
	\KwOut{Committee $W\subseteq C$ of size $k$.}
    Let $W=\{\}$ be an empty set\;
    	\For{$i=1,\ldots,k$}{
            \For{$c\in C \backslash W$ with $|N_c| \ge \frac nk$}{
                Compute the increase in (S)SJR-score if $c$ were added to $W$. Concretely, iterate over all candidates $c\in C$ with $|N_c| \ge \frac nk$ and check:

                For SSJR: if $N_{c'} \subseteq W \cup c$ while not $N_{c'} \subseteq W$

                For SJR: if $\cap_{i\in N_{c'}}A(i) \cap (W\cup \{c\}) \neq \emptyset$ while $\cap_{i\in N_{c'}}A(i) \cap W = \emptyset$.

                The increase in score equals the number of candidates $c'$ for which the above conditions hold. 

                Pick $c$ that maximizes the increase in score, and add it to $W$: $W=W\cup \{c\}$.
        }
    }
    \caption{\textsc{greedy-(s)sjr}}
    \label{alg:greedy2-ssjr}
\end{algorithm}


\begin{theorem}\label{thm:ratio}
Algorithm~\ref{alg:greedy2-ssjr} is $(1-1/\e)$-approximate for (S)SJR and runs in polynomial time. 
\end{theorem}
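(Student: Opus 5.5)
The plan is to run the standard greedy analysis of Nemhauser, Wolsey and Fisher for both objectives. For a committee $W$ let $f(W)$ be the numerator of the relevant score, so maximizing $f$ is equivalent to maximizing $\SJR$ or $\SSJR$. I would first show that $f$ is monotone, which is immediate for both scores. Then I would show the key averaging inequality: for every committee $W$ and every optimal committee $O$ with $|O|=k$, some candidate $c$ eligible in Algorithm~\ref{alg:greedy2-ssjr} satisfies
\begin{align}
f(W\cup\{c\})-f(W)\ \ge\ \tfrac{1}{k}\bigl(f(O)-f(W)\bigr).
\end{align}
Given this, the usual recursion $\mathrm{OPT}-f(W_{i+1})\le (1-\tfrac1k)(\mathrm{OPT}-f(W_i))$ gives $f(W_k)\ge (1-(1-\tfrac1k)^k)\,\mathrm{OPT}\ge(1-1/\e)\,\mathrm{OPT}$.

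Polynomial running time follows from Proposition~\ref{prop:SJR-score}. There are $k$ rounds, at most $m$ candidates per round, and at most $m$ cohesive candidates $c'$ to check per candidate. Each check is polynomial, since each $N_{c'}$ and each $\cap_{i\in N_{c'}}A(i)$ is computed directly from the profile.

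For SJR, write $T_{c'}=\cap_{i\in N_{c'}}A(i)$ for each cohesive $c'$. Then $f(W)=|\{c' \text{ cohesive}: T_{c'}\cap W\neq\emptyset\}|$. This is a coverage function in which candidate $c$ covers the set $\{c': c\in T_{c'}\}$, so $f$ is monotone submodular and the averaging inequality holds for some $c\in O$. One point must be checked: the algorithm only considers cohesive $c$. If $c\in T_{c'}$ then $N_c\supseteq N_{c'}$, so $|N_c|\ge n/k$ and $c$ is cohesive. Hence a non-cohesive candidate covers nothing. Such candidates can be removed from $O$ (padding with arbitrary ones) without lowering $f(O)$, and they can be ignored by greedy. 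Submodularity then yields the inequality with $c\in O$ cohesive.

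For SSJR the objective is $f(W)=|\{c' \text{ cohesive}: N_{c'}\subseteq \cup_{c\in W}N_c\}|$. This is not a coverage function, because a cohesive $c'$ may be covered only jointly by several members of $W$. Proving the averaging inequality directly is therefore the main obstacle. My first attempt would be a charging argument over the set $U$ of cohesive $c'$ covered by $O$ but not by $W$, where $|U|\ge f(O)-f(W)$. Each $u\in U$ is itself an eligible candidate, and adding $u$ makes $u$ covered. Moreover, $N_u$ is contained in the union of the groups $N_c$ for $c\in O$. I would assign each $u$ to a candidate $c\in O$ with $N_c\cap N_u\setminus\cup_{w\in W}N_w\neq\emptyset$. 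Then I would try to show that some $c\in O$, or some $u\in U$ taken as a representative of its class, has marginal gain at least $|U|/k$.

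Here I would exploit that cohesive groups have size at least $n/k$. This size bound limits how many disjoint uncovered portions the $k$ members of $O$ can jointly fill. If this pigeonhole argument does not close, the fallback is to show directly that the greedy committee dominates a committee built by choosing, in each round, the member of $O$ with the largest marginal gain. I would then compare both committees through the telescoping sum $f(O)\le f(W)+\sum_{c\in O}\bigl(f(W\cup\{c\})-f(W)\bigr)+(\text{joint-coverage correction})$, and bound the correction term using the size condition.
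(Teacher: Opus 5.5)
Your SJR argument is correct, and it is the paper's argument made rigorous. The paper simply transplants the pigeonhole step of the max-coverage analysis. You justify that step by observing that the SJR numerator is a coverage function, and that every $c\in T_{c'}$ is itself cohesive, so restricting greedy to cohesive candidates loses nothing. The running-time part is also fine.

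For SSJR there is a genuine gap, and it cannot be closed. The averaging inequality you flag as the main obstacle is false, and so is the SSJR half of the statement. The paper's proof asserts this step holds ``by the same pigeonhole principle argument as for CC'' and has exactly the same hole.

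\textbf{Counterexample.} Take $k=3$ and $n=30$, so cohesive means at least $10$ supporters. Use disjoint voter sets $T,Q$ with $|T|=|Q|=9$, together with voters $x_j,x_j'$ and $Y_j=\{y_j,y_j'\}$ for $j=1,2,3$. Define the candidates by their supporter sets:
\begin{itemize}
\item $N_{t_j}=T\cup\{x_j,x_j'\}$ and $N_{s_j}=T\cup\{x_j\}$ for $j=1,2,3$;
\item $N_{o_j}=Q\cup Y_j$ for $j=1,2,3$;
\item for all $i<j$, $y\in Y_i$, $y'\in Y_j$, one candidate with supporters $Q\cup\{y,y'\}$ (twelve such).
\end{itemize}
All $21$ candidates are cohesive. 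The $T$-side and the $Q$-side share no voters, and apart from $N_{s_j}\subseteq N_{t_j}$ no supporter set contains another.

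Hence, as long as greedy has chosen only $t$'s, each remaining $t_j$ has marginal gain $2$ (it covers $t_j$ and $s_j$), and every other candidate has gain at most $1$. Greedy therefore outputs $\{t_1,t_2,t_3\}$ and covers $6$ cohesive candidates. In contrast, $\{o_1,o_2,o_3\}$ covers all $3+12=15$ candidates on the $Q$-side. The ratio is $2/5<1-1/\e$, and enlarging the $Y_j$ (and $n$ accordingly) drives it to $0$.

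\textbf{Your planned routes.} The example defeats both of them:
\begin{itemize}
\item Every group in it has size at least $n/k$, so the size bound buys nothing.
\item Adding $o_1,o_2,o_3$ one at a time yields gains $1,5,9$. So greedy does not dominate the ``best member of $O$ per round'' committee.
\end{itemize}
The underlying reason is that $W\mapsto|\{c' \text{ cohesive}: N_{c'}\subseteq\bigcup_{c\in W}N_c\}|$ is not submodular. A cohesive candidate becomes covered only once several members of $W$ jointly fill its supporter set, so no single candidate need have gain at least $(f(O)-f(W))/k$. Only the SJR half of the theorem can be established by this greedy analysis.
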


Interestingly, it is not hard to see that the greedy CC algorithm  fails to come close to the optimal approximation ratio of (S)SJR. To see this, note that the greedy CC algorithm always picks the candidate increasing the CC score most in that round. Therefore in the first round it will always pick the largest candidate (which will be cohesive, if a cohesive group exists at all), which assures a score of at least $1/k$, for committee size $k$. However, one can easily construct an instance where this first choice blocks all other cohesive groups from being selected by greedy CC. For any committee size $k$, the ratio will therefore not exceed $1/k$.

\section{(S)SJR on restricted domains}\label{sec:domains}
Motivated by the insatisfiability of SJR and SSJR on general domains, we study satisfiability on restricted domains. Notably, most results are negative, which stands in contrast with properties of the core \cite{PierczynskiS22core}. We consider four different types of domain restrictions.

In a Candidate Interval (CI) domain, there exists an ordering of the candidates so that for each voter $v_i$, the candidates in $A_i$ form an interval. In a Candidate Extremal Interval (CEI) domain, an extra requirement that all candidates from $C\setminus A_i$ also form an interval, implies each voter's interval `touches' at least one end of the linear order of candidates. Conversely, in a Voter Interval (VI) domain, there exists an ordering of the voters so that each candidate is approved by an interval of voters, and the Voter Extremal Interval (VEI) domain is defined in an analogous way. Formal definitions can be found in Appendix~\ref{app:domains}. A CI domain occurs in scenarios where the \emph{candidate} space is one-dimensional, and so candidates can be ordered linearly, e.g. by ideology or location. A VI domain fits scenarios where \emph{voters} are structured to form a linear ordering, e.g. on the basis of demographic or socioeconomic features. 


\begin{theorem}\label{thm:domains}
\begin{itemize}
    \item[\Checkmark] \label{(s)sjr-vei}
(S)SJR is satisfiable on the VEI domain.
    \item[\XSolidBrush] \label{sjr-ci-vi-cei}
SJR is \emph{not} satisfiable on the CI, VI, and CEI domains.
    \item[\XSolidBrush] \label{prop:ssjr-ci-vi}
SSJR is \emph{not} satisfiable on the CI and VI domain.
\item[\Checkmark]\label{prop:ssjr-cei}
SSJR is satisfiable on the CEI domain.
\end{itemize}
\end{theorem}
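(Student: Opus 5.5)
The proof splits into two positive constructions (VEI for (S)SJR, CEI for SSJR) and explicit counterexamples for the negative items. Throughout I use that SJR implies SSJR. If the common approvals of a cohesive $N_c$ contain some $w\in W$, then every $i\in N_c$ approves $w$. So a positive result for SJR gives one for SSJR, and a counterexample for SSJR is also one for SJR. The degenerate case $k=1$ is handled separately in each positive item. There $n/k=n$, so every cohesive candidate is approved by all voters. Electing any cohesive candidate $c$ then satisfies SJR: for every cohesive $c'$ we have $N_{c'}=N$, so $c$ lies in $\cap_{i\in N_{c'}}A(i)$. If no cohesive candidate exists, the axioms hold vacuously. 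So I assume $k\ge 2$, and I pad committees arbitrarily to size $k$, since adding candidates never hurts.

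\emph{VEI, (S)SJR satisfiable.} Fix the voter order witnessing VEI, so every $N_c$ is a prefix or a suffix of it. Let $p$ be a candidate whose approver set is the longest prefix, and $s$ one whose approver set is the longest suffix. Set $W=\{p,s\}$. Take any cohesive candidate $c$ and suppose $N_c$ is a prefix; the suffix case is symmetric. Then $N_c\subseteq N_p$, so every $i\in N_c$ approves $p$, i.e.\ $p\in W\cap\bigcap_{i\in N_c}A(i)$. Hence SJR holds, and so does SSJR. One subtlety needs checking: an interval touching both ends is the whole voter set. It counts as both a prefix and a suffix and is handled by either case.

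\emph{CEI, SSJR satisfiable.} Fix the candidate order $c_1,\dots,c_m$. Every nonempty approval set is a prefix or a suffix, so it contains $c_1$ or $c_m$. Take $W\supseteq\{c_1,c_m\}$. Every voter with a nonempty ballot is then covered, and every voter in a cohesive group has a nonempty ballot. Thus full coverage (the CC axiom) holds. By the first item of \Cref{thm:relations}, $W$ satisfies SSJR.

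\emph{Counterexamples.} For CI and VI, for both SJR and SSJR, I use one instance with $k=3$, candidates $a,b,c,d$ and six voters with ballots $\{a\},\{a,b\},\{b\},\{c\},\{c,d\},\{d\}$. Here $n/k=2$ and each candidate has exactly two approvers, so all four candidates are cohesive. Each cohesive group contains a voter whose only approved candidate is that group's candidate. SSJR therefore forces all four candidates into a committee of size $3$, which is impossible. The candidate order $a,b,c,d$ witnesses CI, and the voter order as listed witnesses VI, since each $N_x$ is consecutive.

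For SJR on CEI I take $k=2$, candidates ordered $a,b,c$, and voters $\{a\},\{a,b\},\{b,c\},\{c\}$. Each ballot is a prefix or a suffix, so the profile is CEI. Since $n/k=2$, the candidates $a,b,c$ are all cohesive. The common approvals of their groups are $\{a\}$, $\{b\}$ and $\{c\}$ respectively, so SJR needs all three candidates with $k=2$. This instance is consistent with the CEI positive result for SSJR: $\{a,c\}$ covers every voter. The only real obstacle is finding these small instances; verifying the domain conditions and the cohesiveness counts is routine.
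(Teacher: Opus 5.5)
Your proof is correct and follows essentially the same approach as the paper. The longest-prefix/longest-suffix pair for VEI and the endpoints $c_1,c_m$ for CEI are the paper's constructions, and your CI/VI instance is the paper's VI instance, which you rightly note is also CI; deriving the SJR failures there via SJR $\Rightarrow$ SSJR and using a separate four-voter CEI instance (instead of the paper's single staircase instance $\{a\},\{a,b\},\{a,b,c\},\{b,c,d\},\{c,d\},\{d\}$ with $k=2$) are only cosmetic differences, while your explicit $k=1$ case fills a step the paper's VEI argument skips. One small slip is that in the CEI item ``full coverage (the CC axiom) holds'' is false when some voter has an empty ballot, but you do not need it, since your preceding sentence (voters in cohesive groups have nonempty ballots, hence are covered) already gives SSJR directly.
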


Although satisfiability remains problematic, we see more positive result for efficient maximization.

\begin{theorem}\label{thm:SJR-max-CI-poly}
Maximizing the (S)SJR-score can be efficiently done on the Candidate Interval (CI) domain.
\end{theorem}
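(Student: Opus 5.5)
The plan is to exploit one structural fact of the Candidate Interval domain: every relevant condition depends only on the two committee members nearest to a candidate on each side. I would then optimize with a dynamic program over consecutive committee members.

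First, fix a CI ordering $c_1 < \dots < c_m$ of the candidates. If the ordering is not part of the input, it can be computed in polynomial time, since recognizing CI profiles is the consecutive-ones property, solvable with PQ-trees. For each cohesive candidate $c$ (i.e.\ $|N_c|\ge n/k$), note two facts.

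\begin{itemize}
\item \textbf{SJR.} $c$ counts in $\SJR(W)$ iff $W$ hits $I_c := \cap_{i\in N_c}A(i)$. This set is an intersection of intervals that all contain $c$, so it is itself an interval containing $c$.
\item \textbf{SSJR.} $c$ counts in $\SSJR(W)$ iff, for every $i\in N_c$, $W$ hits $A(i)$. Each such $A(i)$ is again an interval containing $c$.
\end{itemize}

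The key locality lemma is the following. Let $\ell_W(c)$ be the largest element of $W$ that is $\le c$, and $r_W(c)$ the smallest element of $W$ that is $\ge c$ (either may be undefined). Then an interval $J \ni c$ meets $W$ iff it contains $\ell_W(c)$ or $r_W(c)$. Indeed, if $J$ contains some $w\in W$ with $w\le c$, then by convexity $J \supseteq [w,c] \ni \ell_W(c)$, and symmetrically on the right. Consequently, whether a cohesive $c$ is counted, under either score, is a predicate $P(c;\ell,r)$ that depends only on the pair $(\ell_W(c), r_W(c))$.

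Next, the dynamic program. Add sentinels $\bot$ (left) and $\top$ (right), meaning ``no committee member on that side''. For committee members $w < w'$ that are consecutive in $W$, define the gain $g(w,w')$ as the number of cohesive candidates $c$ with $w \le c < w'$ satisfying $P(c;\ell,r)$, where $\ell = w$ and $r = w'$, except that for $c = w$ both neighbours equal $w$. Define $g(\bot,w')$ and $g(w,\top)$ analogously. Each $g$ value takes polynomial time to compute by checking, for each $c$ in the range, whether $I_c$ (respectively each voter interval $A(i)$ with $i \in N_c$) contains $\ell$ or $r$.

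Let $F(j,w)$ be the maximum total gain over committees of size $j$ whose largest element is $w$, counting cohesive candidates $< w$ plus the partial segment. The recurrences are
\begin{itemize}
\item $F(1,w) = g(\bot,w)$;
\item $F(j,w') = \max_{w<w'} \bigl(F(j-1,w) + g(w,w')\bigr)$.
\end{itemize}
The optimum is $\max_w F(k,w) + g(w,\top)$, divided by the (fixed) number of cohesive candidates. The table has $O(km)$ entries and $O(km^2)$ transitions, each with a polynomial-time gain, so the whole procedure is polynomial. Correctness follows because the segments $[\bot,w_1), [w_1,w_2), \dots, [w_k,\top]$ partition the candidates, and by the locality lemma each cohesive candidate's status is determined within its own segment.

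I expect the main obstacle to be the careful statement and proof of the locality lemma for SSJR, together with the endpoint bookkeeping. Specifically, I need to treat candidates lying exactly on a committee member, which are trivially covered, and candidates outside the hull of $W$, which have only one neighbour. I also need to make sure the segments are half-open so that each cohesive candidate is counted exactly once. The SJR case is then the same DP with $P$ instantiated by the single interval $I_c$.
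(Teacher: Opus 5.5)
Your argument is correct, and it takes a genuinely different route from the paper.

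\textbf{The paper's route.} For each candidate $c$, the paper builds the set $T(c)$ of cohesive candidates that $c$ alone SJR-covers. It asserts that $T(c)\cup\{c\}$ is an interval of the CI ordering and passes the result, as a black box, to the \textsc{Interval Maximum Coverage} algorithm of Cohen and Gonen. It declares the SSJR case ``analogous''.

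\textbf{Your route.} You work from the other side. For each cohesive $c$, the relevant sets are intervals containing $c$: either $\bigcap_{i\in N_c}A(i)$, or each $A(i)$ with $i\in N_c$. Your locality lemma then makes the status of $c$ depend only on its two nearest committee members, and a chain DP finishes the job.

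\textbf{What your route buys.}
\begin{enumerate}
\item SSJR is handled properly. A cohesive candidate can be covered only jointly: take $a<c<b$ with ballots $\{a,c\}$ and $\{c,b\}$, $n=k=2$, and $W=\{a,b\}$. So SSJR is not a coverage objective with one set per committee member, whereas your predicate encodes exactly $N_c\subseteq N_{\ell}\cup N_{r}$.
\item You avoid the paper's interval claim for $T(c)\cup\{c\}$, which fails for some CI orderings. Take the order $c<d'<d$, three voters approving $\{c,d',d\}$, one voter approving $\{d'\}$, and $k=2$. All three candidates are cohesive, yet $T(c)=\{c,d\}$ skips $d'$.
\end{enumerate}
What does hold is that the SJR-coverers of each cohesive $d$ form the interval $\bigcap_{i\in N_d}A(i)$. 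So SJR maximization amounts to choosing $k$ points that stab as many of these intervals as possible, which is exactly what your DP solves.

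\textbf{What the paper's route buys.} Its reduction is shorter, and it is reused with $0/1$ weights for the FPT result. Your DP would need a small adaptation to cover that case.

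\textbf{Wording fix.} $F(j,w)$ counts exactly the cohesive candidates $c<w$, which is what your recurrences compute, so drop the phrase ``plus the partial segment''.
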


However, in many cases there will be some problematic candidates that keep the domain from being Candidate Interval. Luckily, collecting them in a set, we can find an FPT algorithm for (S)SJR optimization anyway. Given an election $E = (N,C,\boldsymbol{A},k)$, we say that $D\subseteq C$ is a \emph{candidate-deletion set} if the removal of these candidates makes the election $E' = (N,C\setminus D,\boldsymbol{A},k)$ have CI preferences.

\begin{theorem}\label{thm:fpt}
Assuming we are given the candidate-deletion set $D$, there exists an FPT algorithm to perform (S)SJR maximization with respect to $|D|$.
\end{theorem}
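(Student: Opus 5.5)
The plan is to guess which candidates of $D$ enter the committee, and then solve the remaining problem with the polynomial-time algorithm of \Cref{thm:SJR-max-CI-poly}, adapted to account for those fixed candidates. Concretely, I enumerate all subsets $S\subseteq D$ with $|S|\le k$; there are at most $2^{|D|}$ of them. For each $S$ I look for a best committee $W$ with $W\cap D=S$, so $W=S\cup W'$ with $W'\subseteq C\setminus D$ and $|W'|=k-|S|$. The output is the best committee over all guesses, with scores compared using \Cref{prop:SJR-score}. This gives running time $2^{|D|}\cdot\mathrm{poly}(n,m)$, provided each subproblem can be solved in polynomial time.

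The key point is that cohesiveness is defined with respect to the original instance. The set of cohesive candidates $c$ (those with $|N_c|\ge n/k$) and their groups $N_c$ are fixed once and for all. Cohesive candidates in $D$ remain as \emph{targets} to be covered, but they can only be covered by members of $S\cup W'$.

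For each guess $S$, I first mark the targets already satisfied by $S$:
\begin{itemize}
\item for SJR, those $c$ with $\cap_{i\in N_c}A(i)\cap S\neq\emptyset$;
\item for SSJR, I remove from consideration the voters already covered by $S$, i.e.\ $N' = N\setminus\bigcup_{c'\in S}N_{c'}$.
\end{itemize}
The residual task is to choose $k-|S|$ candidates from $C\setminus D$ that maximize the number of remaining targets covered. For SJR a target $c$ requires $W'$ to hit the set $T_c=\cap_{i\in N_c}A(i)\setminus D$. Since every $A(i)\setminus D$ is an interval in the CI order of $C\setminus D$, and an intersection of intervals is again an interval, each $T_c$ is an interval. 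For SSJR a target $c$ requires every voter $i\in N_c\cap N'$ to be hit, where each $A(i)\setminus D$ is an interval. In both cases the residual problem has the same interval-hitting structure as the problem solved in \Cref{thm:SJR-max-CI-poly}. That proof is presumably a left-to-right dynamic program over the CI order, with states (position of the last chosen candidate, number chosen so far). It should go through unchanged once the target family and the ground set of voters are replaced as above. Targets outside $C\setminus D$ pose no problem, since the DP only ever refers to the intervals they induce, not to their own position.

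The main obstacle is justifying that the DP of \Cref{thm:SJR-max-CI-poly} really applies to this more general input. That input consists of arbitrary interval constraints coming from targets, possibly including targets in $D$ whose constraints are empty or already satisfied, and a voter set that has been pruned. This is a different input from an instance in which the targets are exactly the candidates in the order. My first approach would be to restate that theorem's algorithm abstractly: maximize the number of satisfied families of intervals using $k'$ points on a line. For SJR this is a single interval per target, handled by a standard DP that counts an interval when its rightmost chosen point is placed. For SSJR it is a set of intervals per target, all of which must be hit. I would argue that the DP there only needs each target to be decidable at the moment its last relevant position is passed, together with the last chosen point. If the original proof relies on something specific to targets being candidates in the order, I would instead encode each $D$-target's constraint by inserting a dummy non-electable candidate into the order, which preserves the CI property.
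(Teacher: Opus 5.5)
Your SJR argument works, including cohesive candidates of $D$ as targets via the intervals $T_d$. The SSJR case, however, breaks exactly where you say targets in $D$ ``pose no problem''.

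\textbf{Why targets in $C\setminus D$ are fine.} For a target $c\in C\setminus D$, every interval $A(i)\setminus D$ with $i\in N_c$ contains $c$. Let $p$ be the largest chosen candidate $\le c$ and $q$ the smallest chosen candidate $\ge c$. Then all these intervals are hit iff each has its left end $\le p$ or its right end $\ge q$. So a DP with state (last chosen candidate, number chosen) can settle $c$ when it jumps from $p$ to $q$.

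\textbf{Why targets in $D$ break this.} A cohesive $d\in D$ has no position in the order, and the intervals $A(i)\setminus D$, $i\in N_d$, need not share a point. Take $C\setminus D=\{1,\dots,5\}$ in this order, $D=\{d\}$, $k=3$, and ballots $\{1,2\},\{2,3\},\{3,4\},\{4,5\},\{1,d\},\{5,d\}$. Then $d$ is cohesive. With $S=\emptyset$ it is satisfied iff both $1$ and $5$ are elected, yet when the DP places $5$ its state does not record whether $1$ was chosen.

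Your fallback fails on the same instance. A dummy candidate approved by $N_d$ would have to be adjacent to both $1$ and $5$, while the other ballots force $1,2,3,4,5$ to be consecutive.

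You also cannot inherit the DP from \Cref{thm:SJR-max-CI-poly}. The paper's proof there is not a DP but a reduction, written for SJR only, to Interval Maximum Coverage solved by the Cohen--Gonen algorithm. An SSJR target is satisfied by the committee jointly rather than by a single member, so you have to write the SSJR DP yourself.

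\textbf{The repair.} There are at most $|D|$ such targets, so add a second guess.
\begin{enumerate}
\item Guess the set $R$ of cohesive candidates in $D$, not already satisfied by $S$, that $W'$ must satisfy. Equivalently, carry a $|D|$-bit mask in the DP state.
\item Require every interval $A(i)\setminus D$ with $d\in R$ and $i\in N_d\setminus\bigcup_{c'\in S}N_{c'}$ to be hit. This is a local check on consecutive choices: no such interval may lie strictly between two consecutive chosen candidates, before the first, or after the last.
\item Add $|R|$ to the DP value and maximize over all pairs $(S,R)$.
\end{enumerate}
This maximization is exact. Each computed value is a lower bound on the true score of the committee it yields, and the guess matching an optimal committee attains the optimum. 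The running time is $4^{|D|}\mathrm{poly}(n,m)$, still FPT.

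\textbf{Comparison with the paper.} The paper's proof has your outer structure: guess $D'\subseteq D$, give weight $0$ to targets already covered by $D'$, and call Cohen--Gonen. But it spells out only SJR, and only candidates of $C\setminus D$ appear as weighted points, so cohesive candidates in $D$ never enter the inner optimization. Your explicit intervals $T_d$ are the more careful treatment for SJR. The extra guess above is what carries that treatment over to SSJR.
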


\section{Distinctive Representation}\label{sec:DR}
While SJR and SSJR guarantee that all cohesive groups are represented by \emph{some} candidate, they do not provide any guarantees on the quality of this representation. In particular, representing candidates do not necessarily have to capture the unique and defining opinions of the group. Instead, we might end up in a situation where all groups are represented by benign candidates that are broadly acceptable to everyone, as selecting just one uniformly approved candidate suffices to satisfy SJR and SSJR. To circumvent this problem, we introduce a new diversity axiom that strengthens JR in a different direction, and that we call \emph{Distinctive Representation} (DR).



            




\begin{definition}[Distinctive Representation]
    A committee $W$ satisfies \emph{Distinctive Representation} (DR) if $(i)$ it satisfies JR, and $(ii)$ $|N_{c}| = \lceil\nicefrac{n}{k}\rceil$ holds for every candidate $c\in W$.
\end{definition}


\begin{observation}\label{prop:dr-sat}
Not all instances allow for a committee that satisfies Distinctive Representation. Even on CEI/VEI instances, DR is insatisfiable.
\end{observation}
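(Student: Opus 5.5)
The plan is to exhibit one small explicit instance and check three things about it: it is CEI, it is VEI, and no committee of size $k$ satisfies DR. A single example settles both claims of Observation~\ref{prop:dr-sat}, since an instance in CEI $\cap$ VEI is in particular a general instance. The easiest route is to break condition $(ii)$ of DR. If every committee of size $k$ must contain some candidate $c$ with $|N_c|\neq\lceil n/k\rceil$, then DR fails no matter what happens with JR.

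\textbf{Step 1 (construction).} Take $n=4$ voters $v_1,\dots,v_4$, committee size $k=2$, and two candidates: $a$ approved by all four voters, and $b$ approved by $v_1,v_2,v_3$. Then $\lceil n/k\rceil=2$. Since $m=2=k$, the only committee of size $k$ is $W=\{a,b\}$. This committee has $|N_a|=4\neq 2$, so condition $(ii)$ fails and no committee satisfies DR.

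This example is not degenerate with respect to JR. Both $a$ and $b$ are cohesive candidates (since $4\ge 2$ and $3\ge 2$), and $W$ satisfies JR. So what fails is exactly the distinctiveness requirement. If one wants a version with $m>k$, one can add more candidates, each approved by at least $3$ of the $4$ voters. Every committee of size $2$ then consists only of candidates with approval count different from $2$.

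\textbf{Step 2 (domain membership).} Order the candidates as $a,b$. Each approval set, either $\{a,b\}$ or $\{a\}$, is an interval, and its complement, either $\emptyset$ or $\{b\}$, is also an interval. So the instance is CEI. Order the voters as $v_1,v_2,v_3,v_4$. Then $N_a=\{v_1,\dots,v_4\}$ and $N_b=\{v_1,v_2,v_3\}$ are intervals, and their complements $\emptyset$ and $\{v_4\}$ are intervals touching an end. So the instance is VEI.

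\textbf{Expected obstacle.} The only delicate point is checking the extremal-interval conditions against the formal definitions in Appendix~\ref{app:domains}, for instance whether the empty complement counts as an interval. If it does not, I would add a dummy candidate approved by nobody at the end of the candidate order, and a voter approving nothing at the end of the voter order. I would then recheck the arithmetic: with $n=5$ we get $\lceil 5/2\rceil=3$, so the dummy candidate has $0\neq 3$ approvals and $|N_a|=4\neq 3$, while $|N_b|=3$. The committee $\{a,b\}$ still contains $a$, and any committee containing the dummy violates $(ii)$ as well. Hence DR remains unsatisfiable, and the argument goes through unchanged.
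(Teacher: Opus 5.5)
Your proof is correct and follows essentially the paper's route. The paper also defeats condition $(ii)$ with a CEI/VEI witness, namely the instance in which every voter approves every candidate (so for $n,k\ge 2$ every elected $c$ has $|N_c|=n\neq\lceil n/k\rceil$), and that witness likewise counts the empty complement as an interval, so your main example needs no repair. Drop the fallback, though, because it defeats itself: if $\emptyset$ were not an interval, the added voter's ballot and the dummy candidate's supporter set would both be $\emptyset$, and these would themselves violate CEI and VEI.
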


This insatisfiability makes optimization a natural objective. However, we will see that (unless P=NP) we cannot optimize efficiently for any optimization function that equals zero when DR is met. We prove this with a reduction from Exact 3-Cover (X3C). We define both problems below.

\begin{definition}[DR Decision Problem]
In the \textsc{Distinctive Representation} decision problem we are given an instance $I=(V,C,A,k)$, where $V$ is a group of $n$ voters, $C$ is a set of $m$ candidates, $A:V\to \mathcal{P}(C)$ is a function that maps every voter to their approval set $A(i)$ of candidates, and $k\in\{1,\ldots,m\}$ is the committee size. Then $I\in \textsc{DR}\iff \exists W\subseteq C \text{ s.t. } |W|=k\, \land\, W \text{ satisfies JR }\land\, |N_c|=\lceil n/k\rceil \forall c\in W$.
\end{definition}

\begin{definition}[Exact 3-Cover Decision Problem]
In the \textsc{Exact 3-Cover} (X3C) decision problem we are given an instance $I=(U,\mathcal{S})$, where $U$ is a set of $3q$ elements, for some $q\in\mathbb{N}$, and $\mathcal{S}$ is a set of 3-element subsets of $U$. Then $I\in\textsc{X3C}\iff \exists S\subset \mathcal{S}$ of $q$ elements, for which $\cup\,S = U\,\land s_i\cap s_j=\emptyset\, \forall\, s_i,s_j\in S$.
\end{definition}

\begin{theorem}\label{thm:dr-hard}
The Distinctive Representation Decision Problem is NP-hard.
\end{theorem}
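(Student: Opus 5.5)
The plan is a polynomial reduction from X3C, which is NP-complete. Membership in NP is not needed here (the statement only claims hardness), but it would be easy anyway: JR of a given committee can be checked in polynomial time, as in Proposition~\ref{prop:SJR-score}. Given an X3C instance $(U,\mathcal S)$ with $|U|=3q$, we intend to build a DR instance with $k=q$. There is one \emph{element voter} for each $u\in U$ and one \emph{set candidate} $c_s$ for each $s\in\mathcal S$. Each $c_s$ is approved by the three element voters of $s$ and by a common padding pool $Q$ of dummy voters. The size of $Q$ is chosen so that every set candidate has exactly $|N_{c_s}|=3+|Q|=\lceil n/k\rceil=:L$ approvals. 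Condition $(ii)$ of DR is then a pure size filter that set candidates pass.

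The extra ingredient is a family of \emph{gadget} candidates whose cohesive groups force, through JR, every element voter to be covered. There is also a \emph{budget} argument. Every admissible candidate must have size exactly $L$, and we will arrange that only set candidates have size $L$. Each set candidate covers at most three element voters, and $k=q$. So covering all $3q$ element voters is possible only if the chosen sets are pairwise disjoint, i.e.\ form an exact cover. Conversely, an exact cover $S$ gives the committee $W=\{c_s:s\in S\}$. It has size $q$ and all members have size $L$. It covers every element voter and every voter of $Q$, and we then check that every cohesive group contains a covered voter.

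The forward direction (exact cover $\Rightarrow$ DR committee) and the budget counting are routine once the sizes work out. The arithmetic step is to choose $|Q|$ and the gadget sizes so that $n$ lies in $(q(L-1),qL]$. This pins down $\lceil n/k\rceil=L$.

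\textbf{Main obstacle: the gadget.} JR only asks that \emph{some} voter of each cohesive group be represented. Moreover, every subgroup of size at least $n/k$ sharing a candidate is itself cohesive. A naive gadget $d_u$ approved by $u$ and dummy voters $P_u$ therefore fails. Either $|P_u|\ge n/k$, and then $P_u$ alone is an uncovered cohesive group. Or $|N_{d_u}|=L$, and then $d_u$ itself is electable and gives a cheap way to cover $u$.

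My first attempt to resolve this is to make the dummies of the gadgets overlap with the set-candidate pool. The dummies of $d_u$ would be voters of $Q$, who are automatically covered by any set candidate. Then every cohesive subgroup of $N_{d_u}$ is represented anyway, and the force must instead come from making $|N_{d_u}|\neq L$ while $d_u$ is still cohesive. If that fails, the fallback is to replace each element by a block of $M$ copies. Gadget groups are then built from blocks of distinct elements, so that any cohesive subgroup must contain element voters whose coverage is then forced. The sizes would be tuned so that $n/k$ is non-integral and only set candidates hit $\lceil n/k\rceil$ exactly. I expect most of the work to lie in this size bookkeeping.
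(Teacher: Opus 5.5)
Your skeleton (set candidates padded by a common pool $Q$ to size $L=\lceil n/k\rceil$, $k=q$, and the budget argument) is the same as the paper's, and your diagnosis of the obstacle is right. But the proposal stops at the step that carries the proof, so there is a genuine gap: \textbf{no working gadget is given.}

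\textbf{Why your two attempts do not close it.} Your first attempt fails outright. If the gadget's dummies lie in $Q$, any elected set candidate represents all of $Q$. Then $N_{d_u}$ has at most one unrepresented voter, so JR imposes nothing on $u$, whatever $|N_{d_u}|$ is. Your fallback is too vague to check. With gadgets made of element voters only, it cannot work. Some no-instances of X3C admit $q$ sets leaving exactly one element $u$ uncovered. To catch that committee, some $N_g$ must contain at least $n/k$ voters of $u$'s block. But then no set candidate, which contains three blocks, can have size $L$.

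\textbf{The missing idea} is a pool of \emph{always} unrepresented voters, which you never introduce.
\begin{itemize}
\item Take one shared pool $P$ with $|P|=L-1$, whose voters approve only gadgets.
\item Set $N_{g_u}=B_u\cup P$, where $B_u$ is a block of $M\ge 2$ copies of $u$. Set candidates are approved by the three blocks of their set and by $Q$.
\item Then $|N_{g_u}|=L-1+M>L$, so gadgets are never electable and $P$ is never represented.
\item $P$ alone has fewer than $n/k$ voters. An unrepresented $B_u$, however, makes $N_{g_u}$ an unrepresented cohesive group.
\end{itemize}
Concretely, take $M=2$, $|Q|=1$, $|P|=6$. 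This gives $n=6q+7$ and $L=7$ for $q\ge 7$; smaller $q$ can be brute-forced. Given an exact cover, only the six voters of $P$ are unrepresented, while every cohesive group has at least seven members. Conversely, a DR committee consists of $q$ set candidates, and JR on the gadgets forces them to cover every block, i.e.\ to be an exact cover.

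\textbf{This is exactly where the paper's own proof breaks.} Its reduction is the $M=1$ case of this template: $Q=\{d_1\}$, $P=\{d_2,d_3,d_4\}$, and element candidates $c'_i=\{v_i,d_2,d_3,d_4\}$. That is the naive gadget you rejected. Every candidate there has exactly $4=\lceil n/k\rceil$ supporters, so condition $(ii)$ of DR is vacuous. DR then coincides with JR, which is always satisfiable. For instance, a committee containing one set candidate and one element candidate represents $d_1,\dots,d_4$. Every supporter set contains some $d_j$, so JR holds even for a no-instance of X3C. The paper's ``no'' direction tacitly considers only committees of set candidates. Duplicating each element voter, as above, repairs it.
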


\section{Local DR}\label{sec:LocalDR}
\Cref{thm:dr-hard} says we cannot optimize DR in polynomial time (unless P = NP). Instead of studying approximations for DR as an optimization problem\footnote{Indeed, it is not obvious with respect to which function this would be. We could, for instance, take a sum $\sum_{c\in W}||N_c|-\lceil n/k\rceil|$ or a product $\Pi_{c\in W}||N_c|-\lceil n/k\rceil|$.}, we define a slightly weaker, locally optimal axiom that \emph{is} satisfiable and that we can satisfy in polynomial time: Local Distinctive Representation.
\begin{definition}[Local DR]
A committee $W$ satisfies \emph{Local Distinctive Representation} (local DR) if 
    \begin{enumerate}
        \item $W$ satisfies JR, and
        \item there exist no two candidates $c\in W,d\notin W$ such that $||N_{c}|-\lceil n/k\rceil|> ||N_{d}|- \lceil n/k\rceil|$ while $(W\cup\{d\})\setminus\{c\}$ still satisfies JR, unless $|N_c|\ge n/k$ and $|N_d|<n/k$.
    \end{enumerate}  
\end{definition}

We present below a local search algorithm that satisfies Local DR. For any two candidates $c\in W,\, c'\notin W$, write $\Delta(W,c',c):=||N_c|-\lceil n/k\rceil|-||N_c'|-\lceil n/k\rceil|$.

\begin{algorithm}[h]
	\SetAlgoNoLine
	\KwIn{Numbers $n,m,k\in\N$ with $k\le m$, set $V$ of $n$ voters, set $C$ of $m$ candidates.}
	\KwOut{Committee $W\subseteq C$ of size $k$.}
    Choose any random starting committee $W\subseteq C$ with $|W|=k$ that satisfies JR (e.g. using greedy-CC rule), and pick any two candidates $c\in W$ and $c'\notin W$\;
    	\Repeat{$\Delta(W,c',c)\le 0$}{
        \begin{enumerate}
            \item $W=(W\cup \{c'\})\setminus\{c\}$.
            \item From all $(c,c')$ for which $(W\cup \{c'\})\setminus\{c\}$ satisfies JR and for which $|N_{c'}|<n/k\implies |N_c|<n/k$, pick $(c',c)\in\argmax_{(x,y)\in (C\setminus W)\times W}\Delta(W,x,y)$.
        \end{enumerate}
        }
    \caption{\textsc{ls-DR}}
    \label{alg:ls-DR}
\end{algorithm}

\begin{proposition}\label{prop:local-dr-alg}
The LS-DR algorithm satisfies Local DR and runs in polynomial time.
\end{proposition}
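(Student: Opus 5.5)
We prove the two claims separately: correctness, meaning the output satisfies Local DR, and a polynomial bound on the running time.

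\emph{Correctness.} We first show that every committee the algorithm holds satisfies JR. The initial committee does by construction; greedy-CC yields JR by \citet{aziz_justified_2017}. Every swap performed is chosen among pairs for which $(W\cup\{c'\})\setminus\{c\}$ satisfies JR, so JR is preserved inductively. Condition 1 of Local DR therefore holds at termination.

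For condition 2, suppose the loop stops with $W$ and assume there are $c\in W$, $d\notin W$ violating it. Then $||N_c|-\lceil n/k\rceil|>||N_d|-\lceil n/k\rceil|$, which means $\Delta(W,d,c)>0$. Also, $(W\cup\{d\})\setminus\{c\}$ satisfies JR. Finally, the exception fails, so it is not the case that $|N_c|\ge n/k$ and $|N_d|<n/k$. This is exactly the filter $|N_d|<n/k\implies |N_c|<n/k$ used in step 2. Hence $(d,c)$ is an admissible pair with positive $\Delta$. The chosen argmax pair then has $\Delta>0$, contradicting the stopping condition $\Delta(W,c',c)\le 0$.

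One bookkeeping point needs care. The first execution of step 1 uses an arbitrary initial pair, which might break JR or increase the potential. We handle this by reading the loop so that the initial pair is only swapped if it is admissible, or equivalently by treating the initial $W$ as the committee after that swap. We should also note what happens if no admissible pair exists: the argmax is empty, we set $\Delta=0$, and the loop stops. In that case condition 2 holds vacuously.

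\emph{Running time.} We use the potential
\[
\Phi(W)=\sum_{c\in W}\bigl||N_c|-\lceil n/k\rceil\bigr|.
\]
Each real iteration swaps a pair with $\Delta>0$. Since all quantities are integers, $\Delta\ge 1$, so $\Phi$ decreases by at least $1$ per iteration. Moreover, $0\le\Phi\le kn$, so there are at most $kn+1$ iterations.

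Each iteration examines at most $k(m-k)$ pairs. For each pair, checking JR on $(W\cup\{c'\})\setminus\{c\}$ takes polynomial time: for every candidate $x$ with $|N_x|\ge n/k$, we check whether some voter in $N_x$ approves a member of the new committee. This costs $O(mnk)$. Checking the filter and computing $\Delta$ take $O(1)$ once the values $|N_c|$ have been precomputed. The total running time is therefore polynomial in $n$, $m$ and $k$.

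The main obstacle is the exact correspondence between the exception clause in Local DR and the algorithm's filter, together with the ill-defined first swap. The potential argument itself is routine.
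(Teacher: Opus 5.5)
Your proof has the same structure as the paper's. Local DR holds by construction, and you make explicit what the paper leaves implicit: every performed swap preserves JR, and the filter $|N_{c'}|<n/k\implies|N_c|<n/k$ is exactly the negation of the exception clause. Termination follows from the potential $\Phi(W)=\sum_{c\in W}\bigl||N_c|-\lceil n/k\rceil\bigr|$, which drops by the integer $\Delta\ge 1$ at each swap and is at most $kn$. You are right that the arbitrary first swap is a defect of the algorithm as written, which the paper does not address. Only your first repair is sound: perform that swap only if it is admissible, or simply begin with step~2 on the JR starting committee. Declaring the post-swap committee to be the start can yield a non-JR output if the loop halts immediately.

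The step that is actually wrong is your JR test. Checking, for each $x$ with $|N_x|\ge n/k$, whether some voter in $N_x$ approves a committee member does not decide JR. JR quantifies over all cohesive groups, including proper subsets of $N_x$. For instance, take $n=4$, $k=2$, $A(1)=\{x,y\}$, $A(2)=A(3)=\{x\}$, $A(4)=\{y,z\}$ and committee $\{y,z\}$. Your test accepts this committee, yet $\{2,3\}$ is a cohesive group with no representative. Restricting attention to full supporter sets is valid for SJR and SSJR (Propositions~\ref{prop:sjr-score} and~\ref{prop:ssjr-score}), but not for JR. The correct test is: for every candidate $x$, count the voters $i\in N_x$ with $A(i)\cap W=\emptyset$ and require this number to be below $n/k$. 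This costs the same $O(mnk)$, so your running-time bound stands once the test is replaced. As you stated it, though, the procedure would accept non-JR swaps, and the output could then fail condition~1 of Local DR.
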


\begin{proposition}\label{prop:find-dr-committee}
There exist instances where a DR committee exists, but LS-DR fails to find it.
\end{proposition}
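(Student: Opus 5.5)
The plan is to give an explicit counterexample. I would build a small instance with a DR committee $W^*=\{a,b\}$ and a second JR committee $W_0=\{x,y\}$ that is a \emph{local optimum} of LS-DR. Since Algorithm~\ref{alg:ls-DR} may start from any JR committee, it suffices to start it at $W_0$. The algorithm only performs single swaps $(c,c')$ with $\Delta(W,c',c)>0$ that preserve JR (and respect the size side-condition). So if every improving single swap out of $W_0$ destroys JR, the loop terminates immediately. It then outputs $W_0$, which is not DR because $|N_x|,|N_y|\neq\lceil n/k\rceil$.

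I would take $k=2$ and $n=4t$, so that $\lceil n/k\rceil=n/k=2t$. The DR committee will be $N_a=\{1,\dots,2t\}$ and $N_b=\{2t+1,\dots,4t\}$. These have distance $0$ and together cover all voters, so $\{a,b\}$ satisfies JR and hence DR. Then I choose $x,y$ with $|N_x|=|N_y|=2t+1$, so both have distance $1$ and are cohesive, and $N_x\cup N_y$ meets every cohesive group. The only candidates with strictly smaller distance are $a$, $b$ and any other distance-$0$ candidates I add. The side-condition is vacuous for these swaps, since both $x$ and $y$ are cohesive.

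Next I add \emph{blocker} candidates of size exactly $2t$, one for each of the four improving swaps $x\to a$, $x\to b$, $y\to a$, $y\to b$. For the swap $x\to a$ I need a cohesive candidate $g$ with $N_g\cap N_x\neq\emptyset$ and $N_g\cap(N_a\cup N_y)=\emptyset$. Then $\{a,y\}$ violates JR because of $g$, while $W_0$ does not. Blockers are also cohesive and at distance $0$, so they are themselves potential swap targets. I therefore need every swap of $x$ or $y$ for a blocker to break JR too. I must also keep $\{a,b\}$ JR, which is automatic since $N_a\cup N_b=N$.

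The main obstacle is satisfying all these disjointness constraints at once. In particular, each blocker must avoid $N_y$ (or $N_x$) while both $N_a\cup N_y$ and the blocker set fit inside $4t$ voters. I would first try $t=3$ or $t=4$ and place $x$ and $y$ "across" the $a/b$ split: for instance, $N_x$ takes most of $N_a$ plus one voter of $N_b$, and $N_y$ takes most of $N_b$ plus one voter of $N_a$. The blockers would then use the few voters left uncovered, possibly adding a couple of extra voters that lie only in a blocker and in one of $a,b$. If the constraints turn out infeasible with $k=2$, I would move to $k=3$ with a third shared committee member, which gives more room. Finally, I would verify by exhaustive enumeration of all swaps from $W_0$ that each one is either non-improving or breaks JR.
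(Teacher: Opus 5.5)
Your strategy matches the paper's: exhibit an instance that has a DR committee together with a JR committee that is not DR and admits no improving, JR-preserving swap, and start LS-DR there. The gap is that you never produce such an instance, and the one you sketch cannot exist.

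Blocking $x\to a$ requires a cohesive group disjoint from $N_a\cup N_y$. Since $N_a\cup N_b=N$, such a group lies in $N_b\setminus N_y$ and has at least $2t=|N_b|$ voters. So it equals $N_b$ and $N_y\cap N_b=\emptyset$, whence $|N_y|\le |N_a|=2t$, contradicting $|N_y|=2t+1$.

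Changing the sizes or adding voters does not rescue $k=2$ either. Suppose $\{a,b\}$ is DR and $W_0=\{x,y\}$ is a JR local optimum with $|N_x|\neq\lceil n/2\rceil$.
\begin{itemize}
\item Then $a,b\notin W_0$, since otherwise swapping $x$ for the missing one yields the JR committee $\{a,b\}$.
\item Blocking $x\to a$ needs a cohesive group disjoint from $N_a\cup N_y$, hence inside $N\setminus N_a$, which has only $\lfloor n/2\rfloor$ voters. So $n$ is even, that group is all of $N\setminus N_a$, and $N_y\subseteq N_a$. Symmetrically, $N_y\subseteq N_b$.
\item If $|N_y|\neq n/2$, the swap $y\to a$ is improving, passes the side condition ($a$ is cohesive), and keeps JR because $N_x\cup N_a\supseteq N_x\cup N_y$.
\item If $|N_y|=n/2$, then $N_y=N_a=N_b$, and $\{a,b\}$ misses the cohesive group $N\setminus N_a$, contradicting JR.
\end{itemize}
So for $k=2$, every JR local optimum is DR as soon as a DR committee exists. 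Your ``fallback'' to $k\ge 3$ is therefore the entire content of the proof, and the proposal leaves it undone.

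The idea that makes the paper's construction work is to let the DR members block each other instead of adding separate blockers. Take $n=24$ and $k=4$:
\begin{itemize}
\item $N_{c_5}=\{v_{13},\dots,v_{18}\}$, $N_{c_6}=\{v_{19},\dots,v_{24}\}$ and $N_{c_7}=N_{c_5}\cup N_{c_6}$;
\item $c_2,c_3,c_4$ share the core $\{v_5,\dots,v_9\}$ plus one private voter each from $\{v_{10},v_{11},v_{12}\}$;
\item a non-cohesive $c_1$ with $N_{c_1}=\{v_1,\dots,v_4\}$ pads $n$.
\end{itemize}
In $W_0=\{c_2,c_3,c_4,c_7\}$ the three near-duplicates have distance $0$, so they are never swapped out. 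Replacing $c_7$ by $c_5$ leaves the cohesive group $N_{c_6}$ unrepresented, and replacing it by $c_6$ or $c_1$ leaves $N_{c_5}$ unrepresented. Yet $\{c_2,c_3,c_5,c_6\}$ is DR: only $v_1,\dots,v_4$ and one private voter, five in total, remain uncovered. The redundant third near-duplicate occupies exactly the seat the DR committee uses to split $c_7$ into $c_5$ and $c_6$.
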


Since DR is designed as a diversity axiom, it is not surprising that it does not behave the same as known proportionality axioms such as EJR \cite{aziz_justified_2017} and PJR \cite{sanchez-fernandez_proportional_2017}.
Although Local DR and EJR/PJR can coexist (For an instance with no $\ell$-cohesive groups for $\ell>1$, EJR/PJR is equal to JR, and DR also satisfies JR), neither EJR or PJR implies DR, or vice versa. 

\begin{theorem}\label{thm:DR-proportionality}
Local DR does not imply EJR and vice versa. 
 DR does not imply priceability, and priceability does not imply Local DR.
Local DR does not imply (S)SJR, and (S)SJR does not imply Local DR.
\end{theorem}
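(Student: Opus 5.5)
All six non-implications are separations, so the plan is to give small explicit counterexample instances, ideally reusing one family where possible. Throughout I work with $n$ and $k$ chosen so that $n/k$ is an integer. Then a candidate $c$ is ``distinctive'' exactly when $|N_c| = n/k$.

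\textbf{Local DR versus EJR.} To show that Local DR does not imply EJR, I need a $2$-cohesive group. I would take $n=4$, $k=2$, and two candidates $a,b$ approved by all four voters, so $\{a,b\}$ is the unique EJR committee. I add candidates $c,d$ with $N_c=\{1,2\}$ and $N_d=\{3,4\}$, so $|N_c|=|N_d|=2=n/k$. The committee $\{c,d\}$ satisfies JR, and each member has deviation $0$, so no swap can strictly improve and Local DR holds. EJR fails, because the whole electorate is $2$-cohesive yet every voter has only one representative.

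For the converse, I take an EJR committee in which a member has a large deviation while some outsider with deviation $0$ could replace it and keep JR. Concretely, I use a single cohesive candidate approved by everyone together with a cohesive candidate of exact size $n/k$. Here I must handle the clause ``unless $|N_c|\ge n/k$ and $|N_d|<n/k$'': the swapped-in candidate $d$ should itself have $|N_d|\ge n/k$, so the exception does not apply.

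\textbf{DR versus priceability.} To show that DR does not imply priceability, I again want an instance where a large group jointly approves many candidates. Priceability then forces that group to fund more than the committee provides. The $\{c,d\}$ example above should work, with budget $1/2$ per voter and price $p$. Voters $1,2$ pay for $c$ and voters $3,4$ pay for $d$, so the price is $p\le 1$. But $a$ and $b$ are unelected, and all four voters have leftover money that must stay $\le p$; this can fail. I would check this carefully, and if needed enlarge the instance, for example with more universally approved candidates and larger $k$, so that unspent money necessarily exceeds $p$.

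To show that priceability does not imply Local DR, a committee chosen by a priceable rule such as PAV or MES will often include large candidates with deviation $>0$ while distinctive alternatives are available. A three-block instance should suffice.

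\textbf{(S)SJR versus Local DR.} To show that Local DR does not imply (S)SJR, I recycle the Figure~\ref{fig:intro}\subref{fig:intro-b}-style idea. I use distinctive candidates forming a JR committee, and I add a cohesive candidate whose group is split across them in a way that misses some voter (for SSJR), or whose group has no intersecting candidate in $W$ (for SJR).

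For the converse, I use the observation in the text that one universally approved candidate satisfies SJR and SSJR. So I build an instance with a universal candidate $u$ and distinctive candidates $c_1,\dots,c_k$, and consider $W=\{u\}\cup\{\text{some } c_i\}$. Swapping $u$ for a missing $c_j$ strictly decreases the deviation. I must check that JR is preserved after the swap, which needs every cohesive group to still be touched; I arrange the $c_i$ to partition the voters, which guarantees this.

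\textbf{Main obstacle.} The main obstacle is the priceability direction. Verifying the nonexistence of any price system requires an argument over all payment functions, not just one. I would argue via the standard consequence that priceable committees satisfy PJR, and pick the DR instance so that PJR fails. In $\{c,d\}$ with $a,b$ universal, the group of all voters is $2$-cohesive but its members' approved winners number exactly $2$, so PJR holds. I would therefore use three universal candidates with $k=3$ and $n=6$, and $W$ consisting of three disjoint pairs. Then the $3$-cohesive group of all voters has only $3$ approved winners, so PJR again holds; instead, a $2$-cohesive subgroup $\{1,2,3,4\}$ sees only $2$ winners. Since $|\{1,2,3,4\}|=4\ge 2\cdot 6/3$ and its members approve $3$ common candidates, this still meets PJR. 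So I will need to pick the parameters carefully, likely with a group of size $2n/k$ covered by a single distinctive winner, in order to break PJR and hence priceability.
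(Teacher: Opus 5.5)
\textbf{The gap is the direction ``DR does not imply priceability''; your other separations are fine.} Your candidate witness is in fact priceable. Take the instance with $n=4$, $k=2$, $a,b$ universal, $N_c=\{1,2\}$, $N_d=\{3,4\}$. Give every voter budget $p/2$ and let each voter spend all of it on the unique winner they approve. Nobody has leftover money, so the condition for the unelected $a$ and $b$ holds trivially. Your $n=6$, $k=3$ variant is priceable for the same reason.

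More importantly, your fallback of breaking PJR cannot work under your standing assumption that $n/k$ is an integer, because then DR implies PJR. Suppose an $\ell$-cohesive group $N'$ with $\ell\ge 2$ approved at most $\ell-1$ winners. Its unrepresented members all share the common candidates of $N'$, so JR forces fewer than $n/k$ of them. Its represented members lie in at most $\ell-1$ winners, each of size exactly $n/k$. Hence $|N'|<\ell\, n/k$, a contradiction. Concretely, a group of size $2n/k$ that meets only one distinctive winner leaves $n/k$ unrepresented voters with a common candidate, which violates JR. So the route you sketch is closed, and you have not produced a non-priceable DR committee.

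\textbf{The missing idea} is to attack the leftover-money condition of priceability directly, since it is strictly stronger than PJR. Force a cohesive block to pay for several distinctive winners with identical supporters, while a small \emph{non}-cohesive group keeps its whole budget for an unelected candidate. The paper uses $k=2$, $A(1)=A(2)=A(3)=\{a,b\}$, $A(4)=A(5)=\{c\}$. Here $W=\{a,b\}$ satisfies DR, since $|N_a|=|N_b|=3=\lceil 5/2\rceil$ and $\{4,5\}$ is not cohesive. With per-voter budget $\beta$ and price $p>0$, voters $1,2,3$ need $2p\le 3\beta$ while voters $4,5$ need $2\beta\le p$, which is impossible. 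Adding a sixth voter who approves only a new candidate gives the same conclusion with $n/k=3$.

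If you want to keep the PJR route, you must drop integrality. For example, take $n=10$, $k=4$, voters $1$--$5$ approving $x,y$, and four winners with supporter sets $\{1,2,3\}$, $\{6,7,8\}$, $\{8,9,10\}$, $\{6,9,10\}$. Then $\{1,\dots,5\}$ is $2$-cohesive but sees only one winner.

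The remaining parts only need filling in. In your own four-voter instance, $W=\{a,b\}$ satisfies EJR and is priceable (each voter pays $p/4$ to each winner). Yet swapping $a$ for $c$ keeps JR and strictly lowers the deviation, with the exception clause not applying. This settles both ``EJR does not imply Local DR'' and ``priceability does not imply Local DR''. As a minor point, $\{a,b\}$ is not the unique EJR committee there, since $\{a,c\}$ also qualifies, but this is harmless.

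For ``Local DR does not imply (S)SJR'', the quickest argument is that Local DR is always satisfiable (Proposition~\ref{prop:local-dr-alg}) while (S)SJR fails on the instance of Figure~\ref{fig:intro}\subref{fig:intro-b}. The paper's Monroe-type instance also works. Your universal-candidate construction for the converse is essentially the paper's.
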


A rule that seems similar to (Local) DR is Monroe's rule, where a function $\phi:N\to W$ maps voters to committee members candidate in such a way that $\lfloor n/k\rfloor\le \phi^{-1}(c)\le \lceil n/l\rceil$, and voters only gain utility when $\phi(v_i)\in A_i$. There lies the difference with Local DR, where we care about the \emph{size} of the candidates being (close to) $\lceil n/k \rceil$.

\begin{proposition}\label{prop:monroe-dr}
An optimal Monroe committee does not satisfy (Local) DR, and vice versa.
\end{proposition}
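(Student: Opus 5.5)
We will prove both directions with a single explicit instance. In it, the unique DR committee scores strictly below the Monroe optimum, and every Monroe-optimal committee contains a candidate far from size $\lceil n/k\rceil$, which can be swapped out while keeping JR. All steps are finite checks on a small instance, so no earlier result is needed beyond the definitions of JR, DR, Local DR and Monroe's rule.

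\textbf{The instance.} Take $n=6$ voters and $k=2$, so $n/k=\lceil n/k\rceil=3$, and three candidates:
\begin{itemize}
  \item $a$, approved by all voters;
  \item $b$, with $N_b=\{v_1,v_2,v_3\}$;
  \item $c$, with $N_c=\{v_1,v_2,v_4\}$.
\end{itemize}
Every set of at least $3$ voters jointly approves $a$, so the cohesive groups are exactly the groups of size $\ge 3$.

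\textbf{$\{b,c\}$ satisfies DR.} The committee $\{b,c\}$ covers $v_1,\dots,v_4$. Any group of $3$ or more voters contains at least one of these four voters, because only $v_5,v_6$ are uncovered. Hence $\{b,c\}$ satisfies JR. Since $|N_b|=|N_c|=3=\lceil n/k\rceil$, it satisfies DR, and therefore also Local DR.

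\textbf{$\{b,c\}$ is not Monroe-optimal.} A Monroe assignment for $\{b,c\}$ gives each candidate exactly $3$ voters. Only $v_1,\dots,v_4$ approve $b$ or $c$, so at most $4$ voters gain utility. In contrast, $\{a,b\}$ attains utility $6$: assign $v_1,v_2,v_3$ to $b$ and $v_4,v_5,v_6$ to $a$. The same holds for $\{a,c\}$. So the DR committee is not Monroe-optimal, which proves the ``vice versa'' direction.

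\textbf{The Monroe optima violate Local DR.} With only three candidates, the Monroe-optimal committees are exactly $\{a,b\}$ and $\{a,c\}$. Consider $\{a,b\}$ with $a\in W$ and $d=c\notin W$. Then
\[
\bigl||N_a|-3\bigr| = 3 > 0 = \bigl||N_c|-3\bigr|,
\]
and the swapped committee $\{b,c\}$ satisfies JR, as shown above. The exception in the definition of Local DR does not apply, because $|N_c|=3\ge n/k$. Hence $\{a,b\}$ violates condition 2 of Local DR, and so it is neither Local DR nor DR. The case $\{a,c\}$ is symmetric, swapping $a$ for $b$. Thus no optimal Monroe committee satisfies (Local) DR.

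The only delicate point is the choice of the overlapping sets $N_b$ and $N_c$. If $b$ and $c$ partitioned the voters, then $\{b,c\}$ would tie with the Monroe optimum. If $v_5,v_6$ had no covered partner in every cohesive group, then $\{b,c\}$ would fail JR. The choice above avoids both, since every cohesive group has size at least $3$ and so must contain a covered voter.
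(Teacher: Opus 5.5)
Your proof is correct and follows essentially the same route as the paper. Both use a single instance containing a universally approved candidate and candidates of size exactly $\lceil n/k\rceil$, which overlap so that fewer than $n/k$ voters are left uncovered. The exact-size committee is then DR but loses to a Monroe committee that uses the universal candidate, and every such Monroe committee admits a JR-preserving improving swap. Your instance ($n=6$, $k=2$) is a miniature of the paper's ($n=100$, $k=10$). Your explicit check that the exception clause of Local DR does not apply is a detail the paper leaves implicit.
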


Another axiom from the literature that is similar to DR, is Balanced JR, introduced by  \citet{fish2024generative}. The axiom is introduced for cardinal utilities, but for approval utilities and assuming the available candidates form a set (instead of a multi-set, which is possible in the generative social choice context) it boils down to the following definition.

\begin{definition}[Balanced JR \cite{fish2024generative} for approval ballots]
A committee $W$ satisfies Balanced Justified Representation (BJR) for approval ballots if there is a function $\omega:N\to W$, matching voters to candidates, such that each candidate on the committee is matched to either $\lfloor n/k\rfloor$ or $\lceil n/k\rceil$ voters, for which there is no coalition $S\subseteq N$ and candidate $c \notin W$ such that both (i) $|S|\ge n/k$, (ii) $c\in A(i)$ for all $i\in S$ and (iii) $\omega(i)\notin A(i)$ for all $i\in S$.
\end{definition}

We can view BJR for approval ballots as JR with an assignment function, making it a strengthening of JR that, similar to DR
aims to avoid the problem that \emph{one} uniformly approved candidate suffices to represent everyone. However,
in contrast to DR, voters can be represented by candidates that are not distinctive to them, but instead are broadly supported. DR is thus a strengthening of BJR that requires the candidate that is matched to a group to be distinctive for the
group and DR indeed implies BJR. Local DR, on the other hand, is logically independent from BJR, as it weakens DR in a different way.
\begin{theorem}\label{thm:DR-BJR}
        DR implies BJR for approval ballots, but
        a committee $W$ that satisfies Local DR needn't satisfy BJR for approval ballots, and vice versa.
\end{theorem}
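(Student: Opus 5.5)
The statement has three parts, and I will treat them separately.

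\textbf{DR implies BJR.} Let $W$ satisfy DR, so every $c\in W$ has $|N_c|=\lceil n/k\rceil$, and $W$ satisfies JR. I will build the matching $\omega$ greedily.
\begin{enumerate}
\item For each $c\in W$, reserve $\lfloor n/k\rfloor$ or $\lceil n/k\rceil$ of its approvers, as needed to make the total come out to exactly $n$.
\item Assign the remaining voters arbitrarily to committee members that still have slack.
\end{enumerate}
Two points need care. First, the sets $N_c$ for $c\in W$ may overlap, so I would only insist that as many voters as possible are matched to an approved candidate. The cleanest route may be to drop this and argue directly on the blocking condition instead.

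\textbf{Blocking condition.} Suppose $S$ with $|S|\ge n/k$ and $c\notin W$ are such that $c\in A(i)$ for all $i\in S$ and $\omega(i)\notin A(i)$ for all $i\in S$. Then $N_c$ contains a cohesive group, so JR gives some $i\in N_c$ with $A(i)\cap W\ne\emptyset$. That alone is too weak, since it does not control the voters in $S$. I therefore expect to need a matching argument: choose $\omega$ to maximize the number of voters matched to an approved candidate, then apply an exchange or augmenting-path argument. Any $S$ as above would be a set of at least $n/k$ unsatisfied voters sharing $c$. Here the exact size $\lceil n/k\rceil$ of each $N_{c'}$ should let one reassign. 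This step, getting BJR's "no unsatisfied cohesive coalition" from JR plus exact candidate sizes, is the main obstacle. If it fails, I would fall back on the paper's intended reading, namely that DR committees have candidates whose supporters fill exactly one slot each.

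\textbf{Local DR does not imply BJR.} I will use a small counterexample in which the only JR-preserving swaps do not improve distinctiveness. For instance, take $k=2$ and $n=4$. Let the committee consist of a candidate approved by everyone together with a candidate of size $2$. Also include a non-elected candidate approved by two voters who, under every balanced matching, end up matched to the size-$2$ elected candidate that they do not approve. The work is to check that no swap $c\in W$, $d\notin W$ reduces $\bigl||N\cdot|-\lceil n/k\rceil\bigr|$ while keeping JR. The exception clause in the definition of Local DR, which keeps cohesive candidates protected against swaps to non-cohesive ones, helps here.

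\textbf{BJR does not imply Local DR.} I will take a BJR committee containing a large broadly approved candidate, where a swap to an unelected candidate of size exactly $\lceil n/k\rceil$ preserves JR and strictly decreases the deviation. Figure~\ref{fig:intro}\subref{fig:intro-a} with $k=4$ is a natural base: electing $e$ together with three of $a,b,c,d$ can satisfy BJR via a matching, and swapping $e$ for the missing ring candidate improves the deviation.
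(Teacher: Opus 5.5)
\textbf{DR $\Rightarrow$ BJR.} You do not prove this part. You leave the reassignment step open and then fall back on an ``intended reading,'' which is not an argument. Your suspicion is well founded, though: under the stated floor/ceiling definition the step cannot be completed when $k\nmid n$, because the lower quota $\lfloor n/k\rfloor$ can force covered voters onto members they do not approve.

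Take $n=10$ and $k=9$. Let $W$ consist of $a_1,a_2,a_3$ with $N_{a_1}=\{1,2\}$, $N_{a_2}=\{3,4\}$, $N_{a_3}=\{5,6\}$, together with the six candidates $d_{ij}$ with $N_{d_{ij}}=\{i,j\}$ for $\{i,j\}\subseteq\{7,8,9,10\}$. The only other candidate is $x$ with $N_x=\{1,\dots,6\}$. Every voter is covered and every member has $2=\lceil 10/9\rceil$ approvers, so $W$ satisfies DR, and hence Local DR. Yet any balanced $\omega$ gives each $d_{ij}$ at least one voter. Only voters $7,\dots,10$ approve any $d_{ij}$, so at least $6-4=2$ voters from $\{1,\dots,6\}$ are matched to some $d_{ij}$ they do not approve. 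Together with $x\notin W$ they form a blocking coalition, since $2\ge 10/9$.

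The paper's own argument overlooks exactly this. It matches each covered voter to a member it approves and fills the remaining slots with uncovered voters. That respects the per-candidate upper bound, but not the lower quota, nor the fact that only $n-k\lfloor n/k\rfloor$ members may receive $\lceil n/k\rceil$ voters. The argument is correct when $k\mid n$. Then each member has exactly $n/k$ approvers, so it receives at most $n/k$ covered voters, and the free slots number exactly the uncovered voters. The unsatisfied voters are then exactly the uncovered ones, and a blocking coalition would be a cohesive group of uncovered voters, contradicting JR. So the honest repair is to add the hypothesis $k\mid n$ and give that argument.

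\textbf{Local DR $\not\Rightarrow$ BJR.} Your $k=2$, $n=4$ construction fails on both counts. With $N_e=N$ and $|N_b|=2$, match $N_b$ to $b$ and the other two voters to $e$. This is balanced and satisfies everyone, so $W=\{e,b\}$ satisfies BJR. Moreover, your unelected $x$ is approved by two voters outside $N_b$, so $N_x\cup N_b=N$. Swapping $e$ for $x$ therefore keeps JR and lowers the deviation from $2$ to $0$. The exception clause does not apply because $|N_x|=2\ge n/k$, so $W$ is not even Local DR.

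The missing idea has two parts. The blocking candidate must be unattractive to Local DR, i.e., a large twin of an elected, broadly approved candidate. And more voters than that candidate's quota must approve nothing else in $W$. This is what the paper's 12-voter instance does ($k=6$, $c_1,c_2$ approved by all, four voters approving only $c_1,c_2$). A Local DR committee keeps exactly one of $c_1,c_2$. Two of those four voters then stay unsatisfied and block together with the unelected twin. Conversely, any committee containing both twins is not Local DR: swap one twin for the omitted size-$2$ candidate. The DR example above also yields this direction.

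\textbf{BJR $\not\Rightarrow$ Local DR.} Your third part is correct and differs from the paper. Take Figure~\ref{fig:intro}(a) with $k=4$ and $W=\{a,b,c,e\}$. Matching $a,b,c$ to their own petals and $e$ to the petal of $d$ is balanced and leaves only $5<6$ unsatisfied voters, so BJR holds. Swapping $e$ ($|N_e|=4$) for $d$ keeps full coverage and reduces the deviation, and the exception does not apply since $|N_e|<n/k$. This is simpler than the paper's route, which uses the 12-voter instance for both directions at once.
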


\section{Experiments}\label{sec:exp}

\begin{figure*}[t]
\centering
\begin{subfigure}[b]{0.315\textwidth}
        \centering
        \includegraphics[width=\linewidth]{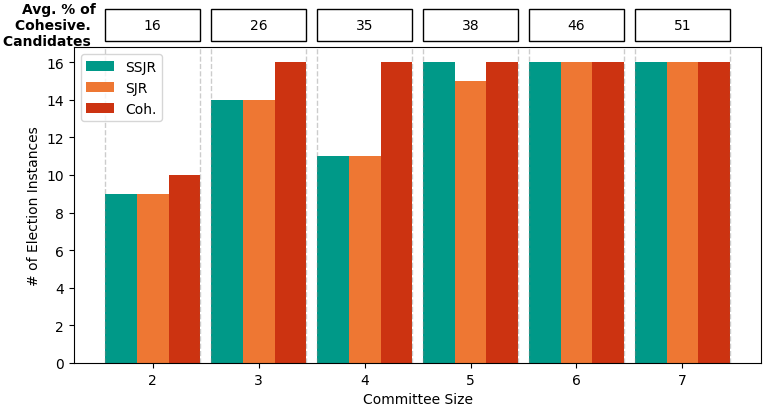}
        \caption{(S)SJR existence: Situ.}
        \label{fig:s_sjr_situ}
    \end{subfigure}~\begin{subfigure}[b]{0.315\textwidth}
        \centering
\includegraphics[width=\linewidth]{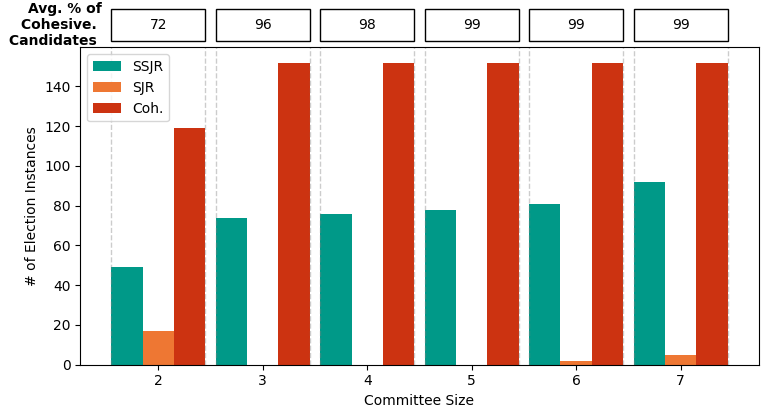}
        \caption{(S)SJR existence: SCOTUS.}
        \label{fig:s_sjr_scotus}
    \end{subfigure}~\begin{subfigure}[b]{0.315\textwidth}
        \centering
        \includegraphics[width=\linewidth]{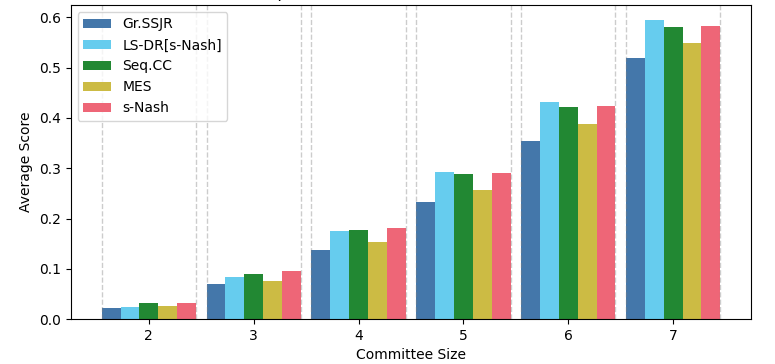}
        \caption{Unique candidate scores: SCOTUS.}
        \label{fig:uniq_cand_scotus}
    \end{subfigure}
\caption{Subfigure $(a)$ shows (S)SJR existence for the Situ dataset. Subfigure $(b)$ and $(c)$ show the unique candidate- respectively unique voter-scores for the SCOTUS data set.}
\label{fig:exp_results}
\end{figure*}

Thus far, we have illustrated the difficulty of guaranteeing strong diversity notions. However, an experimental analysis can reveal a more fine-grained and positive outlook than that provided by theory. So, in this section, we go beyond our theoretical results and analyse how our considered notions (both axioms and rules) behave in practice. 

To do so, we tested them against both real-world and synthetics datasets. For real-world data, we considered the \textit{Opinions of the United States Supreme Court (SCOTUS)} \cite{spaeth_2023supreme} and \textit{Voter Autrement in Situ (Situ)} \cite{baujard_2010,baujard_2025_2007,baujardGHLD_2025_2012,baujardGHLL_2013,bouveret_2019} datasets.\footnote{Both datasets were obtained from \emph{PrefLib} \cite{MaWa13a} (\url{https://preflib.github.io/PrefLib-Jekyll/}).} The SCOTUS dataset has $152$ profiles with $n\in \{60,\ldots,277\}$ and $m\in \{8,\ldots,11\}$ while in the $16$ Situ profiles we had $n\in \{215,\ldots,1291\}$ and $m\in \{10,11,12\}$. For each profile in the datasets, we conduct our experiments for committee sizes $k$ ranging between $2-7$. We also assessed synthetic data that was generated using the \emph{Resampling Model} \cite{szufa_how_2022}. The parameters used were $n=100,m=20,p=0.2$ and $\phi\in\{0.25,0.5,0.75\}$ where we generated $5$ election profiles per choice of $\phi$. These specific parameters were chosen to create a variety of more sparse datasets having voters approving a lower fraction of the candidates in contrast to the two real-world datasets.

We analyse the following voting rules: Sequential CC, MES, Greedy (S)SJR, a simplified version of the Cohesive Nash rule that only considers maximal cohesive supporter groups of each candidate, and LS-DR (that uses the simplified Cohesive Nash rule to obtain the initial committee\footnote{Note that this simplified Cohesive Nash rule may not return a committee satisfying JR. However, this did not occur with any of our considered election instances.}). 

\subsection{(S)SJR Existence, and Performance of  Algorithm~\ref{alg:greedy2-ssjr}}
Given that the SSJR and SJR axioms are not always satisfiable, we first study how often it is possible to satisfy SSJR and SJR in practice. We ask this for the election instances where there exists some cohesive group. The algorithms to determine this were encoded as ILPs, the details of which are in Appendix~\ref{app:exp}. As we say that a candidate $c$ is cohesive if $N_c$ is a cohesive group, we also checked, for each election instance, what percentage of the candidates are cohesive. Then for each committee size $k$, we averaged these values over the profiles to obtain the average rate of cohesive candidates. We deem this rate as a proxy for the difficulty of achieving diverse group representation via (S)SJR, as intuitively, the more cohesive groups there are, the harder it is to represent them all beyond what is achievable with JR. 

Addtionally, we have shown that the greedy (S)SJR methods have an approximation ratio of $1-\nicefrac{1}{e}$ for (S)SJR. This begs the question whether these greedy rules can do better in practice. This we study secondly. For each election instance, we compared the greedy rules' (S)SJR-scores against the optimal scores.

\paragraph{Results:} In general, we see that SSJR is satisfiable much more often than SJR. Observe that for the SCOTUS data, when $k\geq 3$, almost all candidates are cohesive (see Figure~\ref{fig:s_sjr_scotus}). This contrasts the Situ data (see Figure~\ref{fig:s_sjr_situ}) and the synthetic data (plots in Appendix~\ref{app:exp}), where there is a lower portion of cohesive candidates. Thus, it is more challenging to satisfy (S)SJR on the SCOTUS data, whereas (S)SJR committees occur much more frequently on other datasets.

Regarding the performance of Algorithm~\ref{alg:greedy2-ssjr}, our results show that both of the greedy rules consistently return committees with near optimal (S)SJR-scores (see results in the Appendix), thus performing far above the theoretical worst case bound of $1-1/\e$. Notably, this also holds for all the other voting rules that we tested. We elaborate more on this in Appendix~\ref{app:exp}.

\subsection{Diversity Measures}
We also assess the voting rules against multiple measures of diversity. Besides the CC score (Definition~\ref{def:cc}), we consider the following measures:
\begin{itemize} 
    \item \emph{(S)SJR-score:} For an instance $I$ and a committee $W$, the SSJR-score for a committee $W$ is given in Definition~\ref{def:ssjr-max}, and the SJR-score is given in Definition~\ref{def:sjr-max}. 
    \item \emph{Unique voter-score:} For an instance $I$ and a committee $W$, take $\mathcal{A}(S) = \{A_{i}\cap S\mid i\in N\}$ to be the set of unique approval ballots over the candidate set $S\subseteq C$. The \emph{unique voter} score of $W$ is: $uv(W) = \nicefrac{|\mathcal{A}(W)|}{|\mathcal{A}(C)|}.$ This measure is meant to capture the portion of the unique voter opinions retained by the voting rule.
    \item \emph{Unique candidate-score:} For an instance $I$ and a committee $W$, take $\mathcal{C}(S)$ to be the set of unique candidate supporter sets $N_c$ for candidates $c\in S\subseteq C$. The \emph{unique candidate} score of $W$ is: $uc(W) = \nicefrac{|\mathcal{C}(W)|}{|\mathcal{C}(C)|}.$ If we identify a candidate by their supporter groups, this measure assesses how well a voting rule does in retaining unique voter coalitions.   
\end{itemize}

The latter two measures represent diversity notions that differ from those that are standard in the committee-election literature. Note that as with the cohesiveness checks from the previous section, we averaged these scores over multiple election instances for each committee target size. We omitted Greedy SJR from the results as it usually performed similar to (and at times, was outperformed by) Greedy SSJR.

\paragraph{Results:} 
Given the prior observation that the SCOTUS data has more cohesive candidates to satisfy, we focus on this dataset for the remainder of our experimental analysis. Further to this choice, we note that when comparing the different datasets' measure results, the differences in the rules' performances are (generally) more noticeable in the SCOTUS dataset. Thus, we relegate (most of) the results for the Situ and synthetic data to the Appendix.

In terms of the unique voter-score, the two standout rules on the SCOTUS data are LS-DR and simplified CNASH (see Figure~\ref{fig:uniq_vote_scotus}). LS-DR also performed well regarding this measure on the Situ and synthetic data (for $\phi\in\{0.25,0.5\}$) along with Greedy SSJR and Sequential CC. For the unique candidate-score on the SCOTUS data, we found that LS-DR, Sequential CC and simplified CNASH are the best performing rules, all being at a similar level (see Figure~\ref{fig:uniq_cand_scotus}).

We also note that despite geared towards a different type of diversity, LS-DR still performs reasonably in terms of the CC and SJR coverage scores when compared to the other rules. Specifically, we find that adding the DR-focused local-swap dynamics to the simplified CNASH committee does not sacrifice a significant amount in terms of standard diversity notions (these results can be found in the Appendix~\ref{app:exp}). 


\section{Conclusions}
This paper is the first to study diverse representation from the perspective of cohesive groups. We have done so by considering strengthenings of Justified Representation in two distinct directions. First, we considered the known but not well understood notions of Strong and Semi-Strong JR, which consider the same cohesive groups as JR, but require these groups to be represented in a stricter sense. Second, we introduced and studied an axiom called Distinctive Representation, which requires elected candidates to represent groups of voters of equal size. Future research could study, e.g., approximating optimal outcomes for the Cohesive Nash Rule and DR, local variants of the (S)SJR axioms, and more generally, the relation between local optimality and the approximation ratio.


%
\bibliography{BIEB}

\begin{thebibliography}{31}
\providecommand{\natexlab}[1]{#1}

\bibitem[{Ai and Tao(2026)}]{ai_strong_2026}
Ai, Y.; and Tao, B. 2026.
\newblock Computational Complexity of Strong and Average Justified Representation.
\newblock arXiv:2606.29643.

\bibitem[{Aziz et~al.(2017)Aziz, Brill, Conitzer, Elkind, Freeman, and Walsh}]{aziz_justified_2017}
Aziz, H.; Brill, M.; Conitzer, V.; Elkind, E.; Freeman, R.; and Walsh, T. 2017.
\newblock Justified Representation in Approval-Based Committee Voting.
\newblock \emph{Social Choice and Welfare}, 48(2): 461--485.

\bibitem[{Aziz et~al.(2018)Aziz, Elkind, Huang, Lackner, {Sanchez-Fernandez}, and Skowron}]{azizComplexityExtendedProportional2018}
Aziz, H.; Elkind, E.; Huang, S.; Lackner, M.; {Sanchez-Fernandez}, L.; and Skowron, P. 2018.
\newblock On the {{Complexity}} of {{Extended}} and {{Proportional Justified Representation}}.
\newblock \emph{Proceedings of the AAAI Conference on Artificial Intelligence}, 32(1).

\bibitem[{Baujard et~al.(2013)Baujard, Gavrel, Igersheim, Laslier, and Lebon}]{baujardGHLL_2013}
Baujard, A.; Gavrel, F.; Igersheim, H.; Laslier, J.-F.; and Lebon, I. 2013.
\newblock Approval voting, evaluation voting.
\newblock Https://doi.org/10.3917/reco.642.0006.

\bibitem[{Baujard et~al.(2025)Baujard, Gavrel, Igersheim, Lebon, and Delemazure}]{baujardGHLD_2025_2012}
Baujard, A.; Gavrel, F.; Igersheim, H.; Lebon, I.; and Delemazure, T. 2025.
\newblock Voter Autrement 2012 - Dataset of the In Situ Experiments.
\newblock Https://zenodo.org/records/15007145.

\bibitem[{Baujard and Igersheim(2010)}]{baujard_2010}
Baujard, A.; and Igersheim, H. 2010.
\newblock Framed field experiments on approval voting: lessons from the 2002 and 2007 french presidential elections.
\newblock In \emph{Handbook of Approval Voting}.

\bibitem[{Baujard, Igersheim, and Delemazure(2025)}]{baujard_2025_2007}
Baujard, A.; Igersheim, H.; and Delemazure, T. 2025.
\newblock Voter Autrement 2007 - Dataset of the In Situ Experiments.
\newblock Https://zenodo.org/records/15007145.

\bibitem[{Bouveret et~al.(2019)Bouveret, Blanch, Baujard, Durand, Igersheim, Lang, Laruelle, Laslier, Lebon, and Merlin}]{bouveret_2019}
Bouveret, S.; Blanch, R.; Baujard, A.; Durand, F.; Igersheim, H.; Lang, J.; Laruelle, A.; Laslier, J.-F.; Lebon, I.; and Merlin, V. 2019.
\newblock Voter Autrement 2017 for the French Presidential Election - The data of the In Situ Experiment.
\newblock Https://zenodo.org/records/15007145.

\bibitem[{Caragiannis et~al.(2019)Caragiannis, Kurokawa, Moulin, Procaccia, Shah, and Wang}]{CaragiannisKMPS19}
Caragiannis, I.; Kurokawa, D.; Moulin, H.; Procaccia, A.~D.; Shah, N.; and Wang, J. 2019.
\newblock The Unreasonable Fairness of Maximum Nash Welfare.
\newblock \emph{{ACM} Trans. Economics and Comput.}, 7(3): 12:1--12:32.

\bibitem[{Chamberlin and Courant(1983)}]{chamberlin_representative_1983}
Chamberlin, J.~R.; and Courant, P.~N. 1983.
\newblock Representative {{Deliberations}} and {{Representative Decisions}}: {{Proportional Representation}} and the {{Borda Rule}}.
\newblock \emph{American Political Science Review}, 77(3): 718--733.

\bibitem[{Cohen and Gonen(2019)}]{CohenGonen19}
Cohen, R.; and Gonen, M. 2019.
\newblock On interval and circular-arc covering problems.
\newblock \emph{Annals of Operations Research}, 275(2): 281--295.

\bibitem[{Cohen et~al.(2017)Cohen, Gonen, Levin, and Onn}]{CohenGLO17}
Cohen, R.; Gonen, M.; Levin, A.; and Onn, S. 2017.
\newblock On nonlinear multi-covering problems.
\newblock \emph{Journal of Combinatorial Optimization}, 33(2): 645--659.

\bibitem[{Eder, Mochmann, and Quandt(2015)}]{eder_political_2015}
Eder, C.; Mochmann, I.~C.; and Quandt, M. 2015.
\newblock \emph{Political Trust and Disenchantment with Politics: International Perspectives}, volume 125 of \emph{International Studies in Sociology and Social Anthropology}.
\newblock Brill.
\newblock ISBN 978-90-04-26394-9.

\bibitem[{Faliszewski et~al.(2017)Faliszewski, Skowron, Slinko, and Talmon}]{faliszewski2017multiwinner}
Faliszewski, P.; Skowron, P.; Slinko, A.; and Talmon, N. 2017.
\newblock {{Multiwinner Voting: A New Challenge}} for {{Social Choice Theory}}.
\newblock \emph{Trends in computational social choice}, 74(2017): 27--47.

\bibitem[{Feige(1998)}]{feige_threshold_1998}
Feige, U. 1998.
\newblock {{A Threshold}} of ln(n) for {{Approximating Set Cover}}.
\newblock \emph{Journal of the ACM}, 45(4): 634--652.

\bibitem[{Fish et~al.(2024)Fish, G{\"o}lz, Parkes, Procaccia, Rusak, Shapira, and W{\"u}thrich}]{fish2024generative}
Fish, S.; G{\"o}lz, P.; Parkes, D.~C.; Procaccia, A.~D.; Rusak, G.; Shapira, I.; and W{\"u}thrich, M. 2024.
\newblock Generative Social Choice.
\newblock In \emph{Proceedings of the 25th ACM Conference on Economics and Computation}, 985--985.

\bibitem[{Hochbaum and Pathria(1998)}]{hochbaum_analysis_1998}
Hochbaum, D.~S.; and Pathria, A. 1998.
\newblock {{Analysis}} of the {{Greedy Approach}} in {{Problems}} of {{Maximum k-Coverage}}.
\newblock \emph{Naval Research Logistics}, 45(6): 615--627.

\bibitem[{Lackner and Skowron(2023)}]{lackner_multi-winner_2023}
Lackner, M.; and Skowron, P. 2023.
\newblock \emph{Multi-{{Winner Voting}} with {{Approval Preferences}}}.
\newblock {{Springer Briefs}} in {{Intelligent Systems}}. Springer International Publishing.
\newblock ISBN 978-3-031-09015-8 978-3-031-09016-5.

\bibitem[{Landemore(2017)}]{landemore_democratic_2017}
Landemore, H. 2017.
\newblock \emph{{{Democratic Reason: Politics, Collective Intelligence}}, and the {{Rule}} of the {{Many}}}.
\newblock Princeton, NJ Oxford: Princeton University Press, first paperback printing edition.
\newblock ISBN 978-0-691-17639-0 978-0-691-15565-4.

\bibitem[{Lindeboom et~al.(2026)Lindeboom, Brehm, Grossi, and Murukannaiah}]{lindeboom2025diversecommitteesincompleteinaccurate}
Lindeboom, F.; Brehm, M.; Grossi, D.; and Murukannaiah, P.~K. 2026.
\newblock Diverse Committees with Incomplete or Inaccurate Approval Ballots.
\newblock In \emph{Proceedings of the 25th International Conference on Autonomous Agents and Multiagent Systems}, AAMAS '26, 3704–3712. Richland, SC: International Foundation for Autonomous Agents and Multiagent Systems.
\newblock ISBN 9798400723179.

\bibitem[{Manurangsi(2020)}]{manurangsi_tight_2019}
Manurangsi, P. 2020.
\newblock {{Tight Running Time Lower Bounds}} for {{Strong Inapproximability}} of {{Maximum}} \emph{k}-{{Coverage, Unique Set Cover}} and {{Related Problems}} (via \emph{t}-Wise Agreement Testing Theorem).
\newblock In \emph{Proceedings of the 2020 {ACM-SIAM} Symposium on Discrete Algorithms, {SODA} 2020, Salt Lake City, UT, USA, January 5-8, 2020}, 62--81. {SIAM}.

\bibitem[{Mattei and Walsh(2013)}]{MaWa13a}
Mattei, N.; and Walsh, T. 2013.
\newblock PrefLib: A Library of Preference Data \textsc{http://preflib.org}.
\newblock In \emph{Proceedings of the 3rd International Conference on Algorithmic Decision Theory (ADT 2013)}, Lecture Notes in Artificial Intelligence. Springer.

\bibitem[{Mettanant(2026)}]{Mettanant26}
Mettanant, V. 2026.
\newblock Efficient algorithms for the interval maximum coverage problem.
\newblock \emph{Theoretical Computer Science}, 1069: 115802.

\bibitem[{Mikhaylovskaya(2024)}]{mikhaylovskaya2024enhancing}
Mikhaylovskaya, A. 2024.
\newblock Enhancing {{Deliberation}} with {{Digital Democratic Innovations}}.
\newblock \emph{Philosophy \& Technology}, 37(1): 3.

\bibitem[{Peters, Pierczy{\'n}ski, and Skowron(2021)}]{peters_proportional_2020}
Peters, D.; Pierczy{\'n}ski, G.; and Skowron, P. 2021.
\newblock Proportional participatory budgeting with additive utilities.
\newblock \emph{Advances in Neural Information Processing Systems}, 34: 12726--12737.

\bibitem[{Peters and Skowron(2020)}]{PetersS20}
Peters, D.; and Skowron, P. 2020.
\newblock Proportionality and the Limits of Welfarism.
\newblock In Bir{\'{o}}, P.; Hartline, J.~D.; Ostrovsky, M.; and Procaccia, A.~D., eds., \emph{{EC} '20: The 21st {ACM} Conference on Economics and Computation, Virtual Event, Hungary, July 13-17, 2020}, 793--794. {ACM}.

\bibitem[{Pierczynski and Skowron(2022)}]{PierczynskiS22core}
Pierczynski, G.; and Skowron, P. 2022.
\newblock Core-Stable Committees Under Restricted Domains.
\newblock In \emph{Proceedings of the 18th International Conference on Web and Internet Economics (WINE 2022)}, Lecture Notes in Computer Science, 311--329. Springer.

\bibitem[{{S{\'a}nchez-Fern{\'a}ndez} et~al.(2017){S{\'a}nchez-Fern{\'a}ndez}, Elkind, Lackner, Fern{\'a}ndez, Fisteus, Basanta~Val, and Skowron}]{sanchez-fernandez_proportional_2017}
{S{\'a}nchez-Fern{\'a}ndez}, L.; Elkind, E.; Lackner, M.; Fern{\'a}ndez, N.; Fisteus, J.; Basanta~Val, P.; and Skowron, P. 2017.
\newblock Proportional {{Justified Representation}}.
\newblock \emph{Proceedings of the AAAI Conference on Artificial Intelligence}, 31(1).

\bibitem[{Sornat, {Vassilevska Williams}, and Xu(2022)}]{SornatWX22}
Sornat, K.; {Vassilevska Williams}, V.; and Xu, Y. 2022.
\newblock Near-Tight Algorithms for the Chamberlin-Courant and Thiele Voting Rules.
\newblock In Raedt, L.~D., ed., \emph{Proceedings of the Thirty-First International Joint Conference on Artificial Intelligence, {IJCAI} 2022, Vienna, Austria, 23-29 July 2022}, 482--488. ijcai.org.

\bibitem[{Spaeth et~al.(2023)Spaeth, Epstein, Martin, Segal, Ruger, and Benesh}]{spaeth_2023supreme}
Spaeth, H.~J.; Epstein, L.; Martin, A.~D.; Segal, J.~A.; Ruger, T.~J.; and Benesh, S.~C. 2023.
\newblock 2023 Supreme Court Database (Version 2023 Release 01).
\newblock Http://supremecourtdatabase.org/.

\bibitem[{Szufa et~al.(2022)Szufa, Faliszewski, Janeczko, Lackner, Slinko, Sornat, and Talmon}]{szufa_how_2022}
Szufa, S.; Faliszewski, P.; Janeczko, L.; Lackner, M.; Slinko, A.; Sornat, K.; and Talmon, N. 2022.
\newblock How to {{Sample Approval Elections}}?
\newblock In Raedt, L.~D., ed., \emph{Proceedings of the Thirty-First International Joint Conference on Artificial Intelligence, {IJCAI-22}}, 496--502. International Joint Conferences on Artificial Intelligence Organization.

\end{thebibliography}

\appendix
\onecolumn

\section{Supplementary material to \Cref{sec:ssjr-cc}}
\label{app:relations}

\begin{repproposition}{prop:sjr-score}
We have $\SJR(W)=1 \iff W$ satisfies Strong JR.	
\end{repproposition}
\begin{proof}
For a committee $W$ to satisfy SJR, we require all cohesive groups of voters $N'$ to satisfy the property $W\cap(\cap_{i\in N'}A(i)) \neq \emptyset$. In words: at least one candidate that these voters all approve of, needs to be elected. Meanwhile, attaining $\SJR(W)=1$ requires that for all candidates $c$ with $|N_c| \geq \frac nk$, the support groups satisfy this same property: $W\cap(\cap_{i\in N_c}A(i)) \neq \emptyset$.

This makes the direction $W$ satisfies SJR $\implies \SJR(W)=1$ obvious: if all cohesive groups of voters $N'$ satisfy $W\cap(\cap_{i\in N'}A(i)) \neq \emptyset$, then certainly this is true of the subset of such groups which happen to be all the supporters of some candidate.  

This leaves the direction $\SJR(W)=1\implies W$ satisfies SJR. To see this, suppose instead that $W$ does not satisfy SJR. Then there exists some set of voters $N'\subseteq N$ which is cohesive, and but for which $W\cap(\cap_{i\in N'}A(i)) = \emptyset$. Writing $c$ for a candidate in $\cap_{i\in N'}A(i)$ (which exists because $N'$ is cohesive), consider the support set $N_c$ of $c$. Since all voters in $N'$ support this candidate, we have $N' \subseteq N_c$, so that $W\cap(\cap_{i\in N_c}A(i)) = \emptyset$ also. But $|N_c| \geq |N'| \geq \frac nk$ and so $c$ causes $\SJR(W)<1$.


\end{proof}

\begin{repproposition}{prop:ssjr-score}
We have $\SSJR(W)=1 \iff W$ satisfies Semi-Strong JR.	
\end{repproposition}
\begin{proof}
The proof is analogous to the SJR case: simply replace the requirement of $W\cap(\cap_{i\in N'}A(i)) \neq \emptyset$ by $N_c \subseteq \cup_{c'\in W} N_{c'}$. 
\end{proof}

\begin{reptheorem}{thm:relations}
\begin{itemize}
    \item\label{prop:cc-to-ssjr} A committee $W$ that satisfies the CC axiom, also satisfies the Semi-Strong JR axiom.
    \item\label{prop:cc-sjr} A committee $W$ that satisfies the CC axiom, may not be optimal for Strong JR.
    \item\label{prop:sjr_ssjr_cc} A committee $W$ that satisfies the (Semi-)Strong JR axiom, may not be optimal for CC, and vice versa.
    \item \label{prop:cc-ssjr} A committee $W$ that is optimal for CC, may not be optimal for Semi-Strong JR, and vice versa.
    \item\label{sjr-ssjr} A committee $W$ that is optimal for  Strong JR, may not be optimal for Semi-Strong JR.
    \item\label{prop:ssjr-jr} A committee $W$ that is optimal for (Semi)-strong JR, may not satisfy JR.
    \item\label{prop:cnash-jr} A committee $W$ that is optimal for CNASH, satisfies JR
    \item\label{prop:cnash-sjr} A committee $W$ that is optimal for CNASH, may fail to find a committee that satisfies the SJR axiom, even when it exists.
    \item\label{prop:cnash-cc} A committee $W$ that is optimal for CNASH, may not be optimal for CC, and vice versa.
    \item\label{prop:jr-to} A committee $W$ that satisfies JR, may not be optimal for SSJR max, SJR max, CC max or CNASH.
    \item\label{prop:cnash-ssjr} When satisfying the SSJR axiom is feasible, the set of NASH‑maximizing committees is exactly the set of SSJR satisfying committees.
    \item A committee $W$ that is optimal for CNASH may not be optimal for SSJR max, and vice versa.
    \item A committee $W$ that is optimal for CC, may not be optimal for CNASH.
    \item A committee $W$ that is optimal for SJR, may not be optimal for CNASH.
\end{itemize}
\end{reptheorem}
\begin{proof}
\begin{itemize}
\item\label{prop:cc-to-ssjr}If $W$ satisfies the CC axiom, then $|A(i)\cap W|\ge 1$ for all $i\in N$, then also $|A(i)\cap W|\ge 1$ for all $i$ in any cohesive group $N'\subseteq N$, which means SSJR is satisfied.
\item\label{prop:cc-sjr} Consider the instance in Figure~\ref{fig:prop:cc-sjr}. Here, $n=12$ and suppose $k=4$ so that $n/k=3$. Then all candidates consisting of at least 3 voters are cohesive. By electing committee $W=\{a,b,c,e\}$ we can cover all voters but miss the cohesive group $d$. This committee then satisfies the CC axiom but does not satisfy Strong JR.
\item\label{prop:sjr_ssjr_cc} Consider the instance in \Cref{fig:prop:sjr_ssjr_cc}. Suppose $k=2$, then $n/k=7$. For optimal CC-score we need $W=\{a,b\}$ or $W=\{a,c\}$, but to satisfy (Semi-)Strong JR, we need $W=\{b,c\}$, because $N_b$ and $N_c$ are both cohesive and $N_a$ is not.
\item\label{prop:cc-ssjr} Consider the instance in Figure~\ref{fig:cc-ssjr}. Suppose $k=5$ and so that $n/k=3$. Then the supporters of candidates $a,b,c,d$ and $e$ form a cohesive group, but the supporters of $f$ do not. For an optimal Semi-Strong JR maximization, we elect $W = \{a,b,c,d,e\}$ and attain an optimal score. For optimal coverage however, we let go of candidate $d$ and elect $f$ instead.
\item\label{sjr-ssjr}Consider the instance in Figure~\ref{fig:sjr-ssjr}. Here $n=9$ and suppose $k=3$ so that all four candidates have a cohesive support group. Because all candidates also have a unique support group, all need to be elected to satisfy Strong JR, and any three out of four would attain an optimal solution for Strong JR. However, we can satisfy Semi-Strong JR with $W=\{A,C,D\}$ but not when electing $B$. Hence, $W=\{A,B,C\}$ is optimal for Strong JR but not for Semi-Strong JR.
\item\label{prop:ssjr-jr} Consider the instance in \Cref{fig:prop:ssjr-jr}. Here, $n=16$ and suppose $k=4$, then $n/k=4$ so that all candidates in this figure have a cohesive support group. Then $W=\{A,B,C,D\}$ is optimal for (S)SJR (all four out of five candidates are) but does not satisfy JR, because then we need candidate $E$.
\item\label{prop:cnash-jr}Since $\NASH(W)=0$ only when there exists a cohesive group with zero representatives in $W$, we have $\NASH(W)>0 \iff W$ satisfies JR. A JR committee always exists (e.g. via \textsc{greedy-CC}), so $N$ with $\NASH(W)>0$ will always exist, and CNASH will output $W$ with the largest score, thus satisfying JR.
\item\label{prop:cnash-sjr} The important difference between CNASH and SJR here is that CNASH only considers the number of covered voters from cohesive groups, and not \emph{how} they are covered. Consider the instance in \Cref{fig:cnash-sjr}. We have $A_1=A_2=\{a,b\},\, A_3=A_4=\{b\},\,A_5=\{a,c\},\,A_6=\{c\}$. When $k=2$, the cohesive groups are: $\{1,2,5\}$ and $\{1,2,3,4\}$. Strong JR thus requires electing $W=\{a,b\}$, then both these cohesive groups are represented by their own candidate. However, also $W=\{b,c\}$ attains optimal NASH score, as it covers all voters, so CNASH will not be able to distinguish between both these committees, hence is not guaranteed to elect an SJR-winner.
\item\label{prop:cnash-cc}The important difference between CNASH and CC here is that CNASH is indifferent towards voters of non-cohesive groups. Consider the instance in Figure~\ref{fig:cc-cnash}, and take $k=4$ so that $\lceil n/k\rceil=4$. A CC-winning must elect candidate $e$, but this candidate is not cohesive, so CNASH will not elect it.
\item\label{prop:jr-to} Consider the instance in Figure~\ref{fig:intro}~\subref{fig:intro-a}, where $n=24$ and $k=4$ so that $n/k=6$. Taking $W=\{e\}$ already satisfies JR, but for all other functions, we need to take $W=\{a,b,c,d\}$, and the axioms will even be satisfied.
\item A committee $W$ that maximizes NASH score, satisfies SSJR whenever possible: SSJR is satisfied whenever $\cup_{N'\in\text{Coh}}N'\subseteq \{i:A_i\cap W\neq\emptyset\}$; all voters of cohesive groups are covered. This means for a committee $W$ that satisfies SSJR, $\NASH(W)=\prod_{N'\in\text{coh}}|N'|$, the global optimum, which will hence also be outputted by CNASH. Thus, whenever an SSJR satisfying committee $W$ exists, CNASH will output it. A committee $W$ that satisfies the SSJR axiom, is also optimal for CNASH: A committee $W$ that satisfies the SSJR axiom, covers all voters of cohesive groups, so $\NASH(W)=\prod_{N'\in\text{Coh}}|N'|$, which is the maximum the NASH function can attain, therefore also maximizing for this instance.
\item\label{prop:cnash-ssjr} Consider the instance in Figure~\ref{fig:cnash-ssjr} and take $k=2$ so that $n/k=3$. Committee $W=\{b,c\}$ is the NASH winner with score $2^2\cdot4^2\cdot3^7=139968$: we represent voters $2,3,4,5,6$ so that is 2 out of $N_a=\{2,3,4\}$ and 2 out of $N_d=\{1,3,4\}$ but all voters of $N_b$ and $N_c$ and these make many cohesive groups (consisting of only 3 voters that is). This way, only one voter remains unrepresented, but we miss out on covering the two cohesive candidates $a$ and $d$. But this only gives SSJR-score $1/2$, whereas we could attain score $3/4$ with $W=\{a,b\}$ or $W=\{a,c\}$ or $W=\{b,d\}$ or $W=\{c,d\}$. For all these committees, only one cohesive candidate remains uncovered (either $b$ or $c$). However, because $b$ and $c$ are both large candidates, this harms the NASH score, and for all these committees, that is only $3^2\cdot4\cdot3^4\cdot3\cdot2^3=69984$. A last possible committee is $W=\{a,d\}$ but this is for both notions suboptimal. This shows CNASH and SSJR max are incomparable.
\item Consider the instance in \Cref{fig:cc-cnash}. There are 16 voters and 5 candidates and suppose we take $k=4$, then every candidate with four approvals, is cohesive. This makes candidates $a,b,c$ and $d$ cohesive but not candidate $e$. Optimal NASH score can be attained by electing $W=\{a,b,c,d\}$ but higher CC score is attained when removing. any one of these four candidates and elect candidate $e$ instead.
\item Consider the instance in \Cref{fig:sjr-to-cnash}, and take $k=2$ so that all candidates form a cohesive voter group. Attaining SJR is not possible here, and any committee attains the same maximal score of $2/3$. However, the only committee with that is optimal for CNASH is $W=\{b,c\}$, as it is the only one that  covers all voters.
\end{itemize}
\end{proof}

\subsection{Figures that accompany the proof of \Cref{thm:relations}}
\begin{figure}[h]\caption{Example instance for claim ``A committee $W$ that satisfies the CC axiom, may not be optimal for Strong JR". There are 12 voters and 5 candidates.}\label{fig:prop:cc-sjr}
\begin{tikzpicture}[scale=0.6, thick]

    \draw (1.5, 3) ellipse (2.2 and 0.7);
    \draw (1.5, 0) ellipse (2.2 and 0.7);

    \draw (0, 1.5) ellipse (0.8 and 1.9);
    \draw (3, 1.5) ellipse (0.8 and 1.9);

    \draw (3, 1.5) ellipse (0.35 and 0.85);

    \foreach \x in {0, 1, 2, 3} {
        \fill (\x, 3) circle (4pt);
        \fill (\x, 0) circle (4pt);
    }

    \foreach \y in {1, 2} {
        \fill (0, \y) circle (4pt);
        \fill (3, \y) circle (4pt);
    }
    
    \node[font=\small] at (-1.2, 2.5) {$a$};
    \node[font=\small] at (1.0, -1.1) {$b$};
    \node[font=\small] at (1.0, 4.1) {$c$};
    \node[font=\small] at (4.0, 2.5) {$d$};
    \node[font=\small] at (3.58, 1.5) {$e$};

\end{tikzpicture}
\end{figure}

\begin{figure}[h]\caption{Example instance for claim ``A committee $W$ that satisfies the (Semi-)Strong JR axiom, may not be optimal for CC, and vice versa". There are 14 voters and 3 candidates.}\label{fig:prop:sjr_ssjr_cc}
\begin{tikzpicture}[thick, scale=0.6]
    \node at (1.6, 2.2) {$b$};

    \draw (0, -0.94) ellipse (1.4cm and 1.52cm);
    \draw (-0.01, 0.47) ellipse (1.4cm and 1.52cm);

    \node at (1.6, -2.2) {$c$};

    \draw (2.85, -0.04) circle (0.8cm);    \node at (4.25, 0.86) {$a$};

    \tikzstyle{element}=[circle, fill=black, inner sep=0pt, minimum size=4pt]

    \node[element] at (-0.35, 1.08) {};
    \node[element] at (-0.05, 1.1) {};
    \node[element] at (0.25, 1.08) {};

    \node[element] at (-0.35, 0.00) {};
    \node[element] at (0.25, 0) {};
    \node[element] at (-0.35, -0.6) {};
    \node[element] at (0.25, -0.6) {};

    \node[element] at (-0.35, -1.45) {};
    \node[element] at (-0.05, -1.45) {};
    \node[element] at (0.25, -1.45) {};

    \node[element] at (2.6, 0.21) {};
    \node[element] at (3.1, 0.21) {};
    \node[element] at (2.6, -0.29) {};
    \node[element] at (3.1, -0.29) {};
\end{tikzpicture}
\end{figure}

\begin{figure}[h]\caption{Example instance for claim "A committee $W$ that is optimal for CC, may not be optimal for Semi-Strong JR, and vice versa." There are 15 voters and 6 candidates.}\label{fig:cc-ssjr}
\begin{tikzpicture}[
    every path/.style={line width=1pt},
    lbl/.style={font=\large}
  ]
 
     \draw [rounded corners] (0.3,-0.25)--(2,-0.27)--(0.33,-2)--cycle;                         
\draw [rounded corners, rotate=270] (0.28,-0.3)--(1.98,-0.32)--(0.31,-2.05)--cycle;        
\draw [rounded corners, rotate=180] (0.33,0)--(2.03,-0.02)--(0.36,-1.75)--cycle;        
  \draw ( 0.00, 0.00) circle (1);         
  \draw (0.03,-3.09) circle (0.67cm);     
  \draw ( 0.75, 0.75) circle (0.84cm);    
 
  \node[lbl] at (-1.72, 1.22) {$a$};
  \node[lbl] at ( 1.85, 1.37) {$d$};
  \node[lbl] at (-1.71,-1.49) {$b$};
  \node[lbl] at ( 1.86,-0.98) {$c$};
  \node[lbl] at ( 0, 0) {$e$};
  \node[lbl] at ( 1.05,-2.69) {$f$};
 
  \fill (-0.72, 1.03) circle (3pt);
  \fill (-1.29, 0.39) circle (3pt);
  \fill (-0.5, 0.39) circle (3pt);
  \fill ( 1, 0.91) circle (3pt);
  \fill ( 0.22, 0.76) circle (3pt);
  \fill ( 0.28, 0.33) circle (3pt);
  \fill ( 0.72, 0.29) circle (3pt);
  \fill (-1.22,-0.81) circle (3pt);
  \fill (-0.85,-1.16) circle (3pt);
  \fill (-0.5,-0.53) circle (3pt);
  \fill ( 0.78,-1.16) circle (3pt);
  \fill ( 1.15,-0.82) circle (3pt);
  \fill ( 0.59,-0.53) circle (3pt);
  \fill (-0.21,-2.85) circle (3pt);
  \fill ( 0.25,-3.35) circle (3pt);
\end{tikzpicture}
\end{figure}

\begin{figure}[h]\caption{Example instance for claim "A committee $W$ that is optimal for  Strong JR, may not be optimal for Semi-Strong JR." There are 9 voters and 4 candidates.}\label{fig:sjr-ssjr}
\begin{tikzpicture}[scale=0.8, thick]
\draw (4.24, 0.81) ellipse (0.85 and 0.85);

\draw (3.22, 1.09) ellipse (0.85 and 0.85);
   \draw (2.41, 1.09) ellipse (0.85 and 0.85);
        \draw (5, 1.5) ellipse (0.85 and 0.85);
        \fill (5.3, 1.8) circle (4pt);
        \fill (5.3, 1.2) circle (4pt);
        \fill (4.7, 1.8) circle (4pt);
        \fill (4.7, 1.2) circle (4pt);
        \fill (4.14, 0.51) circle (4pt);
        \fill (2.01, 1.11) circle (4pt);
        \fill (2.95, 0.92) circle (4pt);
        \fill (3.75, 0.97) circle (4pt);
        \fill (2.98, 1.34) circle (4pt);
    
    \node[font=\small] at (1.27, 1.79) {$a$};
    \node[font=\small] at (2.96, 2.24) {$b$};
    \node[font=\small] at (5.01, 0.08) {$c$};
    \node[font=\small] at (6.12, 2.28) {$d$};
\end{tikzpicture}
\end{figure}

\begin{figure}[h]\caption{Example instance for claim ``A committee $W$ that is optimal for (Semi)-strong JR, may not satisfy JR". There are 16 voters and 5 candidates.}\label{fig:prop:ssjr-jr}
\begin{tikzpicture}[scale=0.6, thick]
    \draw (1.5, 3) ellipse (2.2 and 0.7);
    \draw (1.5, 0) ellipse (2.2 and 0.7);

    \draw (0, 1.5) ellipse (0.8 and 1.9);
    \draw (3, 1.5) ellipse (0.8 and 1.9);

    \draw (5, 1.5) ellipse (0.85 and 0.85);

    \foreach \x in {0, 1, 2, 3} {
        \fill (\x, 3) circle (4pt);
        \fill (\x, 0) circle (4pt);
    }

    \foreach \y in {1, 2} {
        \fill (0, \y) circle (4pt);
        \fill (3, \y) circle (4pt);
    }

        \fill (5.3, 1.8) circle (4pt);
        \fill (5.3, 1.2) circle (4pt);
        \fill (4.7, 1.8) circle (4pt);
        \fill (4.7, 1.2) circle (4pt);
    
    \node[font=\small] at (-1.2, 2.5) {$a$};
    \node[font=\small] at (1.0, -1.1) {$b$};
    \node[font=\small] at (1.0, 4.1) {$c$};
    \node[font=\small] at (4.0, 2.5) {$d$};
    \node[font=\small] at (6.3, 1.5) {$e$};

\end{tikzpicture}
\end{figure}

\begin{figure}[h]\caption{Example instance for the claim ``A committee $W$ that is optimal for CNASH, may fail to find a committee that satisfies the SJR axiom, even when it exists." We have $N_a=\{1,2,5\}, N_b=\{5,6\}$ and $N_c=\{1,2,3,4\}$.}\label{fig:cnash-sjr}
\begin{tikzpicture}[scale=0.7, thick]
  \draw (-1.1,0) ellipse (1 and 1.5);
  \draw ( 1.1,0) ellipse (1 and 1.5);
  \draw (0,-1.1) ellipse (2 and 0.9);
  \fill (-1.1, 0.75) circle (4pt);  
  \fill (-1.1,-1) circle (4pt);  
  \fill ( 1.3, 0.75) circle (4pt);  
  \fill ( 0.9, 0.75) circle (4pt);  
  \fill ( 1.3,-1) circle (4pt);  
  \fill ( 0.9,-1) circle (4pt);  

    \node[font=\small] at (-2, 1.5) {$b$};
    \node[font=\small] at (0, -1.6) {$a$};
    \node[font=\small] at (2.0, 1.5) {$c$};

  \node[font=\tiny] at (0.91,-0.69) {1};
  \node[font=\tiny] at (1.28,-0.69) {2};
  \node[font=\tiny] at (0.9,1.05) {3};
  \node[font=\tiny] at (1.28,1.05) {4};
  \node[font=\tiny] at (-1.1,1.05) {6};
  \node[font=\tiny] at (-1.1,-0.69) {5};
\end{tikzpicture}
\end{figure}

\begin{figure}[h]\caption{Example instance for the claim ``A committee $W$ that is optimal for CNASH may not be optimal for SSJR max, and vice versa". There are 6 voters and we have $N_a=\{2,3,4\},\,N_b=\{1,2,3,5\},\,N_c=\{1,2,3,6\},\,N_d=\{1,3,4\}$. }\label{fig:cnash-ssjr}
\begin{tikzpicture}[scale=2]
    \draw[rounded corners=7.1pt] (-0.01,0.01) rectangle (1.85,0.49);
    \fill (0.25, 0.25) circle (0.07cm);
    \fill (0.75, 0.25) circle (0.07cm);
    \fill (1.25, 0.25) circle (0.07cm);
    \draw[rounded corners=7.1pt] (0.58,0.06) rectangle (2.44,0.54);
    \fill (1.75, 0.25) circle (0.07cm);
    \fill (2.25, 0.25) circle (0.07cm);
    \fill (1.25, -0.4) circle (0.07cm);
    \draw [rounded corners] (1.35,-0.67)--(0.35,0.43)--(1.39,0.43)--cycle;
    \draw [rounded corners] (1.11,0.43)--(2.11,0.43)--(1.15,-0.67)--cycle;
    \node[font=\small] at (-0.2, 0.2) {$b$};
    \node[font=\small] at (2.6, 0.3) {$c$};
    \node[font=\small] at (1.0, -0.5) {$a$};
    \node[font=\small] at (1.5, -0.48) {$d$};

  \node[font=\tiny] at (0.75,0.37) {2};
  \node[font=\tiny] at (1.74,0.37) {1};
  \node[font=\tiny] at (1.25,0.37) {3};
  \node[font=\tiny] at (1.25,-0.25) {4};
  \node[font=\tiny] at (2.24,0.37) {6};
  \node[font=\tiny] at (0.25,0.38) {5};
\end{tikzpicture}
\end{figure}

\begin{figure}[h]\caption{Example instance for claim ``A committee $W$ that is optimal for CC, may not be optimal for CNASH, and vice versa". There are 15 voters and 5 candidates.}\label{fig:cc-cnash}
\begin{tikzpicture}[scale=0.6, thick]
    \draw (1.5, 3) ellipse (2.2 and 0.7);
    \draw (1.5, 0) ellipse (2.2 and 0.7);

    \draw (0, 1.5) ellipse (0.8 and 1.9);
    \draw (3, 1.5) ellipse (0.8 and 1.9);

    \draw (5, 1.5) ellipse (0.85 and 0.85);

    \foreach \x in {0, 1, 2, 3} {
        \fill (\x, 3) circle (4pt);
        \fill (\x, 0) circle (4pt);
    }

    \foreach \y in {1, 2} {
        \fill (0, \y) circle (4pt);
        \fill (3, \y) circle (4pt);
    }

        \fill (5.3, 1.8) circle (4pt);
        \fill (5, 1.2) circle (4pt);
        \fill (4.7, 1.8) circle (4pt);
    
    \node[font=\small] at (-1.2, 2.5) {$a$};
    \node[font=\small] at (1.0, -1.1) {$b$};
    \node[font=\small] at (1.0, 4.1) {$c$};
    \node[font=\small] at (4.0, 2.5) {$d$};
    \node[font=\small] at (6.3, 1.5) {$e$};
\end{tikzpicture}
\end{figure}

\begin{figure}[h]\caption{Example instance for claim ``A committee $W$ that is optimal for SJR, may not be optimal for CNASH". The profile: $N_a=\{1,2\},\,N_b=\{1,3\},\,N_c=\{2,4\}$.}\label{fig:sjr-to-cnash}
\begin{tikzpicture}[scale=0.7, thick]
  \draw (-1.1,0) ellipse (1 and 1.5);
  \draw ( 1.1,0) ellipse (1 and 1.5);
  \draw (0,-1.1) ellipse (2 and 0.9);
  \fill (-1.1, 0.75) circle (4pt);  
  \fill (-1.1,-1) circle (4pt);  
  \fill ( 1.11, 0.75) circle (4pt);  
  \fill ( 1.12,-1) circle (4pt);  

    \node[font=\small] at (-2, 1.5) {$b$};
    \node[font=\small] at (0, -1.6) {$a$};
    \node[font=\small] at (2.0, 1.5) {$c$};

  \node[font=\tiny] at (-1.1,1.05) {1};
  \node[font=\tiny] at (-1.1,-0.66) {2};
  \node[font=\tiny] at (1.1,1.05) {3};
  \node[font=\tiny] at (1.1,-0.66) {4};
\end{tikzpicture}
\end{figure}

\subsection{Explanation of relations in \Cref{figure:relations}}
Note that there are 12 trivial relations: a committee satisfying any axiom implies that that committee also satisfies the maximization version of that axiom, and the maximization satisfies the axiom whenever possible (6 relations) and the SJR axiom is strictly stronger than the SSJR axiom (2 relations), which is again strictly stronger than the JR axiom (4 relations).

\begin{enumerate}
\item CC axiom $\not\to$ SJR axiom. If otherwise, then CC axiom $\to$ SJR axiom $\to$ SJR max, whereas by \Cref{thm:relations} CC axiom $\not\to$ SJR max.
\item CC axiom $\not\to$ SJR max: \Cref{thm:relations}
\item CC axiom $\to$ SSJR axiom: \Cref{thm:relations}
\item CC axiom $\to$ SSJR max: CC axiom $\to$ SSJR axiom by \Cref{thm:relations} and SSJR axiom $\to$ SSJR max trivially.
\item CC axiom $\to$ JR: known fact, see e.g. \cite{aziz_justified_2017}
\item CC axiom $\to$ CNASH: CC axiom $\to$ SSJR axiom by \Cref{thm:relations} and SSJR axiom $\to$ CNASH also by \Cref{thm:relations}.

\item CC max $\not\to$ SJR axiom: \Cref{thm:relations}
\item CC max $\not\to$ SJR max: If otherwise, then CC axiom $\to$ CC max $\to$ SJR max, whereas by \Cref{thm:relations} CC axiom $\not\to$ SJR max.
\item CC max $\not\to$ SSJR axiom: \Cref{thm:relations}
\item CC max $\not\to$ SSJR max: \Cref{thm:relations}
\item CC max $\to$ JR: known fact, see e.g. \cite{aziz_justified_2017}
\item CC max $\not\to$ CNASH: \Cref{thm:relations}

\item SJR axiom $\not\to$ CC axiom: If otherwise, then SJR axiom $\to$ CC axiom $\to$ CC max, whereas by \Cref{thm:relations} SJR axiom $\not\to$ CC max.
\item SJR axiom $\not\to$ CC max: \Cref{thm:relations}
\item SJR axiom $\to$ SSJR max: SJR axiom $\to$ SSJR axiom $\to$ SSJR max.
\item SJR axiom $\to$ CNASH: SJR axiom $\to$ SSJR axiom by definition and SSJR axiom $\to$ CNASH by \Cref{thm:relations}.

\item SJR max $\not\to$ CC axiom: If otherwise, then SJR max $\to$ CC axiom $\to$ CC max, whereas by \Cref{thm:relations} SJR max $\not\to$ CC max.
\item\label{item:sjr-cc} SJR max $\not\to$ CC max: If otherwise, then SJR axiom $\to$ SJR max $\to$ CC max, whereas by \Cref{thm:relations} SJR axiom $\not\to$ CC max.
\item SJR max $\not\to$ SSJR axiom: If otherwise then SJR max $\to$ SSJR axiom $\to$ SSJR max, whereas by \Cref{thm:relations} SJR max $\not\to$ SSJR max.
\item SJR max $\not\to$ SSJR max: \Cref{thm:relations}
\item SJR max $\not\to$  JR: \Cref{thm:relations}
\item SJR max $\not\to$ CNASH: \Cref{thm:relations}

\item SSJR axiom $\not\to$ CC axiom: If otherwise, then SJR axiom $\to$ CC axiom $\to$ CC max, whereas by \Cref{thm:relations} SJR axiom $\not\to$ CC max.
\item SSJR axiom $\not\to$ CC max: \Cref{thm:relations}
\item\label{item:ssjr-sjr} SSJR axiom $\not\to$ SJR max: If otherwise, then CC axiom $\to$ SSJR axiom $\to$ SJR max, whereas by \Cref{thm:relations} CC axiom $\not\to$ SJR max.
\item SSJR axiom $\to$ CNASH: \Cref{thm:relations}

\item SSJR max $\not\to$ CC axiom: If otherwise, then SSJR max $\to$ CC axiom $\to$ CC max, whereas by \Cref{thm:relations} SSJR max $\not\to$ CC max. 
\item SSJR max $\not\to$ CC max: \Cref{thm:relations}
\item SSJR max $\not\to$ SJR axiom: If otherwise, then SSJR max $\to$ SJR axiom $\to$ SSJR axiom, which is not possible by definition. 
\item SSJR max $\not\to$ SJR max: If otherwise, then SSJR axiom $\to$ SSJR max $\to$ SJR max, whereas by \Cref{item:ssjr-sjr} SSJR axiom $\not\to$ SJR max. 
\item SSJR max $\not\to$  JR: \Cref{thm:relations}
\item SSJR max $\not\to$ CNASH: \Cref{thm:relations}

\item\label{item:jr-to} JR $\not\to$ SSJR max, SJR max, CC max, CNASH: \Cref{thm:relations}
\item JR $\not\to$ SSJR axiom, SJR axiom, CC axiom: suppose JR $\to$ SSJR axiom, then JR $\to$ SSJR axiom $\to$ SSJR max, whereas that is not true, see \Cref{item:jr-to}. The same holds for the other axioms.

\item CNASH $\not\to$ CC axiom: If otherwise, then CNASH $\to$ CC axiom $\to$ CC max, whereas that is not possible by \Cref{thm:relations}.
\item CNASH $\not\to$ CC max: \Cref{thm:relations}
\item CNASH $\not\to$ SJR axiom: \Cref{thm:relations}
\item CNASH $\not\to$ SJR max: if CNASH $\to$ SJR max, then CNASH $\to$ SJR axiom whenever satisfiable (via SJR max), but that is not the case by \Cref{thm:relations}.
\item CNASH $\to$ SSJR axiom: \Cref{thm:relations}
\item CNASH $\not\to$ SSJR max: \Cref{thm:relations}
\item CNASH $\to$ JR: \Cref{thm:relations}
\end{enumerate}

\section{Supplementary material to \Cref{sec:ssjr_max}}
\begin{repproposition}{prop:SJR-score} For any approval-based committee election instance and committee $W$, the Semi-Strong JR maximization score and the Strong-JR maximization score of $W$ can both be computed in polynomial time.	
\end{repproposition}
\begin{proof}
For SSJR, iterate over all $c\notin W$, check whether $|N_c|\ge \frac{n}{k}$ and if so, check for all $i\in N_c$ whether $A(i)\cap W\neq\emptyset$. For SJR we do almost the same, now we keep track of the sets $A(i)\cap W$ for $i\in N_c$, in order to compare them.
\end{proof}

\begin{reptheorem}{thm:SJR_hard}
The Semi-Strong JR decision problem is NP-complete.
\end{reptheorem}
\begin{proof}
First we show $\textsc{ssjrdp}$ is in NP. Suppose $I\in\textsc{ssjrdp}$. Then there exists a set of candidates $W\subseteq C$, $|W|=k$, such that, for no $c\notin W$, both $|N_c|\ge n/k$ and $\exists i\in N_c: A(i)\cap W=\emptyset$. Take this set as a witness. Let the verifier $\mathbb{M}$ take as input both instance $I$ and witness $W$ and let it determine, for all $c\notin W$, for all $i\in N_c$, whether $A(i)\cap W\neq\emptyset$ by going over all $c\in W$, and moreover compute $|N_c|$. This takes time in $O(m^2n)$ and will result in confirmation. If $I\notin\textsc{ssjrdp}$, then no witness can be found, so \textsc{ssjrdp} is in NP. It remains to be shown that \textsc{ssjrdp} is NP-hard. For this, we give a polynomial-time reduction $f$ from instances of \textsc{ccdp} to instances of \textsc{ssjrdp} satisfying $I\in \textsc{ccdp} \iff f(I)\in\textsc{ssjrdp
}$.

As is common practice in covering problems, we equate the sets (candidates) with their elements (approving voters). This requires viewing candidates with the exact same set of approving voters as equal, or deleting any one of two such candidates, but electing a second of two candidates with the same set of approving voters has no effect on any of the relevant score functions anyway so in fact we are left with the same instance. To start the reduction, we copy over the set of voters $V$, the set of candidates $C$, the committee size $k$, and the voter profiles $A(i)$ for each $v_i\in V$. We then note that in CC we count the number of voters in the union of all $k$ elected candidates, and ask that of all $n$ voters at least $t$ are in this union. In our new problem, we wish the count the number of large (size at least $\frac nk$) candidates which are contained in the union of all $k$ elected candidates, and ask that of all such large candidates at least $f(t)$ are in this union.
To overcome this difference, we propose to map each voter in the original CC instance to a new candidate. That is, we add $n$ additional candidates $c_{v_i}$; one for for each $v_i \in V$. By turning voters into candidates, the idea is that the count of voters in the CC problem translates into counting candidates in the new problem instance.

To make this work, these new "voter"-candidates need to be approved by at least $\frac nk$ voters; otherwise they are not cohesive and do not count towards the score. To achieve this, we will introduce $\frac nk - 1$ ``dummy" voters, who approve of all candidates: both the $m$ original candidates, as well as the $n$ new candidates. This guarantees that \textit{all} candidates in the new instance are large enough, meaning we can simply count the total number of candidates that are subsets of the $k$ selected candidates, in computing the score in the new instance. We will always represent all the dummy voters, as they approve of every candidate.

This gives us one direction of the proof. Say we set $f(t)=t$. Then for each ``yes" instance - each CC instance where it is possible to select a committee representing at least $t$ voters - we get in the reduced instance a selection of at least $t$ large candidates which are subsets of the union of elected candidates: namely exactly the $t$ candidates $c_{v_i}$ corresponding to the $t$ represented voters $v_i$. 

We do not get the opposite direction however: firstly, because each of the $k$ selected candidates also counts towards fully represented candidates, increasing the score by $k$, and secondly, because there may be overlap within the $k$ originally elected candidates in the CC instance, causing us to represent additional candidates in the new SSJR instance. Thus, in a ``no" instance of CC, where we fail to represent $t$ voters, we also fail to fully represent $t$ of the newly added candidates $c_{v_i}$, but we may represent many more other candidates, so that in the end we represent more than $t$ of them in the new SSJR instance. 

To address the first problem, we can simply set $f(t)=t+k$, which corrects for the $k$ large candidates we elect by default (as we inflate each candidate to be larger than $\frac nk$). To address the second problem, we can add to each of the original candidates $c_j$ a unique dummy voter $v_j$ which approves only of that candidate. This ensures that, even if we elect $k$ of the original candidates, and their union contains potentially many more of the original candidates, none of those candidates will be complete subsets of the union as they each have a unique voter.

Thus, if in the original CC instance we can represent at most $x$ voters (meaning there exists some selection of $k$ candidates whose union is of size $x$), then under $f$ we obtain an SSJR instance where we can represent at most $x+k$ cohesive candidates, namely the $x$ candidates we created to correspond to the $x$ represented voters, plus the $k$ ``original" candidates (which are inflated to be cohesive as well). Note that we can assume w.l.o.g. that we indeed would elect $k$ of the original candidates, as electing one of the newly generated ``voter" candidates would never increase the score more (i.e. it adds exactly one more voter to the union of the elected candidates, and there must exist an original candidate that does the same, and may in fact grow the union by more than one).

The converse is true as well: if we can represent at most $x+k$ candidates in the SSJR instance obtained under $f$, then by the previous argument we can assume the $k$ elected candidates are all original candidates. Due to each original candidate containing a unique dummy voter, we know that none of the other $m-k$ original candidates are fully represented by the committee, and hence the other $x$ represented candidates must come from ``voter candidates" which implies we can cover $x$ voters in the original CC instance. 

To summarize, the reduction $f$ guarantees: the original CC instance can be solved with score $x$ if and only if the SSJR instance obtained under the reduction can be solved with score $x+k$. Due to the fact that $f(t)=t+k$, it follows that if the CC instance is satisfiable if and only if the SSJR instance is satisfiable, as required. 

All that remains is to argue that $f$ runs in polynomial time. We need to copy $V, C, A$ and $k$, add $n/k+m$ dummy voters, generate $n$ new candidates of size $n/k$, and finally amend $m$ candidates with one additional voter. All this can clearly be done in polynomial time.
\end{proof}

\begin{repcorollary}{cor:hardness}
The Strong JR decision problem and the decision problem related to the CNASH rule are both NP-complete.
\end{repcorollary}
\begin{proof}
The result for SJR follows from the trivial reduction: any solution for SJR is also a solution for SSJR. Similarly for CNASH, this rule outputs an SSJR committee whenever possible.
\end{proof}

\begin{reptheorem}{thm:SJR_opt_approx}
Let $k\in o(n)$. Electing $k$ candidates, no polynomial time algorithm can approximate the optimal solution for (S)SJR with ratio better than $1-1/\e$, unless $P=NP$.
\end{reptheorem}
\begin{proof}
The proof follows from the reduction $f$ used to prove NP-completeness. Suppose that we have an algorithm which can solve SSJR and guarantee a solution that is closer than $1-1/e$ of the optimal solution in terms of its SSJR-score. We will derive that $P=NP$. Let $I$ be some instance of CC. Let us write $\opt(I)$ to denote the optimal absolute CC score achievable on this instance and $CC(W)$ to denote the absolute CC score achieved by some solution $W$ on this instance.

Recall that the reduction $f$ has the property that if $W$ is some solution for $I$, then $CC(I,W)+k=\SSJR(W)$, where the latter denotes the absolute SSJR-score achieved by the same solution $W$ on the SSJR instance $f(I)$. Note that this also implies that $\opt(I)+k=\opt(f(I))$, where $\opt(f(I))$ is the optimal absolute SSJR-score achievable on instance $f(I)$.

Now, use the reduction $f$ to obtain an instance $f(I)$ of SSJR, and use the assumed algorithm to obtain a solution $W$ such that $\SSJR(W) = \alpha \opt(f(I))$ where $\alpha>1-1/\e$. We can now use the above two equalities to translate this into statements about $CC(W)$ and $OPT(I)$: the $CC$ score of this same committee $W$ versus the optimal $CC$ score attainable on the original instance $I$. To do so, we write $\SSJR(W) = CC(W)+k$ and likewise $\alpha \opt(f(I)) = \alpha (\opt(I)+k)$. Combining these two tells us $CC(W)+k = \alpha(\opt(I)+k)$, which is the same as $CC(W) = \alpha(\opt(I)+k) - k$.

Recall that $CC(W)$ and $\opt(I)$ are absolute: they are between 1 and $n$. If we assume that $k=o(n)$, then it follows that for large enough $n$ (depending on $\alpha-(1-1/\e)$ and $k$), the $-k$ term will be insignificant, and we will start seeing instances where $CC(W)> (1-1/\e)\opt(I)$, which implies that $P=NP$.
\end{proof}

\begin{reptheorem}{thm:ratio}
Algorithm~\ref{alg:greedy2-ssjr} is $(1-1/\e)$-approximate for (S)SJR and runs in polynomial time. 
\end{reptheorem}

\begin{proof}
The proof is similar to the well-known proof of the same $1-1/\e$ approximation ratio for the greedy algorithm for CC \cite{feige_threshold_1998}. 
In each of the $k$ rounds, the greedy algorithm picks the candidate which increases the CC score most. The key idea in that proof is the following. At the start of the $\ell$'th iteration (out of $k$ in total), the difference between the score of the currently generated committee (of size $\ell-1$) and the optimal committee (of size $k$) is $\opt(k) - \gr(\ell-1)$. This means that there are at least that many voters that are covered in the (fixed) optimal solution, but not yet covered by the current committee of size $\ell-1$. 

By the pigeonhole principle, there exists a candidate in the optimal solution, not yet selected in the current committee, that will cover at least $(\opt - \gr(\ell-1))/k$ of these elements. This means that in iteration $\ell$, the greedy algorithm increases the CC score by at least this amount. That is, $\gr(\ell) \geq \gr(\ell-1) + \frac{\opt - \gr(\ell)}{k}$. Writing out this sequence starting from $k$, ultimately yields the bound $\gr(k) \geq \opt (1-(1-\frac 1k)^k) \geq \opt(1-\frac 1e)$. See e.g. \citet{feige_threshold_1998} or \citet[lemma 10]{lindeboom2025diversecommitteesincompleteinaccurate} for this derivation.

We can copy this reasoning for our greedy algorithms, because the proof only relies on the fact that in round $\ell$ the CC score is increased by at least $(\opt - \gr(\ell-1))/k$. It is not hard to see that this also holds for our problem, i.e. in round $\ell$ the (S)SJR-score is increase by at least this amount, by the same pigeonhole principle argument as for CC. We can thus re-use the derivation for the CC case to obtain the same $1-1/\e$ approximation ratio.

As for the running time, we do $k$ rounds. In each round, we iterate over all candidates and compute the increase in score they would generate. We noted already in \Cref{sec:ssjr_max} that the (S)SJR-score can be computed in polynomial time. This means we can also compute the difference in score between the two committees in polynomial time. The resulting complexity is therefore also polynomial time.
\end{proof}

\section{Supplementary material to \Cref{sec:domains}}\label{app:domains}
\begin{definition}[Candidate Interval (CI)]
    Given an election $E = (N,C,\boldsymbol{A},k)$, we say that $E$ has \emph{candidate interval (CI)} preferences if there exists a linear order $\sqsupset_{C}$ over candidates $C$ such that for every voter $i$, the approval ballot $A_{i}$ forms an interval in the ordering $\sqsupset_{C}$.
\end{definition}

\begin{definition}[Candidate Extremal Interval (CEI)]
    Given an election $E = (N,C,\boldsymbol{A},k)$, we say that $E$ has \emph{candidate extremal interval (CEI)} preferences if there exists a linear order $\sqsupset_{C}$ over candidates $C$ such that for every voter $i$, both $A_{i}$ and $C\setminus A_{i}$ form intervals in the ordering $\sqsupset_{C}$.
\end{definition}

\begin{definition}[Voter Interval (VI)]
    Given an election $E = (N,C,\boldsymbol{A},k)$, we say that $E$ has \emph{voter interval (VI)} preferences if there exists a linear order $\sqsupset_{N}$ over voters $N$ such that for each candidate $c\in C$, the voters who approve of $c$ form an interval in the ordering $\sqsupset_{N}$
\end{definition}

\begin{definition}[Voter Extremal Interval (VEI)]
    Given an election $E = (N,C,\boldsymbol{A},k)$, we say that $E$ has \emph{voter extremal interval (VEI)} preferences if there exists a linear order $\sqsupset_{N}$ over voters $N$ such that for each candidate $c\in C$, both the voters who approve of $c$, and those that disapprove of $c$, form intervals in the ordering $\sqsupset_{N}$.
\end{definition}

\begin{reptheorem}{thm:domains}
\begin{itemize}
    \item[\Checkmark] \label{(s)sjr-vei}
(S)SJR is satisfiable on the VEI domain.
    \item[\XSolidBrush] \label{sjr-ci-vi-cei}
SJR is \emph{not} satisfiable on the CI, VI, and CEI domains.
    \item[\XSolidBrush] \label{prop:ssjr-ci-vi}
SSJR is \emph{not} satisfiable on the CI and VI domain.
\item[\Checkmark]\label{prop:ssjr-cei}
SSJR is satisfiable on the CEI domain.
\end{itemize}
\end{reptheorem}

\begin{proof}
\begin{itemize}
    \item[\Checkmark] \label{(s)sjr-vei}
Electing the most popular candidates on either end of the diagram, satisfies both SSJR and SJR. This way, if cohesive candidates even exists, you will pick both two that together contain all other (cohesive) candidates. So all voters that are in a cohesive candidate, are represented (by the same candidate).
    \item[\XSolidBrush] \label{sjr-ci-vi-cei}
The following instance has domain restrictions that make it both CI, VI and CEI: $A_{1} = \{a\},A_{2} = \{a,b\},A_{3} = \{a,b,c\},A_{4} = \{b,c,d\},A_{5} = \{c,d\},A_{6} = \{d\}$. With $k=2$, each candidate needs to be in the committee.
    \item[\XSolidBrush] \label{prop:ssjr-ci-vi}
CI domain: $A_1=\{a\},\, A_2=\{a,b\},\, A_3=\{b,c\},\, A_4=\{a,b,c,d\},\, A_5=\{c,d\},\, A_6=\{d\}$ with $k=2$.
VI domain: $A_1=\{a\},\, A_2=\{a,b\},\, A_3=\{b\},\, A_4=\{c\},\, A_5=\{c,d\},\, A_6=\{d\}$ with $k=3$. Here, each $N_{a}, N_{b},N_{c},N_{d}$ must be covered but can't do it with only $3$ candidates.
    \item[\Checkmark]\label{prop:ssjr-cei}
For $k\ge 2$, you can elect the two end-points of the $x$-axis, because any voter will approve either of them. For $k=1$ the only candidates that are cohesive, are those that all voters approve of, so just elect any of them. 
\end{itemize}
\end{proof}

\begin{reptheorem}{thm:SJR-max-CI-poly}
Maximizing the (S)SJR-score can be efficiently done on the Candidate Interval (CI) and Voter Interval (VI) domain.
\end{reptheorem}
\begin{proof}
We provide the proof for SJR (with the SSJR version being analogous). First, we set $C' = C\setminus\{c: |N_{c}|<\nicefrac{n}{k}\}$. Now, for every $c\in C$, find the set $T(c) = \{d\in C : |N_{d}|\geq \nicefrac{n}{k}\land c\in \bigcap_{i\in N_{d}}A_{i}\}$ of candidates that candidate $c$ covers w.r.t. to SJR. This can be done in polynomial-time. We now consider the respective domains.
\begin{itemize}
    \item Take the CI ordering $\sqsupset_{C}$, and observe that for every candidate $c$, the set $T(c)\cup\{c\}$ forms an interval in the CI ordering $\sqsupset_{C}$.
    \item Consider a VI ordering $\sqsupset_{N}$. For each candidate $c$, $N(c)$ is an interval in $\sqsupset_{N}$. Sort the candidates according to their leftmost supporter in $\sqsupset_{N}$. This provides an ordering of the candidates $\sqsupset_{C}'$. Observe that the set $T(c)\cup\{c\}$ forms an interval w.r.t. the candidate ordering $\sqsupset_{C}'$.
\end{itemize}  

Thus, our SJR maximization problem requires us selecting $k$ candidates $c$ such that the collection of their corresponding intervals $T(c)\cup\{c\}$ maximizes the number of candidates covered w.r.t. SJR.

Note that this is an equivalent problem to \textsc{Interval Maximum Coverage} \cite{CohenGLO17,CohenGonen19,Mettanant26}: given a set of points $P$ with each having integer weights, a collection of intervals of $P$, and a budget $k$, the goal is to select $k$ intervals so as to maximize the sum of weights of the covered points. Specifically, each candidate in our CI election instance corresponds to a point having a weight of $1$; the collection of intervals is set to the collection of sets $T(c)\cup\{c\}$, one for each candidate $c\in C$; and we maintain the same budget of $k$.

Given this equivalence, we can leverage the algorithm introduced by \citet{CohenGonen19} that performs the maximization task in polynomial-time (see Algorithm $2$ and Theorem $2$ in their paper \cite{CohenGonen19}).
\end{proof}

\begin{reptheorem}{thm:fpt}
Assuming we are given the candidate-deletion set $D$, there exists an FPT algorithm to perform (S)SJR maximization with respect to $|D|$.
\end{reptheorem}
\begin{proof}
We take a similar approach to \citet{SornatWX22} who provided an FPT algorithm w.r.t. $|D|$. First, we must guess some subset of the candidate deletion set $D$ that is to be the \emph{pre-elected winners}. Note that we have to consider at most $2^{|D|}$ subsets of $D$ to do so. 

Secondly, we remove all candidates in $D$ from the instance so as to make an election in the CI domain, and for a guessed subset $D'$, we translate our modified election into an instance of \textsc{Interval Maximum Coverage} as was done in the proof of Theorem~\ref{thm:SJR-max-CI-poly} with the only difference being in the weights of the candidate points. Specifically, for the candidates that remain in the CI instance, we check if any of them have been covered w.r.t to SJR by any of the guessed pre-elected winners in $D'$, i.e., we find all $c\in C\setminus (D\cup \{c: |N_{c}|<\frac{n}{k}\})$ with $c\in T(c')$ for some $c'\in D$. For all such candidates $c$, we set their corresponding point in the \textsc{Interval Maximum Coverage} problem to have a weight of $0$. This prevents us from double-counting candidates that are already covered w.r.t. SJR when we do the maximization. We set the weights of the remaining, uncovered candidates to $1$. 

Finally, we employ the polynomial-time algorithm of \citet{CohenGonen19} to obtain the SJR maximizing committee that supposes $D'$ to be the set of pre-elected winners. We keep track of all committees for each considered $D'$, and at the end, we return the committee with the highest SJR maximization score.
\end{proof}

\section{Supplementary material to \Cref{sec:DR}}
\begin{repobservation}{prop:dr-sat}
Not all instances allow for a committee that satisfies Distinctive Representation. Even on CEI/VEI instances, DR is insatisfiable.
\end{repobservation}
\begin{proof} Any instance that does not have $k$ candidates with size $\lceil n/k\rceil$ serves as a witness for the general case. An example of an instance that also satisfies CEI/VEI domain properties is an instance where all voters approve of all candidates.
\end{proof}

\begin{reptheorem}{thm:dr-hard}
The Distinctive Representation Decision Problem is NP-hard.
\end{reptheorem}
\begin{proof}
First we show DR is in NP. Consider a yes-instance $I=(V,C,A,k)$ of DR, then a witness $W$ satisfies JR (this can be verified in polynomial time) and each candidate has $\lceil n/k\rceil$ supporters (this can also be done efficiently by going over the candidates and voters). This shows DR is in NP. 

The proof of hardness follows by a reduction from X3C. Consider any instance $I=(U,\mathcal{S})$ of X3C. We map this to an instance of DR with $3q+4$ voters, $3q+|\mathcal{S}|$ candidates, and where $k=q$. To do this, we first copy all the elements $u_i$ to what we call element-voters $v_i$. Then we add 4 dummy-voters $d_1, d_2,d_3$ and $d_4$. Because $k=q$, we have $\lceil n/k\rceil = \lceil\frac{3q+4}{q}\rceil=3+\lceil\frac{4}{q}\rceil=4$ for any $q\ge 4$, which we will assume because this limits us to at least 16 voters, but the hardness results are only interesting from somewhat large voter sets anyway. For the candidates, we make one set-candidate $c_i$ for each set $s_i\in\mathcal{S}$ as follows: $c_i=s_i\cup\{d_1\}$. We also make one element-candidate for each element $u_i\in U: c'_i=\{v_i,d_2,d_3,d_4\}$. 

When $I\in$ X3C, $q$ sets of $\mathcal{S}$ cover $U$ `exactly'. But if $q$ sets $s_i$ cover all elements in $U$, then in $f(I)$ this translates to $q$ set-candidates $c_i$ covering all element-voters plus the dummy $d_1$, in which case all voters except $d_2,d_3$ and $d_4$ are covered. The set $\{d_2,d_3,d_4\}$ is too small to form a cohesive group, so electing the $q=k$ set-candidates satisfies JR. Because all $c_i$ contain $4=\lceil n/k\rceil$ candidates, electing them also satisfies DR. Thus, when $I$ is in X3C, $f(I)$ is in DR.

When $I\notin X3C$, then no collection of $q$ 3-element sets will cover $U$ exactly. This means for any set $S\subset \mathcal{S}$ with $|S|=q$, there exists an element $u_i$ that remains uncovered. In $f(I)$ this means there exists no selection of $q$ set-candidates that covers all element-voters. But if one element-voter, say $v_i$, remains uncovered, then this also prohibits the element-candidate $c'_i$ from being elected. In that case, all of $v_i,d_2,d_3,d_4$ remain uncovered and we have violated JR.

This shows that $I\in X3C\iff f(I)\in DR$. Our function $f$ runs in polynomial time: we have copied $U$, added four dummy voters, mapped $q$ to $k$, copied all sets $s_i$ and amended them with one additional element/voter, and finally added $3q$ sets/candidates of four voters. This can all be done in polynomial time.
\end{proof}

\section{Supplementary material to \Cref{sec:LocalDR}}
\begin{repproposition}{prop:local-dr-alg}
The LS-DR algorithm satisfies Local DR and runs in polynomial time.
\end{repproposition}
\begin{proof}
The algorithm satisfies Local DR by construction. Regarding the run time, a first question is whether the while-loop terminates. Consider the objective function $DR$ that maps a committee $W$ to $\sum_{c\in W}||N_c|-\lceil n/k\rceil|$. This function decreases strictly over the run of the algorithm, which ensures that no candidate that is swapped out, can be swapped in again. Thus, the while-loop will terminate in finite time. Furthermore, the objective function value we can only decrease in steps of size $\ge 1$, and for $k\ge 2$, the support of DR is $\{0,\ldots,k\cdot (n-\lceil \nicefrac{n}{k}\rceil)\}$. This means at most $k\cdot (n-\lceil \nicefrac{n}{k}\rceil)$ iterations of the while-loop occur. In each iteration, for at most $m^2$ pairs of candidates $(c,c')$ we have to check whether $(W\cup\{c'\})\setminus\{c\}$ satisfies JR (which can be done in polynomial time) and for each pair the improvement in score must be computed, which requires computing $|N_c|$ and $|N_{c'}|$ (this takes $O(n)$ time). Finding the best swap requires comparing all improvements which takes $O(m^2)$ time. Swapping candidates takes constant time. Lastly, a JR satisfying starting committee has to be chosen. This can be done in polynomial time using e.g. the Greedy-CC algorithm. All in all every step takes polynomial time so the algorithm runs in polynomial time.
\end{proof}

\begin{repproposition}{prop:find-dr-committee}
There exist instances where a DR committee exists, but LS-DR fails to find it.
\end{repproposition}
\begin{proof}
Consider the following instance. Suppose $N=\{v_1,\ldots,v_{24}\}$ ($n=24$) and $k=4$ so that $n/k=6$. There are 7 candidates: $c_1,\ldots c_7$. The approvals are as follows: $N_{c_1}= \{v_1,\ldots, v_4\},\, N_{c_2}\cap N_{c_3}\cap N_{c_4}=\{v_5,\ldots, v_9\}$ and all have one unique voter of $\{v_{10},v_{11},v_{12}\}$. Then $N_{c_5}=\{v_{13},\ldots,v_{v_{18}}\}$, $N_{c_6}=\{v_{19},\ldots,v_{24}\}$ and $N_{c_7}=N_{c_5}\cup N_{c_6}$. Now any committee $W$ with two candidates from the set $\{c_2,c_3,c_4\}$ and both of the candidates $c_5$ and $c_6$ satisfies DR. After all, all these candidates have support size $n/k$ and no group of more than $n/k$ voters in one candidate remains unrepresented (the only voters that remain unrepresented are one of $\{v_{10},v_{11},v_{12}\}$, and $N_{c_1}$). However, an example of a local optimum is the following: $W=\{c_2,c_3,c_4,c_7\}$. This is clearly suboptimal because $c_7$ is very large, but we cannot swap out $c_7$ and exchange it by $c_5$ and $c_6$, because that requires first replacing any of $c_2,c_3,c_4$ by either $c_5$ or $c_6$, whereas they are the same size, so the LS algorithm will not do that. (Note that this suboptimal committee does actually maximize SJR-score.)
\end{proof}

\begin{reptheorem}{thm:DR-proportionality}
Local DR does not imply EJR and vice versa. DR does not imply priceability, and priceability does not imply Local DR.
Local DR does not imply (S)SJR, and (S)SJR does not imply Local DR.
\end{reptheorem}

\begin{proof}
Local DR does not imply EJR nor PJR and vice versa. 
: Consider the following instance, also portrayed in \Cref{fig:DR-proportionality}. We have $n=15$ and the approval sets $N_a=N_b=\{1,2,3,4,5,6\},\,N_c=\{7,8,9\},\,N_d=\{9,10,11\},\,N_e=\{12,13,14\},\,N_f=\{9,10,14\},\,N_g=\{10,14,15\}.$ Suppose $k=5$, then $n/k=3$ and both EJR and PJR require electing both $a$ and $b$. Any three of the other candidates will suffice to fill the committee, because all other candidates are only cohesive, and no group of three voters will be left out when three candidates are elected. But Local DR will never elect both $a$ \emph{and} $b$, but always just one of these. The other candidate is too large and Local DR will swap to another candidate that takes size precisely $n/k=3$. Note that, in this way, Local DR can even cover all voters, by electing e.g. $W=\{a,c,d,e,g\}$.

DR does not imply priceability, and priceability does not imply Local DR: Take $k=2$ and the $5$ voters $A_{1}=A_{2}=A_{3}=\{a,b\}, A_{4} = A_{5} = \{c\}$, each voter has budget $\nicefrac{2}{5}$. Committee $W=\{a,b\}$ provides DR and thus, it provides Local DR. 
However, it is not priceable. If the price per committee member is set to be strictly higher than $\nicefrac{3}{5}$ then voters $\{1,2,3\}$ cannot afford to fund the committee, but if it is at most $\nicefrac{3}{5}$ then voters $\{4,5\}$ have more in leftover funds than the price, which violates priceability. We can reuse this same election instance to show that in the other direction, priceability does not imply Local DR. Consider the committee $W=\{a,c\}$. This is priceable, hence JR is satisfied, but it fails Local DR. To see this, observe that candidate $c$ can be swapped with candidate $b$ to improve the committee's DR-score while JR is preserved since voter $\{4,5\}$ are not a $1$-cohesive group.

A Local DR winning committee is not always an (S)SJR winning committee, and vice versa: A Local DR winning committee can be found in polynomial time using e.g. Algorithm~\ref{alg:ls-DR}, but (S)SJR optimization cannot be done efficiently. This proves one direction. For the other direction, consider again the example from \Cref{prop:monroe-dr}, then both SSJR and SJR would elect one of the candidates $c_1,\ldots,c_{\ell}$ but this is not locally optimal for DR. This example also serves for the other direction, actually.
\end{proof}

\begin{figure}[h]\caption{Instance in Proposition~\ref{thm:DR-proportionality}, with approval sets $N_a=N_b=\{1,2,3,4,5,6\},\,N_c=\{7,8,9\},\,N_d=\{9,10,11\},\,N_e=\{12,13,14\},\,N_f=\{9,10,14\},\,N_g=\{10,14,15\}$.}\label{fig:DR-proportionality}
\begin{tikzpicture}[scale=1.7, thick]
\begin{scope}[shift={(0, 0.5)}]
    \draw (0, 0) circle (1);
    \draw (0, 0) circle (0.8);
        
    \fill (-0.45, 0.3) circle (0.07);
    \fill (0, 0.3) circle (0.07);
    \fill (0.45, 0.3) circle (0.07);
        
    \fill (-0.45, -0.3) circle (0.07);
    \fill (0, -0.3) circle (0.07);
    \fill (0.45, -0.3) circle (0.07);
\end{scope}
\begin{scope}[shift={(1.5, 0.5)}]
    \draw[rounded corners=7.1pt] (0,0) rectangle (1.5,0.5);
    \fill (0.25, 0.25) circle (0.07cm);
    \fill (0.75, 0.25) circle (0.07cm);
    \fill (1.25, 0.25) circle (0.07cm);
    \draw[rounded corners=7.1pt] (1,0.05) rectangle (2.5,0.55);
    \fill (1.75, 0.25) circle (0.07cm);
    \fill (2.25, 0.25) circle (0.07cm);
    \draw[rounded corners=7.1pt, decorate=false] (0.25,-0.5) rectangle (1.79,0);
    \fill (0.5, -0.25) circle (0.07cm);
    \fill (1.0, -0.25) circle (0.07cm);
    \fill (1.5, -0.25) circle (0.07cm);
    \fill (2.05, -0.25) circle (0.07cm);
    \draw [rounded corners] (1.1,-0.64)--(2.5,-0.64)--(1.8,0.7)--cycle;
    \draw [rounded corners] (0.75,0.64)--(2.15,0.64)--(1.4,-0.7)--cycle;
\end{scope}

\node[font=\small, minimum width=11pt] at (-0.8,1.43) {$a$};
\node[font=\small] at (-1,1.14) {$b$};
\node[font=\small] at (1.51,1.12) {$c$};
\node[font=\small] at (4.08,1.13) {$d$};
\node[font=\small] at (1.74,-0.14) {$e$};
\node[font=\small] at (2.35,1.36) {$f$};
\node[font=\small] at (4.07,0.07) {$g$};
\node[font=\tiny] at (-0.35,0.95) {$1$};
\node[font=\tiny] at (0.11,0.95) {$2$};
\node[font=\tiny] at (0.55,0.95) {$3$};
\node[font=\tiny] at (-0.35,0.3) {$4$};
\node[font=\tiny] at (0.11,0.3) {$5$};
\node[font=\tiny] at (0.55,0.3) {$6$};
\node[font=\tiny] at (1.85,0.85) {$7$};
\node[font=\tiny] at (2.35,0.85) {$8$};
\node[font=\tiny] at (2.85,0.85) {$9$};
\node[font=\tiny] at (3.4,0.85) {$10$};
\node[font=\tiny] at (3.9,0.85) {$11$};
\node[font=\tiny] at (2.15,0.35) {$12$};
\node[font=\tiny] at (2.65,0.35) {$13$};
\node[font=\tiny] at (3.15,0.35) {$14$};
\node[font=\tiny] at (3.7,0.35) {$15$};
\end{tikzpicture}
\end{figure}

\begin{repproposition}{prop:monroe-dr}
A Monroe/Chamberlin Courant winning committee does not satisfy (Local) DR, and vice versa.
\end{repproposition}
\begin{proof}
Suppose $n=100,\,k=10$ and so $n/k=10$. There are $\ell\ge10$ candidates $c_1,\ldots,c_\ell$ that all voters approve of. Moreover, there are 10 candidates that \emph{nearly} partition the voters into groups of equal size. Candidates $c_{\ell+1},\ldots,c_{\ell+8}$ all contain 10 voters and, writing these candidates as sets of voters, these eight candidates are disjoint of one another. Then there are $c_{\ell+9}$ and $c_{\ell+10}$, which also contain 10 voters, and are disjoint from candidates $c_{\ell+1},\ldots,c_{\ell+8}$, but the pair overlaps by 1 voter, which means there is 1 voter left that is not in any of $c_{\ell+1},\ldots,c_{\ell+10}$. Now, an optimal DR assignment is available: elect $c_{\ell+1},\ldots,c_{\ell+10}$. These all have $n/k=10$ supporting voters and only one voter remains unrepresented, so JR is satisfied. Any committee $W$ that contains a candidate from $\{c_1,\ldots,c_{\ell}\}$ is not locally optimal, because we can swap out this big candidate and replace it for one of size $n/k$ while maintaining JR. But an optimal Monroe assignment will not accept this one voter remaining uncovered and will pick at least one candidate from $c_1,\ldots,c_{\ell}$ in order to prevent that, so that an optimal Monroe committee for this instance has an assignment where all voters are mapped to a candidate they approve of. The same holds for a Chamberlin-Courant winning committee; all voters can be covered by taking a candidate from $\{c_1,\ldots,c_{\ell}\}$, so that is what CC will do. This shows CC/Monroe and Local DR are distinct on this instance. In any, one direction already holds for computational reasons: a Local DR winning committee cannot be guaranteed to be a CC/Monroe winner because the latter we cannot find efficiently (unless P=NP), whereas we have an efficient algorithm for finding Local DR committees.
\end{proof}

\begin{figure}
\begin{tikzpicture}[scale=0.6,
    every node/.style={font=\large},
    myarrow/.style={->, line width=1pt}
]
    \node (dr)      at (0, 3)   {DR};
    \node (localdr) at (-3, 0)  {Local DR};
    \node (bjr)     at (3, 0)   {BJR};

    \draw[myarrow] (dr) -- (localdr);
    \draw[myarrow] (dr) -- (bjr);
\end{tikzpicture}
\caption{Relations between DR, Local DR and BJR for approval ballots. Arrows denote logical implications, the absence of an arrow denotes logical independence.
}\label{fig:dr-bjr}
\end{figure}

\begin{reptheorem}{thm:DR-BJR}
        DR implies BJR for approval ballots, but
        a committee $W$ that satisfies Local DR needn't satisfy BJR for approval ballots, and vice versa.
\end{reptheorem}
See also \Cref{fig:dr-bjr} for the relations between these axioms.

\begin{proof}
When a committee satisfies DR, $|N_c|=\lceil n/k\rceil$ for all $c\in W$. Define $\omega$ such that all voters that are covered once, are mapped to the candidate they are covered by. This is possible because no candidate covers more than $\lceil n/k\rceil$ voters. All voters that are covered multiple times are mapped to any one of the candidates they are covered by. All uncovered voters are mapped to any of the remaining spots left by voters that were covered multiple times. Now, because DR satisfies JR, no group of $n/k$ uncovered voters can form a coalition $S$ because they cannot satisfy requirement (ii). Our mapping $\omega$ does also not allow for a coalition $S$ of covered voters, since all covered voters are also assigned to a candidate they approve of (because no excessively large candidates are elected).

For the other direction, imagine the following instance: we have $n=12$ voters and $k=6$, so that $n/k=2$. We have 7 candidates $c_1,\ldots,c_7$. Candidates $c_1$ and $c_2$ are approved by all voters. Candidates $c_3,\ldots,c_7$ all have 2 approving voters, but, though all candidates are distinct, they overlap slightly, so that a total of 8 voters are covered by these 5 candidates (see \Cref{fig:prop:LDR-BJR}). That means there are 4 voters left that are covered only by candidates $c_1$ and $c_2$. Now a BJR committee must elect both $c_1$ and $c_2$, because otherwise, for all assignments $\omega$, there will be at least 2 voters that cannot be assigned to a candidate they approve of, whereas such a candidate exists. But a Local DR winning committee will always swap out of such a BJR winner, because JR is still satisfied when we elect only $c_1$ \emph{or} $c_2$, and all 5 other candidates are still cohesive, but smaller.
\end{proof}

\begin{figure}[h]\caption{Instance in Theorem~\ref{thm:DR-BJR}, where there are 7 candidates and 12 voters}\label{fig:prop:LDR-BJR}
\begin{tikzpicture}[thick, scale=0.4]
    \draw (0,0) circle (4.5);
    \draw (0,0) circle (4.1);

    \def\rsmall{1.0}    
    \def\rdot{0.15}     

    \begin{scope}[shift={(0.4, 1.8)}]
        \draw (0,0) circle (\rsmall);
        \fill (0, 0.35) circle (\rdot);
        \fill (0,-0.35) circle (\rdot);
    \end{scope}

    \begin{scope}[shift={(-2.0, 0.1)}]
        \draw (0, 0.65) circle (\rsmall);  
        \draw (0,-0.65) circle (\rsmall);  
        \fill (0, 1.1) circle (\rdot);     
        \fill (0, 0) circle (\rdot);       
        \fill (0,-1.1) circle (\rdot);     
    \end{scope}

    \begin{scope}[shift={(10pt,-54pt)}]
        \draw (0, 0.65) circle (\rsmall);  
        \draw (0,-0.65) circle (\rsmall);  
        \fill (0, 1.1) circle (\rdot);     
        \fill (0, 0) circle (\rdot);       
        \fill (0,-1.1) circle (\rdot);     
    \end{scope}

    \begin{scope}[shift={(2.3, 0)}]
        \fill (-0.3, 0.3) circle (\rdot);  
        \fill ( 0.3, 0.3) circle (\rdot);  
        \fill (-0.3,-0.3) circle (\rdot);  
        \fill ( 0.3,-0.3) circle (\rdot);  
    \end{scope}
\node[font=\small] at (-3.67,3.64) {$c_1$};
\node[font=\small] at (-3.67,2.83) {$c_2$};
\node[font=\small] at (-3.45,0.35) {$c_3$};
\node[font=\small] at (-3.45,-0.84) {$c_4$};
\node[font=\small] at (1.83,1.2) {$c_5$};
\node[font=\small] at (1.83,-1.69) {$c_6$};
\node[font=\small] at (1.83,-3.02) {$c_7$};
\end{tikzpicture}
\end{figure}

\section{Supplementary material to Section~\Cref{sec:exp}}\label{app:exp}

\subsection{(S)SJR}
The table below shows the rules' results in approximating the optimal (S)SJR-score for the two real-world datasets. A first observation is that all rules do really well in general, especially compared to the theoretically optimal approximation ratio of $1-1/\e$ for both SJR and SSJR. When we take a closer look, we see that SJR and SSJR do perform particularly well on their own domain, and outperform all other rules, with the exception of the simplified CNASH rule which even beats SSJR on their own field (averaging over Situ and SCOTUS data). Another interesting note is that SJR performs relatively bad in terms of SSJR-score, and vice versa. This shows these rules, though theoretically optimal for their own optimization measure, and conceptually similar, are indeed really tailored towards their own optimization goal.  

\begin{table}[h]
 \centering
 \begin{tabular}{|c|c|c|c|}
 \hline
& \multicolumn{1}{c|}{} & \multicolumn{1}{c|}{Situ} & \multicolumn{1}{c|}{SCOTUS} \\
 \hline
 \parbox[t]{2mm}{\multirow{6}{*}{\rotatebox[origin=c]{90}{SSJR}}} & Greedy SSJR & $1.0$ & $0.949$\\
 & Greedy SJR & $1.0$ & $0.877$\\
 & Seq. CC & $0.914$ & $0.953$\\
 & MES & $0.987$ & $0.905$\\
& CNASH & $1.0$ & $0.986$\\
& LS-DR & $0.996$ & $0.795$\\
 \hline
 \parbox[t]{2mm}{\multirow{6}{*}{\rotatebox[origin=c]{90}{SJR}}} & Greedy SSJR & $1.0$ & $0.877$\\
 & Greedy SJR & $1.0$ & $0.994$\\
 & Seq. CC & $0.9$ & $0.974$\\
  & MES & $0.987$ & $0.988$\\
& CNASH & $0.989$ & $0.981$\\
& LS-DR & $0.998$ & $0.977$\\
 \hline
 \end{tabular}
 \caption{Table with average (S)SJR approximation ratios for the rules on election datasets. Each cell is the average taken over profiles in the dataset with committees sizes $k\in\{2,7\}$.}
\label{table:exp_scores}
 \end{table}
 
\subsection{Implementation details}

The following sections provides details on the ILP encodings of the algorithms used for finding committees that satisfy SSJR and SJR. We implemented these in \emph{Python} using the \emph{Gurobi} solver.

\subsubsection{Finding SJR committees}
For this check, we wish to check for every $1$-cohesive group $N'$, if there is a committee member that all members of $N'$ approve of. Since there are a large number of $1$-cohesive groups, we the use \emph{constraint generation} technique. Specifically, for a committee $W$ that the solver poses as a possible solution, we use a verification oracle (implemented as an ILP) to check if there is a $1$-cohesive group that causes a failure of SJR, and we add a constraint to this ILP that ensures that this group must collectively agree on a committee member.

\paragraph{Variables:}
We use the following variables:

\begin{itemize}
    \item $y_c \in \{0, 1\}$ is a binary variable for every candidate $c \in C$, where $y_c = 1$ if candidate $c\in W$, and $0$ otherwise.
\end{itemize}


\paragraph{Objective:} Since this is a feasibility problem, we set the objective to $0$.

\paragraph{Constraints}
We start the following constraint:
\begin{itemize}
    \item Choose $k$ candidates for the committee: 
    \[
    \sum_{c \in C} y_c = k
    \]
\end{itemize}

\paragraph{Procedure:}
\begin{enumerate}
    \item Take a committee $W$ chosen by the solver.
    \item We use another ILP (detailed in the next subsection titled ``SJR verification") as an oracle to check if $W$ violates SJR. 
    \item If SJR is violated, the oracle returns a $1$-cohesive group $N'$ that is witness to this SJR violation. We add the following constraint and ask the solver to try again with a new committee. $$\sum_{c \in \bigcap_{i \in N'} A_i} y_c \geq 1$$
    \item For a current committee $W$ satisfying all current constraints, if the oracle cannot find any more SJR violations, then it returns $W$. If all committees have been tested and all violate SJR, then there is no SJR committee, so the algorithm indicates infeasibility. 
\end{enumerate}

\subsubsection{SJR verification}
Given a committee $W$, this ILP is used to find a violation of SJR. Specifically, it is used to search for a $1$-cohesive group $N'$ such that no candidate in $W$ is approved by all members of $N'$. 

\paragraph{Variables:} 
We use the following variables:
\begin{itemize}
    \item $z_d \in \{0, 1\}$ for every candidate $c \in C \setminus W$. It equals $1$ if candidate $d$ is the candidate that our counterexample $1$-cohesive group $N'$ collectively approves of.
    \item $v_i \in \{0, 1\}$ for every voter $i \in N$. It equals $1$ if voter $i \in N'$.
\end{itemize}

\paragraph{Objective}
Maximise the size of the violating $1$-cohesive group $N'$:
\[
\max_{N'\subseteq N} \sum_{i \in N'} v_i
\]

We set this particular objective since we will add the chosen $1$-cohesive group $N'$ (if it is indeed a counterexample) to the set of $1$-cohesive groups to be satisfied, so we aim to pick the largest such group.

\paragraph{Constraints}
We have the following constraints:
\begin{itemize}
    \item Only one candidate is considered:
    \[
    \sum_{d \in C \setminus W} z_d = 1
    \]

\item If voter $i$ does not approve of any $d \in C \setminus W$, then $v_i = 0$, and so voter $i$ is not a part of $N'$: 
\[
v_i \leq \sum_{d \in C \setminus W \,:\, d \in A_i} z_d \quad \forall i \in N
\]

\item Every committee member $c \in W$, some voter in $N'$ disapproves of $c$:
\[
\sum_{i \in N \,:\, c \notin A_i} v_i \geq 1 \quad \forall c \in W
\]

\item The group must be large enough: $$\sum_{i \in N} v_i \geq \frac{n}{k}$$.
\end{itemize}

\noindent
If the solver finds a feasible solution then it has found a SJR-violating $1$-cohesive group. But, if no such group exists, then it indicates infeasibility which means that SJR is satisfied by $W$.

\subsubsection{Finding SSJR}
Here, we first identify all of the voters that approve of some $1$-cohesive candidate, and amongst these voters, if any are unrepresented by the committee $W$, this represents a violation of SSJR.

\paragraph{Variables:} 
We use the following variables:
\begin{itemize}
    \item $y_c \in \{0, 1\}$ for every candidate $c \in C \setminus W$ where $y_c=1$ if candidate $c$ is the candidate that our (potential) counterexample $1$-cohesive group $N'$ collectively approves of. Otherwise, $y_c=0$ .
    \item $v_i \in \{0, 1\}$ for every voter $i \in N$ where $v_i = 1$ if voter $i \in N'$, and $v_i = 0$, otherwise.
\end{itemize}
We enumerate the set of voters who approve of a $1$-cohesive candidate: \[
E_{N} = \{i\in N\mid \exists c\in C \text{ s.t. } c\in A_{i}, |N_{c}|\geq \nicefrac{n}{k}\}.
\] 

\noindent
These are the voters that could cause a violation of SSJR.

\paragraph{Objective:} Since this is a feasibility problem, we set the objective to $0$.

\paragraph{Constraints}
\begin{itemize}
    \item Choose $k$ candidates for the committee: 
    $$\sum_{c \in C} y_c = k$$ 
    \item Every potentially violating voter must be represented: \[ \sum_{c \in A_i} y_c \geq 1 \quad \forall i \in E_{N}\]
\end{itemize}

\subsection{Additional Plots}

This section contains plots that were omitted from the main text.

\begin{figure}[h]
\centering
\begin{subfigure}[b]{0.5\textwidth}
        \centering
        \includegraphics[width=\linewidth]{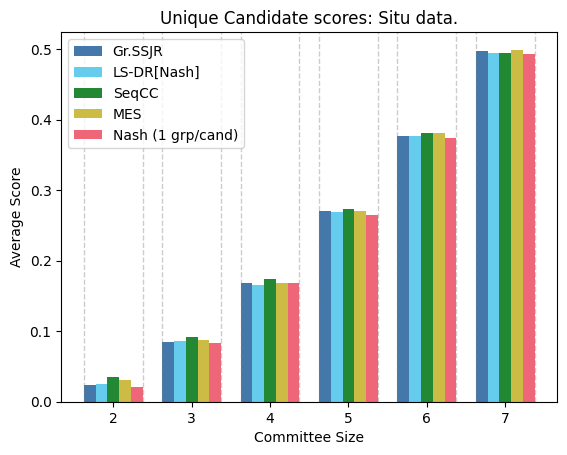}
    \end{subfigure}~\begin{subfigure}[b]{0.5\textwidth}
        \centering
        \includegraphics[width=\linewidth]{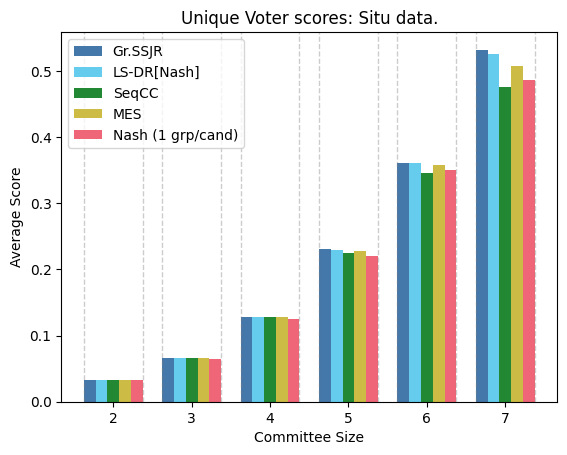}
    \end{subfigure}
\caption{Bar plots for the unique candidate and unique voter scores of the rules on the Situ dataset.}
\label{fig:uniq_cand_vote_situ}
\end{figure}

\begin{figure}[h]
\centering
\includegraphics[width=0.5\textwidth]{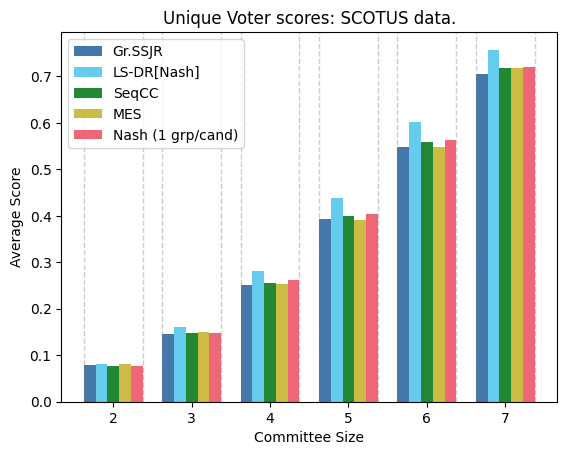}
\caption{Bar plot for the unique vote scores of the rules on the SCOTUS dataset}
\label{fig:uniq_vote_scotus}
\end{figure}

\begin{figure}[h]
\centering
\begin{subfigure}[b]{0.5\textwidth}
        \centering
        \includegraphics[width=\linewidth]{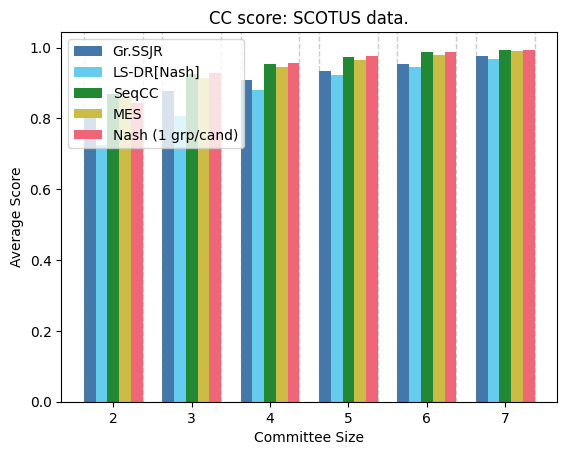}
    \end{subfigure}~\begin{subfigure}[b]{0.5\textwidth}
        \centering
        \includegraphics[width=\linewidth]{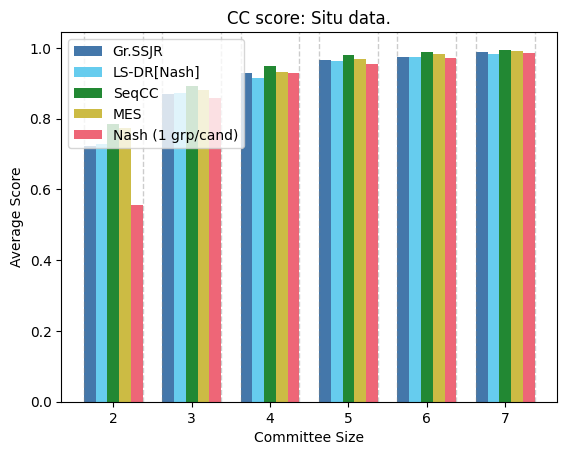}
    \end{subfigure}
\caption{Bar plots for the CC scores of the rules on the SCOTUS and Situ datasets.}
\label{fig:cc_scores}
\end{figure}

\begin{figure}[h]
\centering
\begin{subfigure}[b]{0.315\textwidth}
        \centering
        \includegraphics[width=\linewidth]{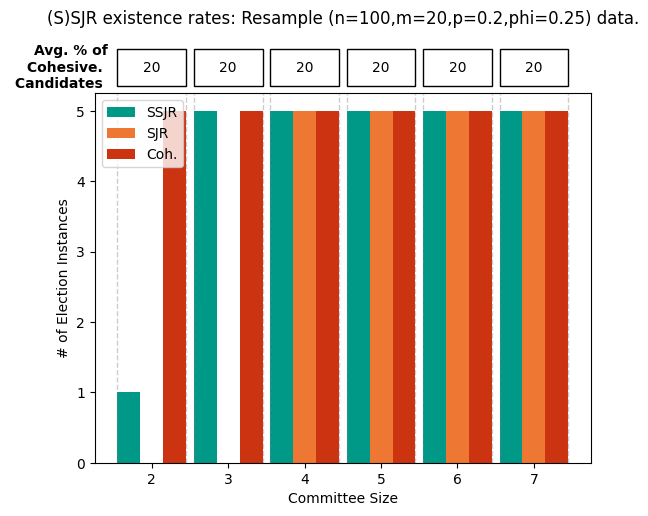}
    \end{subfigure}~\begin{subfigure}[b]{0.315\textwidth}
        \centering
        \includegraphics[width=\linewidth]{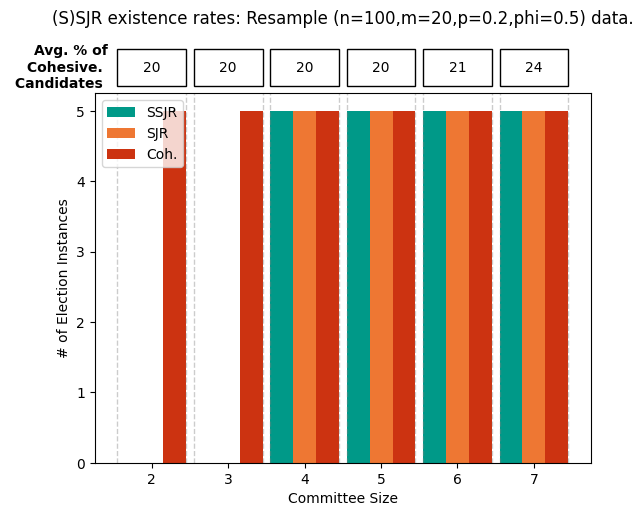}
    \end{subfigure}
    ~\begin{subfigure}[b]{0.315\textwidth}
        \centering
        \includegraphics[width=\linewidth]{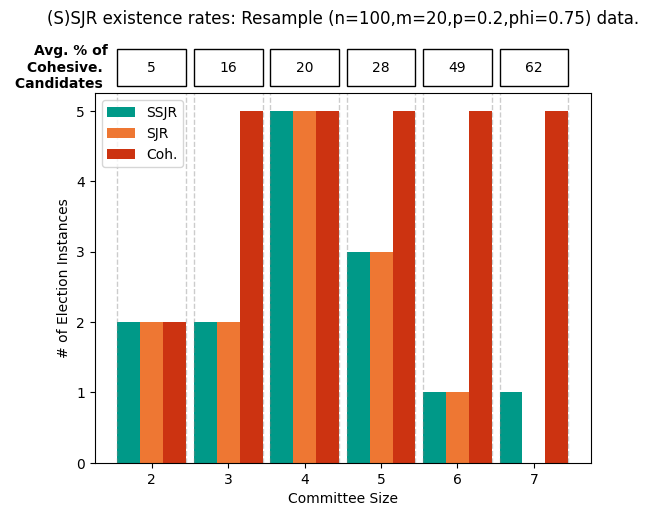}
    \end{subfigure}
\caption{Bar plots for the (S)SJR existence rates of the rules on the synthetic datasets.}
\label{fig:(s)sjr_syn_scores}
\end{figure}

\begin{figure}[h]
\centering
\begin{subfigure}[b]{0.315\textwidth}
        \centering
        \includegraphics[width=\linewidth]{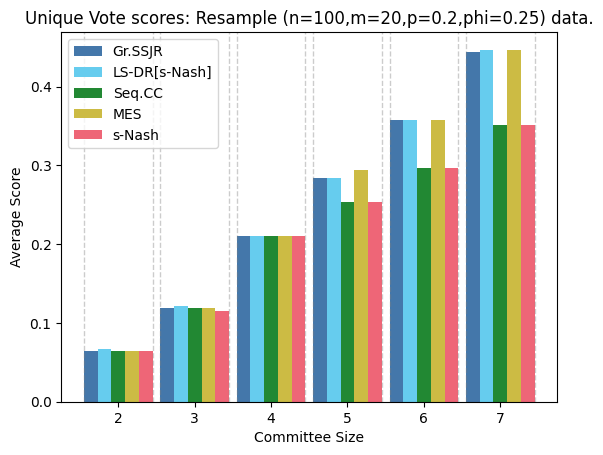}
    \end{subfigure}~\begin{subfigure}[b]{0.315\textwidth}
        \centering
        \includegraphics[width=\linewidth]{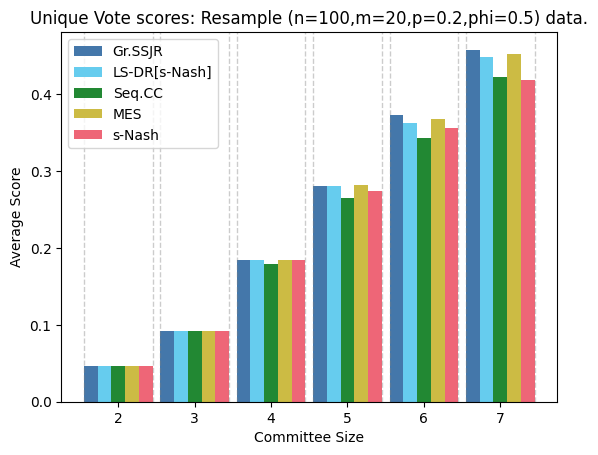}
    \end{subfigure}
    ~\begin{subfigure}[b]{0.315\textwidth}
        \centering
        \includegraphics[width=\linewidth]{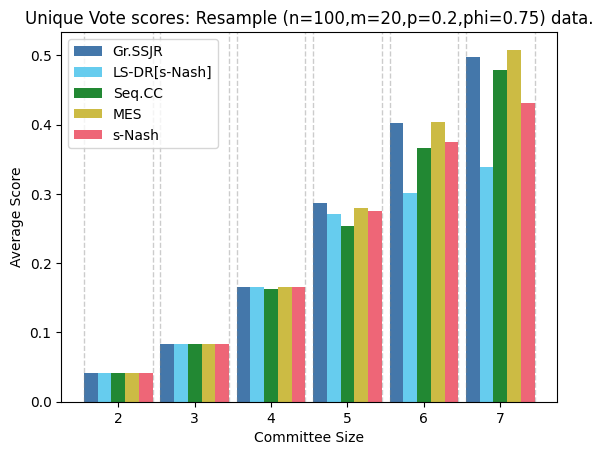}
    \end{subfigure}
\caption{Bar plots for the unique voter-scores of the rules on the synthetic dataset.}
\label{fig:uniq_voter_syn_scores}
\end{figure}

\begin{figure}[h]
\centering
\begin{subfigure}[b]{0.315\textwidth}
        \centering
        \includegraphics[width=\linewidth]{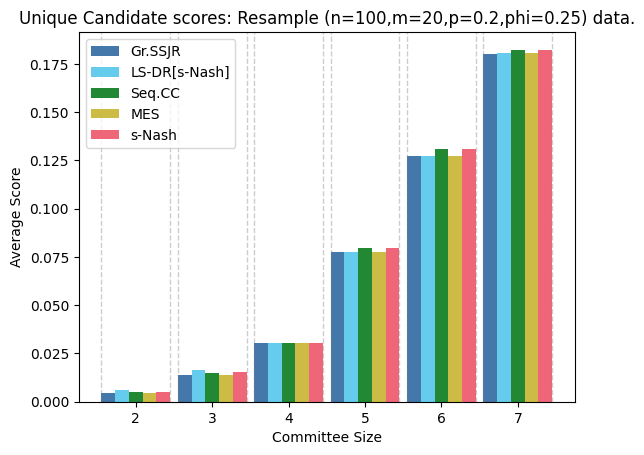}
    \end{subfigure}~\begin{subfigure}[b]{0.315\textwidth}
        \centering
        \includegraphics[width=\linewidth]{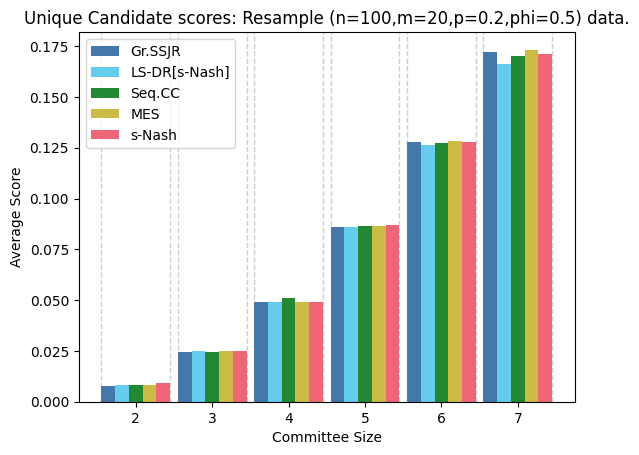}
    \end{subfigure}
    ~\begin{subfigure}[b]{0.315\textwidth}
        \centering
        \includegraphics[width=\linewidth]{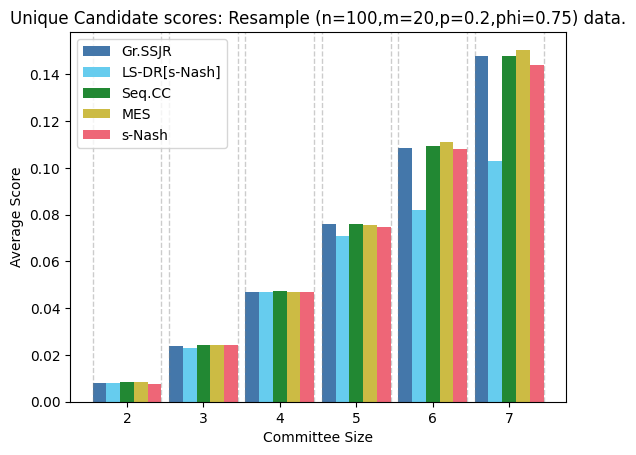}
    \end{subfigure}
\caption{Bar plots for the unique candidate-scores of the rules on the synthetic dataset.}
\label{fig:uniq_cand_syn_scores}
\end{figure}

\begin{figure}[h]
\centering
\begin{subfigure}[b]{0.315\textwidth}
        \centering
        \includegraphics[width=\linewidth]{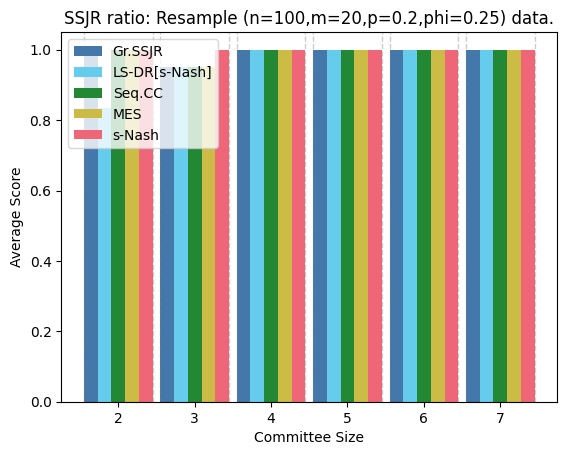}
    \end{subfigure}~\begin{subfigure}[b]{0.315\textwidth}
        \centering
        \includegraphics[width=\linewidth]{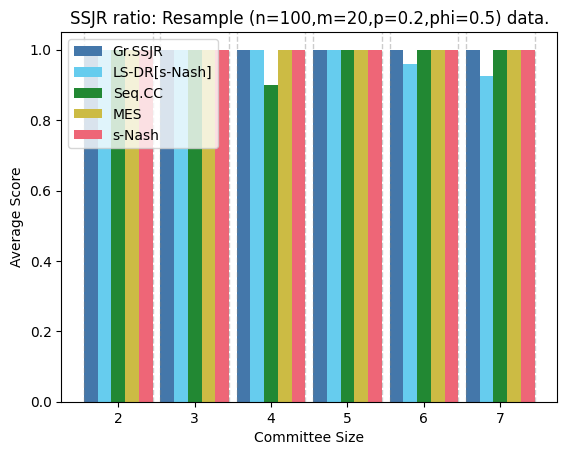}
    \end{subfigure}
    ~\begin{subfigure}[b]{0.315\textwidth}
        \centering
        \includegraphics[width=\linewidth]{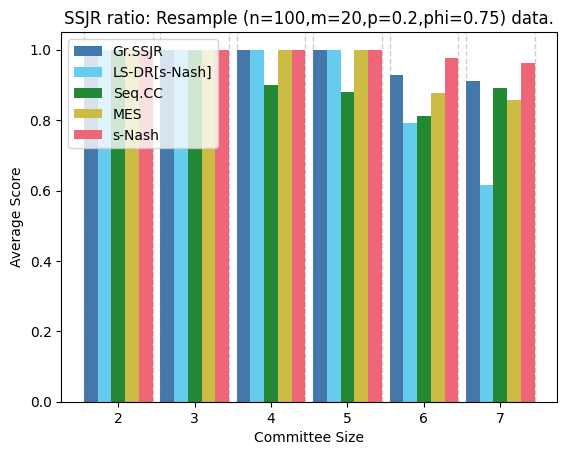}
    \end{subfigure}
\caption{Bar plots for the SSJR ratios of the rules on the synthetic dataset.}
\label{fig:ssjr_ratio_syn}
\end{figure}

\begin{figure}[h]
\centering
\begin{subfigure}[b]{0.315\textwidth}
        \centering
        \includegraphics[width=\linewidth]{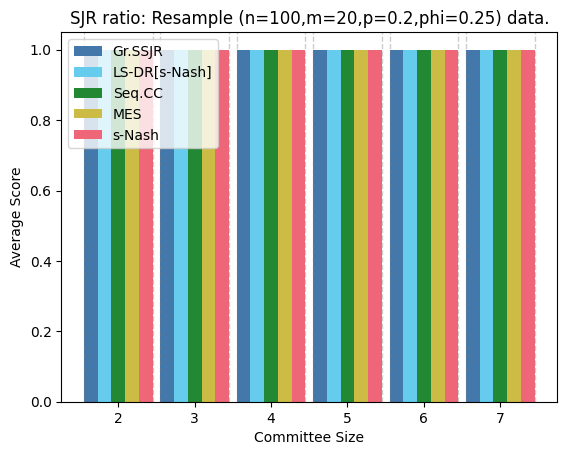}
    \end{subfigure}~\begin{subfigure}[b]{0.315\textwidth}
        \centering
        \includegraphics[width=\linewidth]{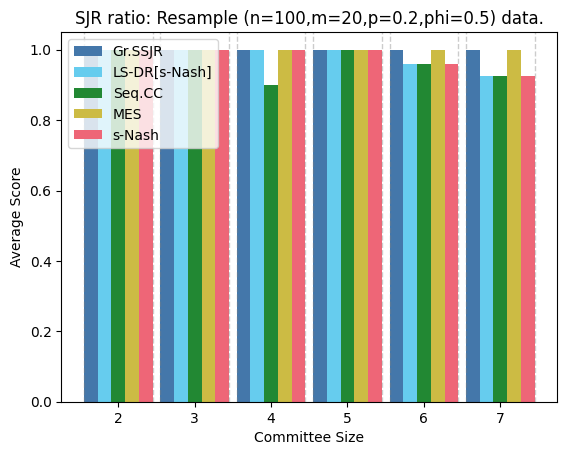}
    \end{subfigure}
    ~\begin{subfigure}[b]{0.315\textwidth}
        \centering
        \includegraphics[width=\linewidth]{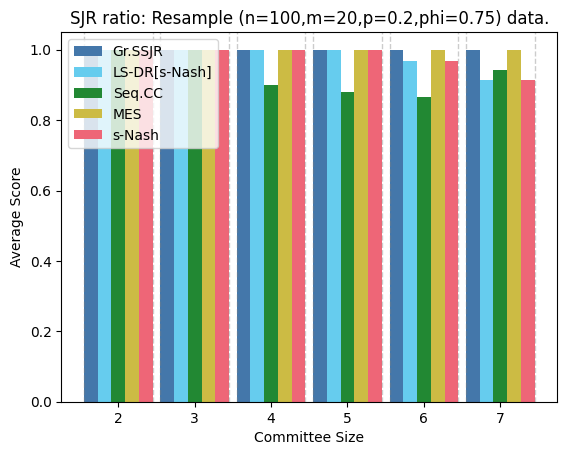}
    \end{subfigure}
\caption{Bar plots for the SJR ratios of the rules on the synthetic dataset.}
\label{fig:sjr_ratio_syn}
\end{figure}

\begin{figure}[h]
\centering
\begin{subfigure}[b]{0.315\textwidth}
        \centering
        \includegraphics[width=\linewidth]{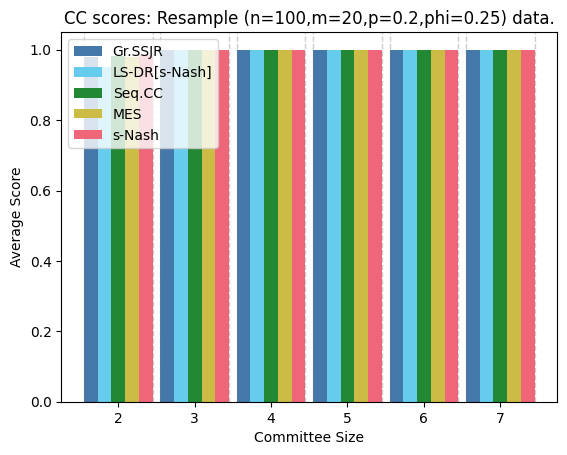}
    \end{subfigure}~\begin{subfigure}[b]{0.315\textwidth}
        \centering
        \includegraphics[width=\linewidth]{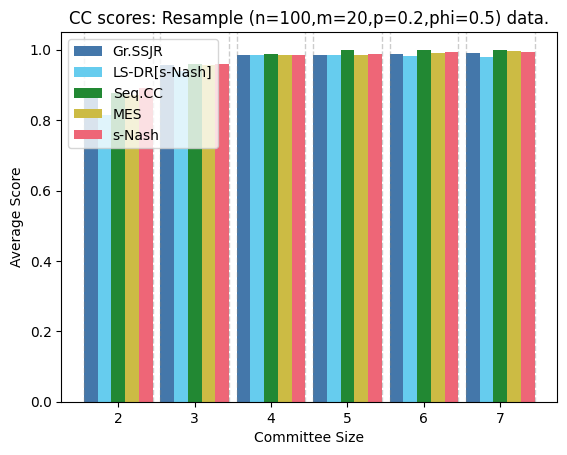}
    \end{subfigure}
    ~\begin{subfigure}[b]{0.315\textwidth}
        \centering
        \includegraphics[width=\linewidth]{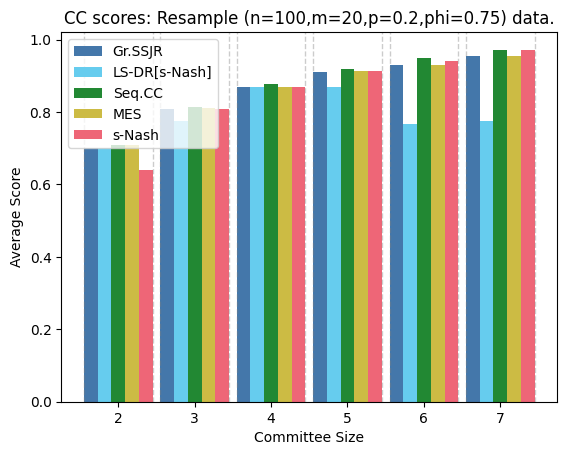}
    \end{subfigure}
\caption{Bar plots for the CC scores of the rules on the synthetic dataset.}
\label{fig:sjr_ratio_syn}
\end{figure}


\end{document}